\newtheorem{theorem}{Theorem}
\newtheorem{lemma}[theorem]{Lemma}
\newtheorem{cor}[theorem]{Corollary}
\newtheorem{prob}[theorem]{Problem}
\newtheorem{condition}[theorem]{Condition}
\newtheorem{question}[theorem]{Question}
\newtheorem{fact}[theorem]{Fact}
\providecommand{\keywords}[1]
{
  \small	
  \textbf{\textit{Keywords---}} #1
}
\begin{document}

\title{On Monotone Sequences of Directed Flips, Triangulations of Polyhedra, and Structural Properties of a Directed Flip Graph}

\author{Hang Si\\
        \small Weierstrass Institute (WIAS), Berlin
}

\maketitle

\begin{abstract}
This paper studied the geometric and combinatorial aspects of the classical Lawson's flip algorithm~\cite{Lawson1972,Lawson1977}. 
Let ${\bf A}$ be a finite point set in $\mathbb{R}^2$ and $\omega : {\bf A} \to \mathbb{R}$ be a height function which lifts the vertices of ${\bf A}$ into $\mathbb{R}^3$.  
Every flip in triangulations of ${\bf A}$ can be associated with a direction~\cite[Definition 6.1.1]{TriangBook}. 
We first established a relatively obvious relation between monotone sequences of directed flips on triangulations of ${\bf A}$ and triangulations of the lifted point set ${\bf A}^{\omega}$ in $\mathbb{R}^3$. 
We then studied the structural properties of a directed flip graph (a poset) on the set of all triangulations of ${\bf A}$.
We proved several properties of this poset which clearly explain when Lawson's algorithm works and why it may fail in general. 
We further characterised the triangulations which cause failure of Lawson's algorithm, and showed that they must contain redundant interior vertices which are not removable by directed flips. 
A special case of this result in 3d has been shown in~\cite{Joe1989}.
As an application, we described a simple algorithm to triangulate a special class of 3d non-convex polyhedra without using additional vertices.  We prove sufficient conditions for the termination of this algorithm, and show it runs in $O(n^3)$ time, where $n$ is the number of input vertices.
\end{abstract}

\keywords{weighted Delaunay triangulations, non-regular triangulations, Lawson's flip algorithm, directed flips, monotone sequence, flip graph,  Higher Stasheff-Tamari poset, redundant interior vertices, Sch\"onhardt polyhedron, Steiner points
}




\section{Introduction}

The motivation of this paper is to study and understand the geometric meaning of the Lawson's flipping algorithm~\cite{Lawson1972,Lawson1977}. At first, it is well-understood that this algorithm in $\mathbb{R}^d$ acts like to add $d+1$ simplices to the lower envelope of a $d+1$ polyhedron. This makes the lower envelope of this polyhedron towards convex.  However, it only guarantees the termination in the plane. 

This paper takes another view of Lawson's flip algorithm by focusing on the sequence of flips produced by this algorithm. It turns out that this sequence allows a nice geometric interpretation. It corresponds to a triangulation of the volume of a $d+1$ polyhedron. This relation has already been discovered for a long time.  Sleater et al~\cite[Lemma 5]{Sleator1988} proved that a sequence of edge-flips converting one triangulation into another of a convex $n$-gon corresponds to a tetrahedralisation of a 3d polyhedron with these two triangulations as its boundary, see Figure~\ref{fig:sequence-of-flips}. They used this fact to prove that the maximum flip distance between two triangulations of a convex $n$-gon is $2n-10$, and it is tight for sufficiently large $n$~\cite[Theorem 2]{Sleator1988}~\footnote{Due to the isomorphism between the flip graph of $n+2$ nodes and the binary tree rotation graph of $n$ nodes~\cite[Lemma 1]{Sleator1988}. It is equivalent to the maximum rotation distance between two $n$-node binary trees, which is $2n - 6$ for sufficiently large $n$.}. 

\begin{figure}[ht]
\centering
\includegraphics[width=.8\textwidth]{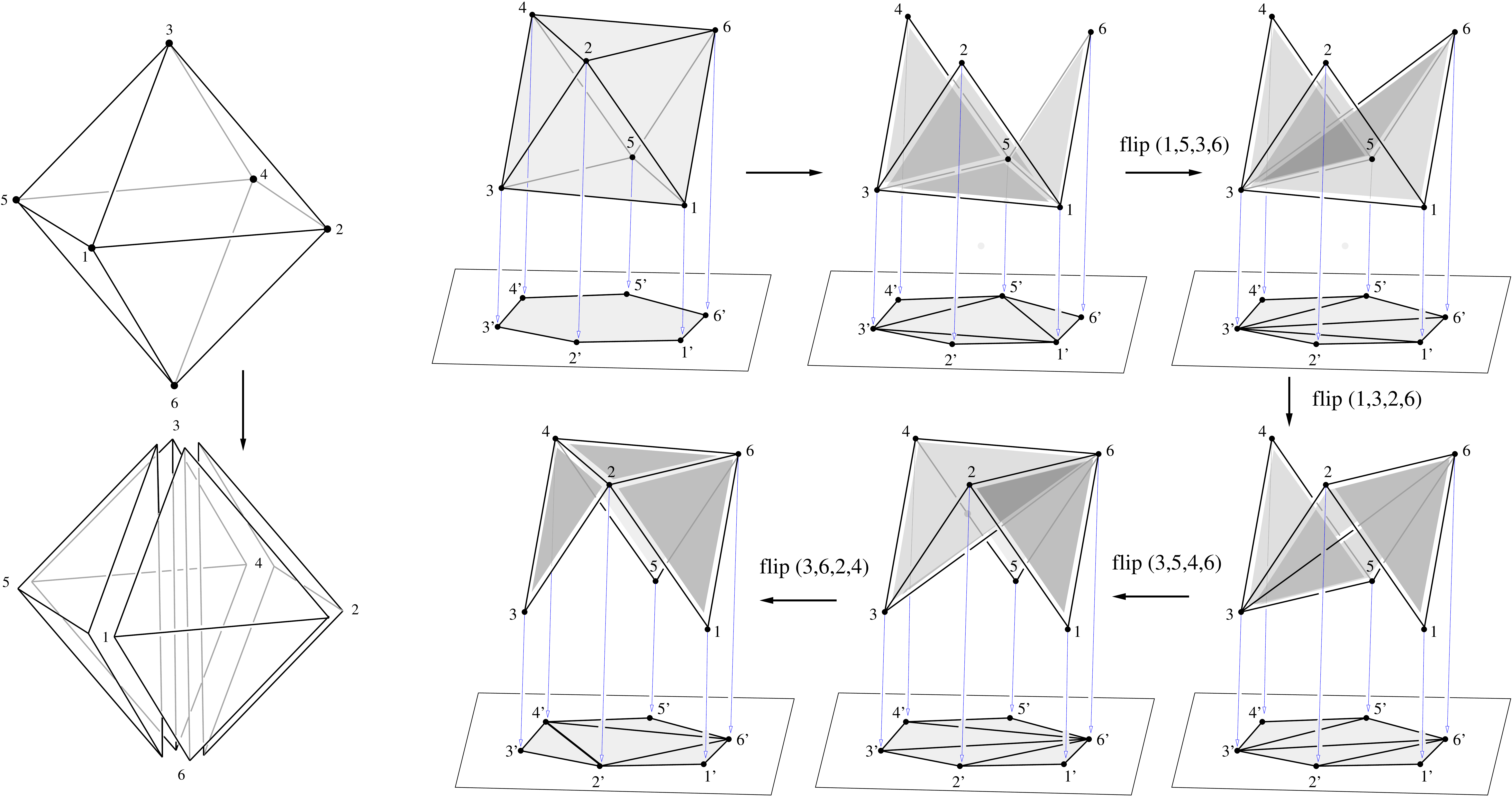}
\caption{Left: the regular octahedron is decomposed into $4$ tetrahedra.  Right: a sequence of edge flips corresponds to this tetrahedralisation.}
\label{fig:sequence-of-flips}
\end{figure}

All flips in the sequences produced by Lawson's algorithm have the same properties, such as \emph{(i)} every flip corresponds to a unique tetrahedron inside a 3d polyhedron, and \emph{(ii)} every flip in this sequence follows the same ``direction". This means that this sequence is monotone with respect to this direction.  

\subsection{Our contributions and outline}

The excellent book by De Lorea, Rambau, and Santos~\cite{TriangBook} gives a comprehensive exposition on the topics of this paper and laid down the fundamental mathematical concepts and definitions which are followed by this paper. 

Section~\ref{sec:def} gives the basic definitions together with an overview of the most related topics and previous works.  In particular, the related theorems are introduced. 
This section is self-contained. It may be skipped for the first reading. \\

The precise definitions of directed flips, monotone sequence of direct flips, and the directed flip graph are given in section~\ref{sec:graph-poset}. We briefly introduce them here for presenting our results. 
Let ${\bf A}$ be a finite point set in $\mathbb{R}^d$ and let $\omega: {\bf A} \to  \mathbb{R}$ be a height function which lifts every vertex in ${\bf A}$ into a lifted point in $\mathbb{R}^{d+1}$. Let ${\bf A}^{\omega}$ in $\mathbb{R}^{d+1}$ be the set of lifted points of ${\bf A}$. Then every flip in a triangulation of ${\bf A}$ corresponds to a $d+1$-simplex whose vertices are in ${\bf A}^{\omega}$.  
Let ${\cal T}_1$ and ${\cal T}_2$ be two triangulations of ${\bf A}$, and ${\cal T}_2$ is obtained from ${\cal T}_1$ by a flip. We call this flip {\it down-flip} if the upper faces of the $d+1$ simplex are in ${\cal T}_1$ and lower faces are in ${\cal T}_2$, otherwise, it is an {\it up-flip}.  Figure~\ref{fig:directed_flips} illustrates these two directed flips. 

\begin{figure}[ht]
  \centering
  \begin{tabular}{c c}
  \includegraphics[width=.45\textwidth]{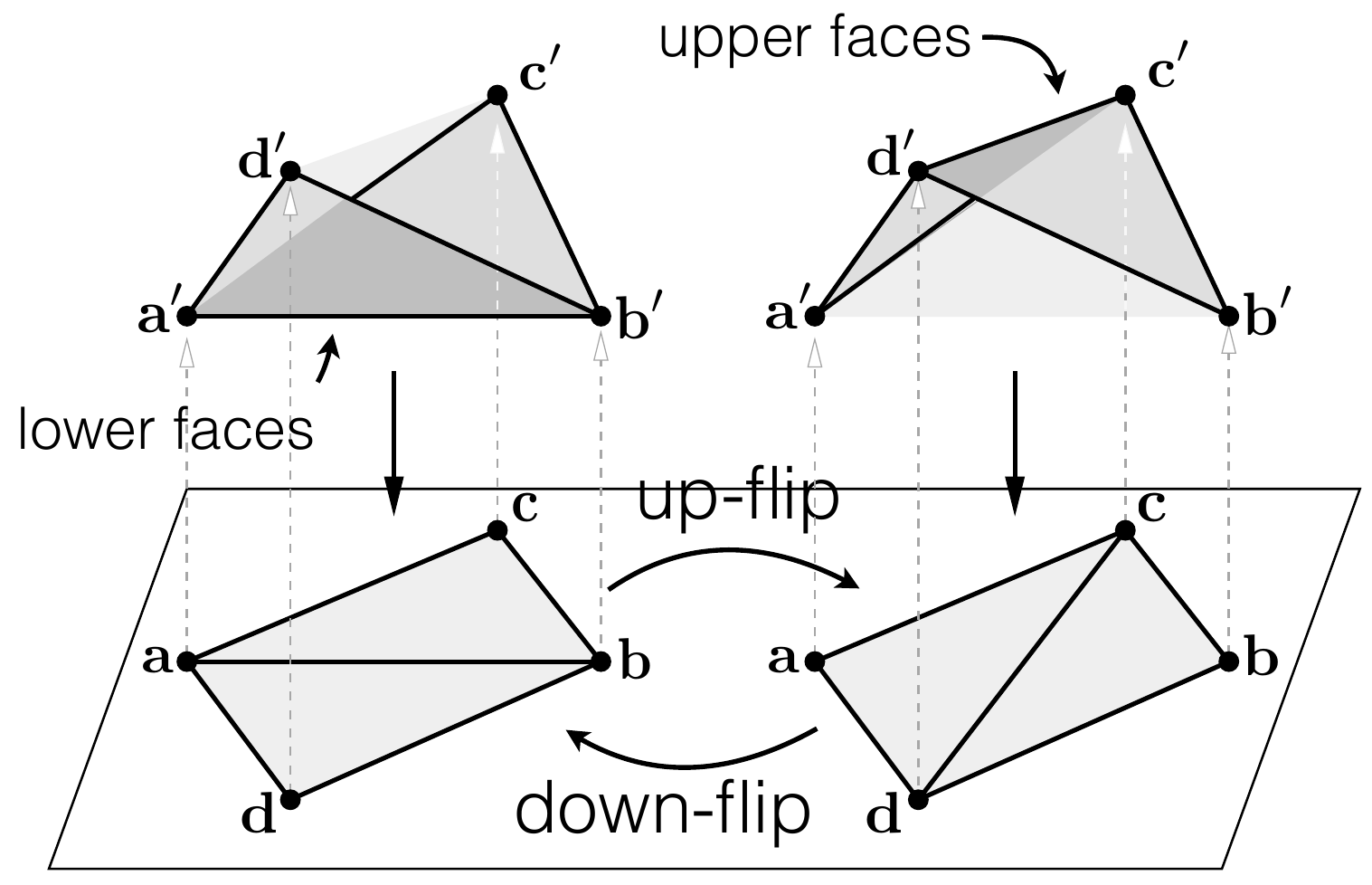} &
  \includegraphics[width=.45\textwidth]{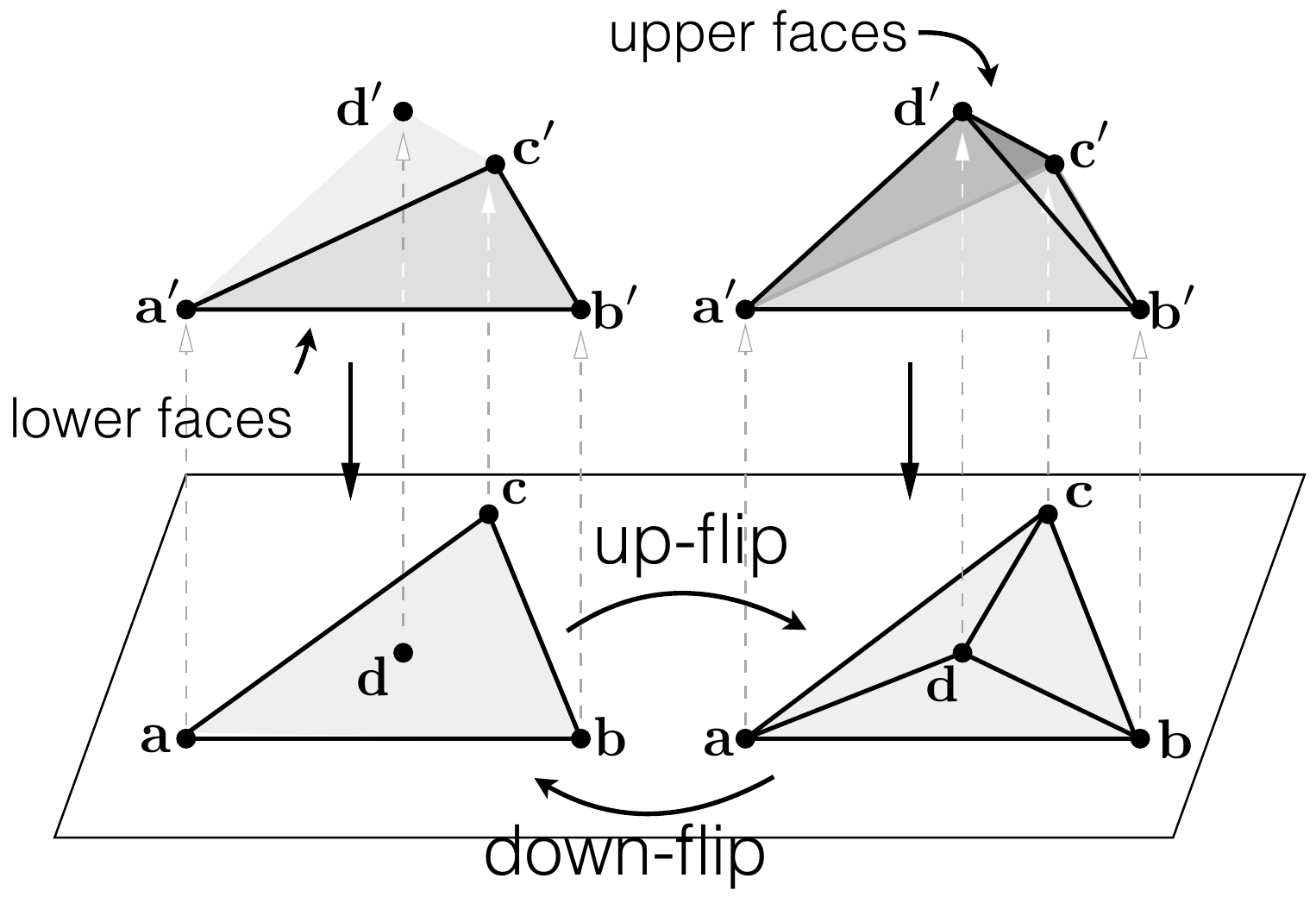} \\
   2-2 flips (edge flips) &  1-3 and 3-1 flips \\
   & vertex insertions and deletions
  \end{tabular}
  \caption{The two directed flips in the plane. In particular, an up-flip replaces the lower by the upper faces of the support of this flip, and a down-flip does the opposite way as an up-flip does.}
  \label{fig:directed_flips}
\end{figure}

By this definition, any flip has a direction. 
A sequence of directed flips is {\it monotone} if all flips in this sequence have the same direction.\\

Section~\ref{sec:monotone-to-triang} and Section~\ref{sec:triang-to-monotone} are devoted to the proof of a  relation between monotone sequences of directed flips in the triangulations of a point set in the plane and triangulations of a polyhedron in $\mathbb{R}^3$.  

\begin{itemize}
\item[(1)] Any monotone sequence of directed flips between two triangulations ${\cal T}_1$ and ${\cal T}_2$ of ${\bf A}$ (or vice versa) corresponds to a 3d triangulation of a polyhedron whose boundary consists of these two triangulations (Theorem~\ref{thm:monotone-to-triang} and Corollary~\ref{cor:triang}).

\item[(2)] A 3d triangulation ${\cal T}_{uv}$ of a polyhedron with vertices in ${\bf A}^{\omega}$ corresponds to a monotone sequence of directed flips between ${\cal T}_u$ and ${\cal T}_v$ (or vice versa) if it is acyclic with respect to a viewpoint in the direction of the directed flips (Theorem~\ref{thm:triang-to-monotone} and Corollary~\ref{thm:triang-to-monotone-viewpoint}).
\end{itemize}

This relation is relatively obvious and has been partially proven and remarked in many previous works which will be reviewed in Section~\ref{sec:def}. 
We comment that (1) is a straightforward generalisation of many special cases.
It is proven for the set of points of convex $n$-gons in~\cite[Lemma 5]{Sleator1988}~\footnote{In their proof, only the case of edge-flips is considered, but it could be generalised to show the cases which involve vertex insertion and deletion as well.}.
It is also proven for the set of triangulations of cyclic polytopes~\cite[Theorem 6.1.19]{TriangBook}. 
It is implicitly proven in~\cite[Theorem 5.3.7]{TriangBook} for the set of regular triangulations (see Theorem~\ref{thm:secondarypolytope}). 
However, we're not aware of a complete proof of this fact for an arbitrary point set. 
We prove this relation for an arbitrary finite point set in $\mathbb{R}^2$ which involves all types of elementary flips of $\mathbb{R}^2$, i.e., edge flips (2-2 flips), vertex insertions (1-3 flips), and vertex deletions (3-1 flips).
In particular, we show that non-regular triangulations may appear within the sequence.  
The proof of (1) is given in Section~\ref{sec:monotone-to-triang}. 
 
Fact (2) has been partially proven in the theory of Secondary polytopes, see e.g.~\cite[Theorem 5.3.7]{TriangBook} as well as the set of  triangulations of cyclic polytopes~\cite[Theorem 6.1.19]{TriangBook}. To the best of our knowledge, this fact has not been established for an arbitrary finite point set.  The proof of (2) is given in Section~\ref{sec:triang-to-monotone}. \\

\begin{figure}[ht]
  \centering
  \begin{minipage}{.25\textwidth}
  \includegraphics[width=0.9\textwidth]{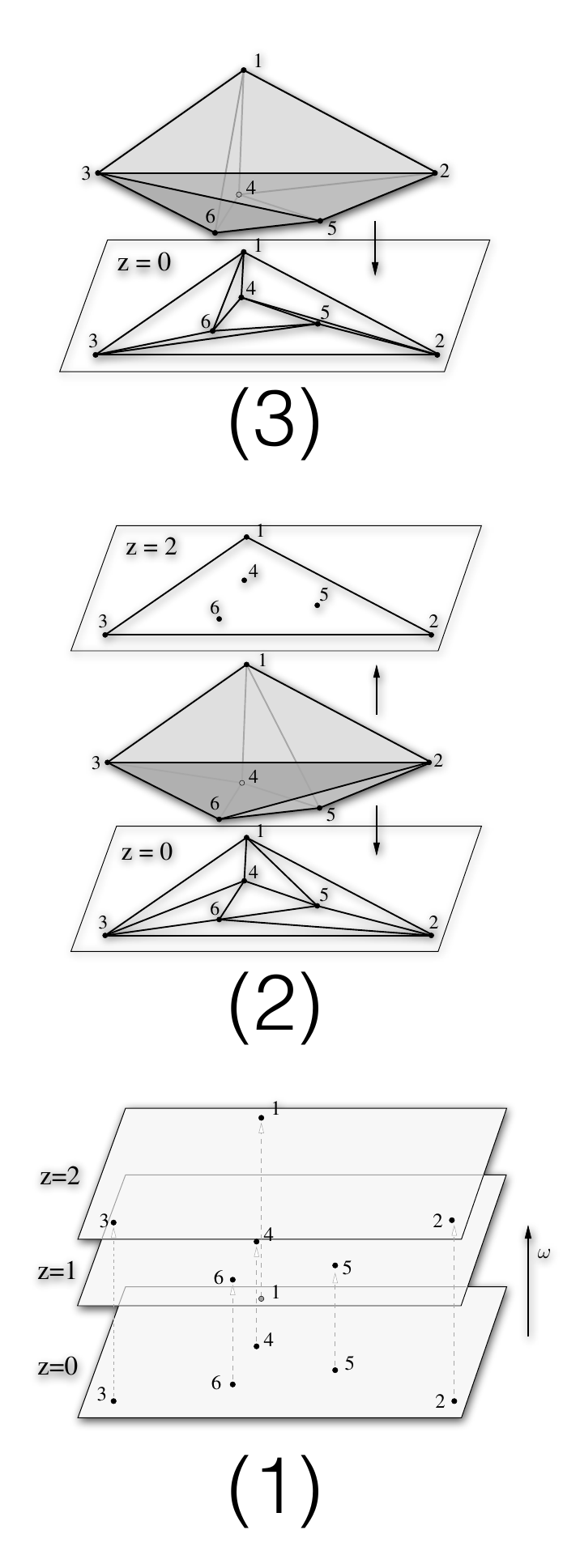}
  \end{minipage}
  \begin{minipage}{.7\textwidth}
  \includegraphics[width=1.0\textwidth]{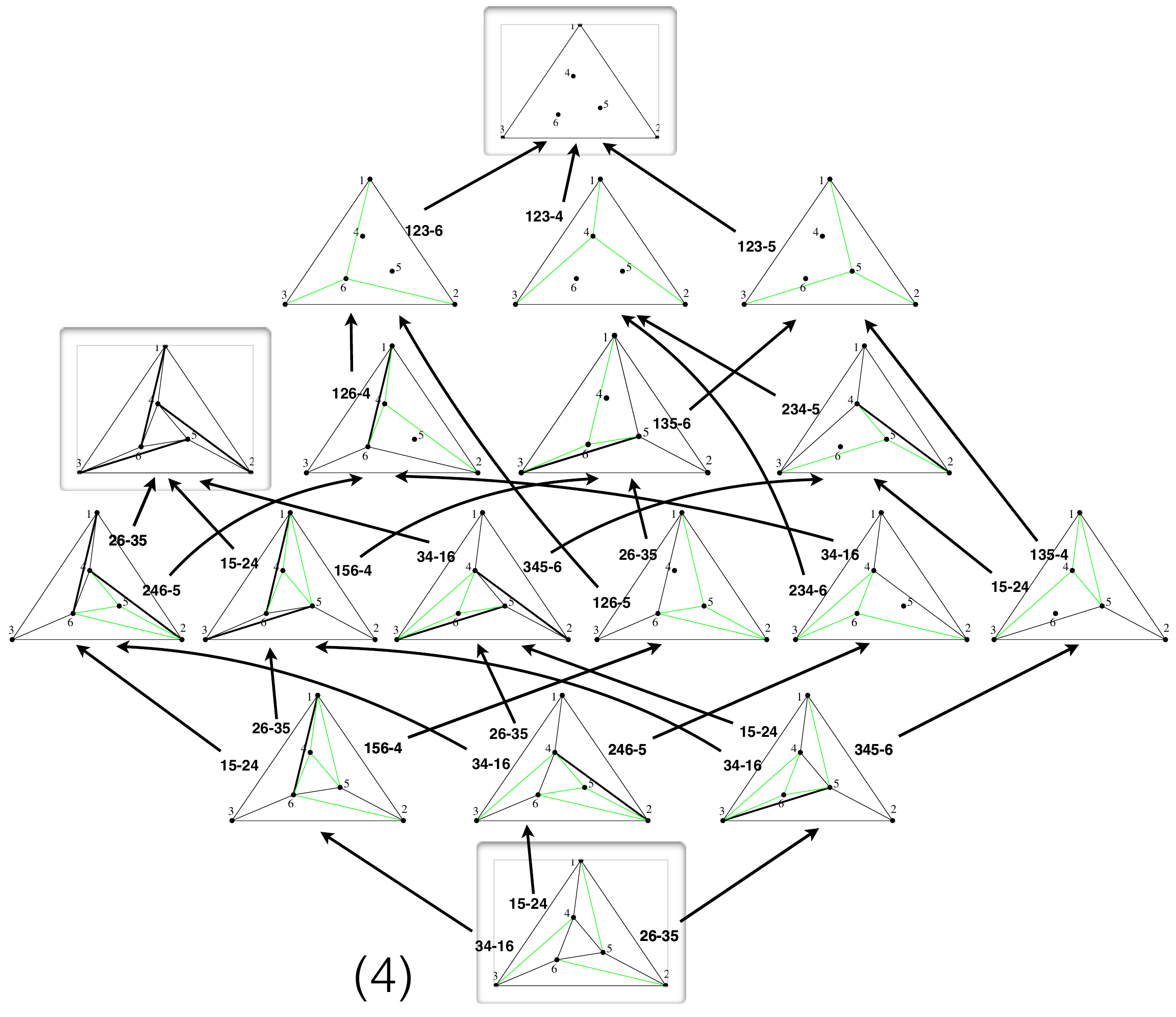}
  \end{minipage}
\caption{
The poset of monotone sequence of directed flips of a point set of $6$ points in the plane and a height function $\omega$.   
\emph{(1)} The lifted point set ${\bf A}^{\omega}$ is produced by the height function $\omega$ which sends the vertices $1,2,3$ to the plane $z = 2$, and the vertices $4,5,6$ to the plane $z = 1$. 
\emph{(2)} The regular and farthest point regular triangulations of $({\bf A}, \omega)$. 
\emph{(3)} A non-regular triangulation of ${\bf A}$. 
\emph{(4)} The green edges are locally non-regular and flippable edges.
}
\label{fig:prism_poset_1}
\end{figure}

From the directed flips, one can construct a directed flip graph (a poset) on the set of all triangulations of ${\bf A}$, exactly the same as the construction of the first Higher Stasheff-Tamari (${\bf HST}$)  poset from triangulations of point sets of cyclic polytopes~\cite{Edelman1996,Rambau96-thesis}. The structure properties of the ${\bf HST}$ has been well understood~\cite[Theorem 6.1.19]{TriangBook}. 
However, the structure of this poset for an arbitrary finite point set is more complicated and interesting.  Figure~\ref{fig:prism_poset_1} shows a motivation example of this poset which is discussed in detail in Section~\ref{sec:graph-poset-example}. 

The main result of this paper is to illustrate several general and special structural properties of this poset. Section~\ref{sec:poset-structure} is devoted to this purpose. 

In Section~\ref{sec:poset_general_properties}, a general structural theorem (Theorem~\ref{thm:poset-general}) of this poset which includes various obvious properties is proven. These properties are easily derived from the relations proven in (1) and (2). 
In particular, 
we showed that this poset is in general not bounded, i.e., it has neither unique minimum nor unique maximum.  On the other hand, the convex or concave property of the height function $\omega$ gives special property of this poset. 
In Section~\ref{sec:poset_convex_concave_omega}, we proved that such a poset will contains either unique minimum or unique maximum, but not both in general (Theorem~\ref{thm:poset-convex-heights}).  
The existence of a unique minimum or maximum guarantees that the Lawson's flip algorithm will terminate when its flips are coincident with this direction. 

\begin{figure}[ht]
  \centering
  \includegraphics[width=0.7\textwidth]{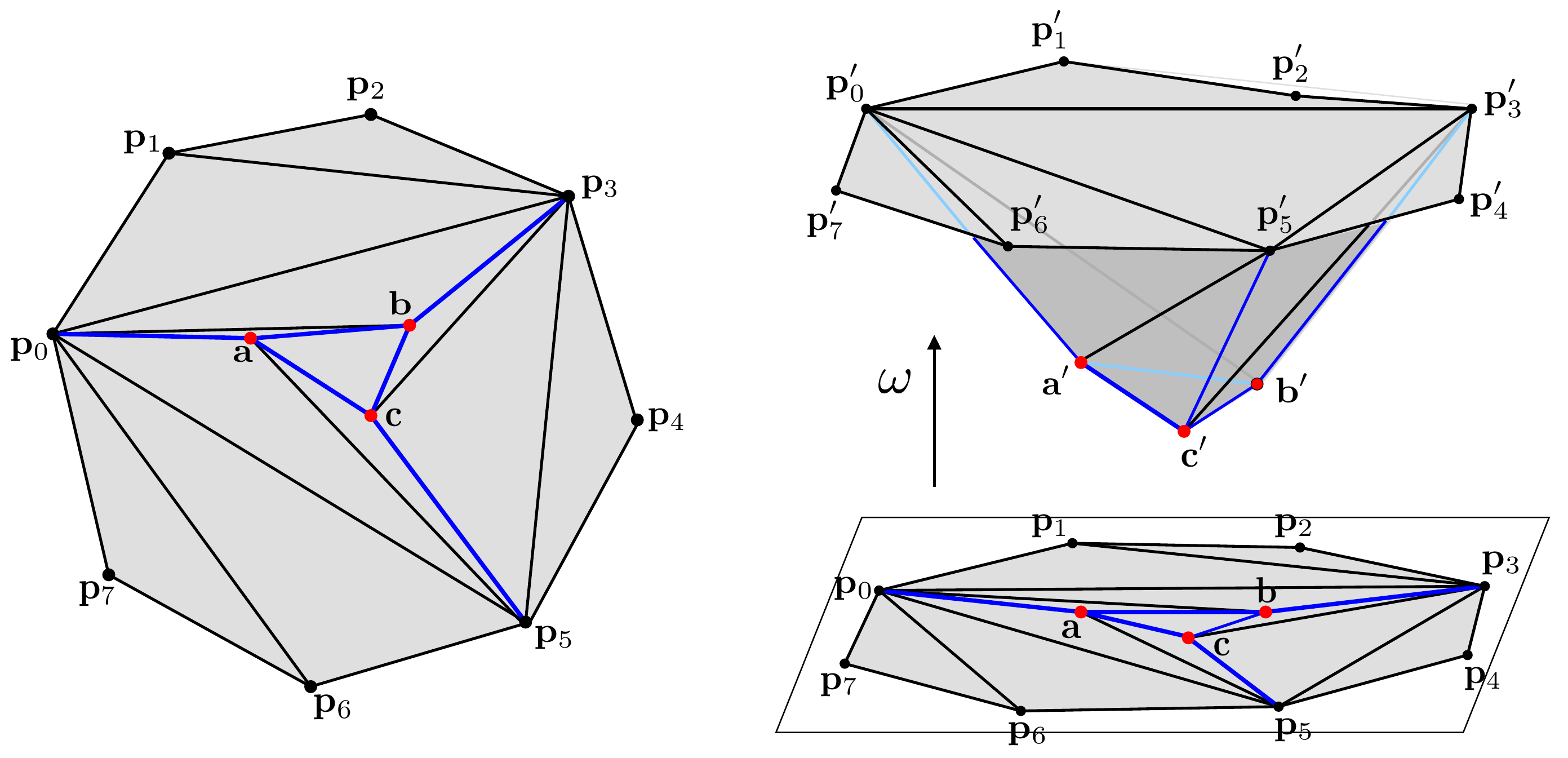}
\caption{Redundant interior vertices (definition is given in Section~\ref{sec:redundant_vertices}) are shown in red. Left: A triangulation corresponds to an upper non-extreme node in this poset. The red vertices are lower interior vertices.  The blue edges are locally non-regular with respect to up-flips. Moreover,  all blue edges are  unflippable. Right: The lifted triangulation are shown. The three edges: $e_{\bf ab}, e_{\bf bc}, e_{\bf ca}$ form a cycle of unflippable locally non-regular edges with respect to up-flips.}
\label{fig:redundant-vertices} 
\end{figure}

We further characterised those triangulations which cause the failure of Lawson's algorithm. 
Figure~\ref{fig:redundant-vertices} shows such as example.
Theorem~\ref{thm:cycles_edges} showed that such triangulations must contains 
 {\it redundant interior vertices} (defined in Section~\ref{sec:redundant_vertices}) which cannot be removed by the corresponding directed flips. 
We proved this by showing that these vertices are connected by cycles of connected unflippable locally non-regular edges with respect to the corresponding directed flips.  (The definition of unflippability is given in Section~\ref{subsec:flips}.) 
Interestingly, 
Joe~\cite{Joe1989} proved a special case in 3d triangulations, which showed that the failure of Lawson's algorithm in 3d is due to the existence of a cycle of connected unflippable locally non-Delaunay faces~\cite[Lemma 4, 5]{Joe1989}. This results implies that there must exist cycles of unflippable locally non-Delaunay edges in such 3d triangulations. Moreover, these edges are redundant interior edges.\\

In Section~\ref{sec:triang3d}, we showed an application of the obtained structural property theorem. We showed that a slightly modified Lawson's algorithm can be used to triangulate a special class of 3d non-convex polyhedra without using additional vertices.   We prove sufficient conditions for the termination of this algorithm, and show it runs in $O(n^3)$ time, where $n$ is the number of vertices of the polyhedron. \\

Finally, We present some open questions regarding the full structural properties of this poset and the set of all triangulations of ${\bf A}^{\omega}$ in Section~\ref{sec:discussion}.

\section{Definitions and Related Works}
\label{sec:def}

\subsection{Triangulations of a point set}

Let ${\bf A}$ be a finite point set in $\mathbb{R}^d$. A {\it triangulation} of ${\bf A}$~\cite[Def 2.2.1]{TriangBook} is a collection ${\cal T}$ of $i$-dimensional simplices, where $i = -1, 0, \ldots, d$, whose vertices are in ${\bf A}$, and it satisfies three conditions, which are:
\begin{itemize}
\item[(1)] All faces of simplices of ${\cal T}$ are in ${\cal T}$. (Closure Property)
\item[(2)] The intersection of any two simplices of ${\cal T}$ is a (possibly empty) face of both. (Intersection Property)
\item[(3)] The union of these simplices equals to the convex hull of ${\bf A}$, denoted as $\textrm{conv}({\bf A})$. (Union Property) 
\end{itemize}
In other words, a triangulation of ${\bf A}$ is a geometrically realised simplical complex whose underlying space is $\textrm{conv}({\bf A})$.  A triangulation of ${\bf A}$ needs not contain all vertices of ${\bf A}$. 

\begin{figure}[ht]
  \centering
  \includegraphics[width=0.9\textwidth]{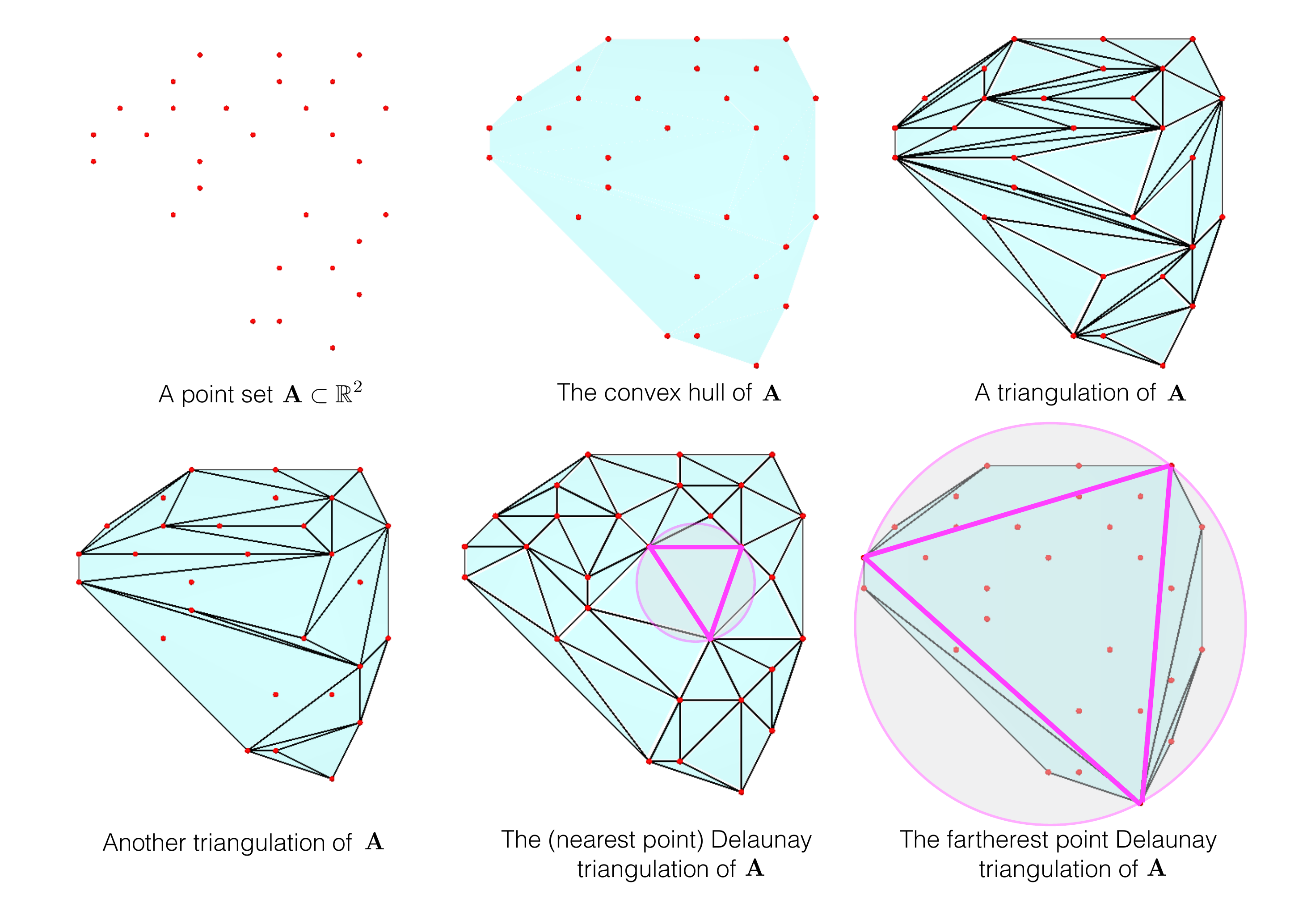} 
\caption{ Triangulations of a point set ${\bf A}$ in the plane. }
\label{fig:triang_points}
\end{figure}

Figure~\ref{fig:triang_points} shows several triangulations of a two dimensional point set.  
In particular, the famous {\it (nearest point) Delaunay triangulation}, is the triangulation of ${\bf A}$ such that no vertex of ${\bf A}$ is inside the circumcircle of any triangle of it~\cite{Delaunay1934}. Likewise, the {\it farthest point Delaunay triangulation}  is the triangulation of ${\bf A}$ such that no vertex of ${\bf A}$ is outside the circumcircle of every triangle of it, see e.g.~\cite{EPPSTEIN1992143}.

\subsection{Regular triangulations}

Given a point set, there are many triangulations of it. 
A (nice) family of triangulations of a finite point set in $\mathbb{R}^d$ are canonical projections of the envelopes of a convex $d+1$ dimensional polytope. 
They are called {\it regular triangulations}, also known as  {\it weighted Delaunay}~\cite{EdelsbrunnerShah96}, {\it Gale}, or {\it coherent} triangulations, see e.g.,~\cite{Ziegler1995-book, TriangBook}. 

\begin{figure}[ht]
  \centering
  \includegraphics[width=0.8\textwidth]{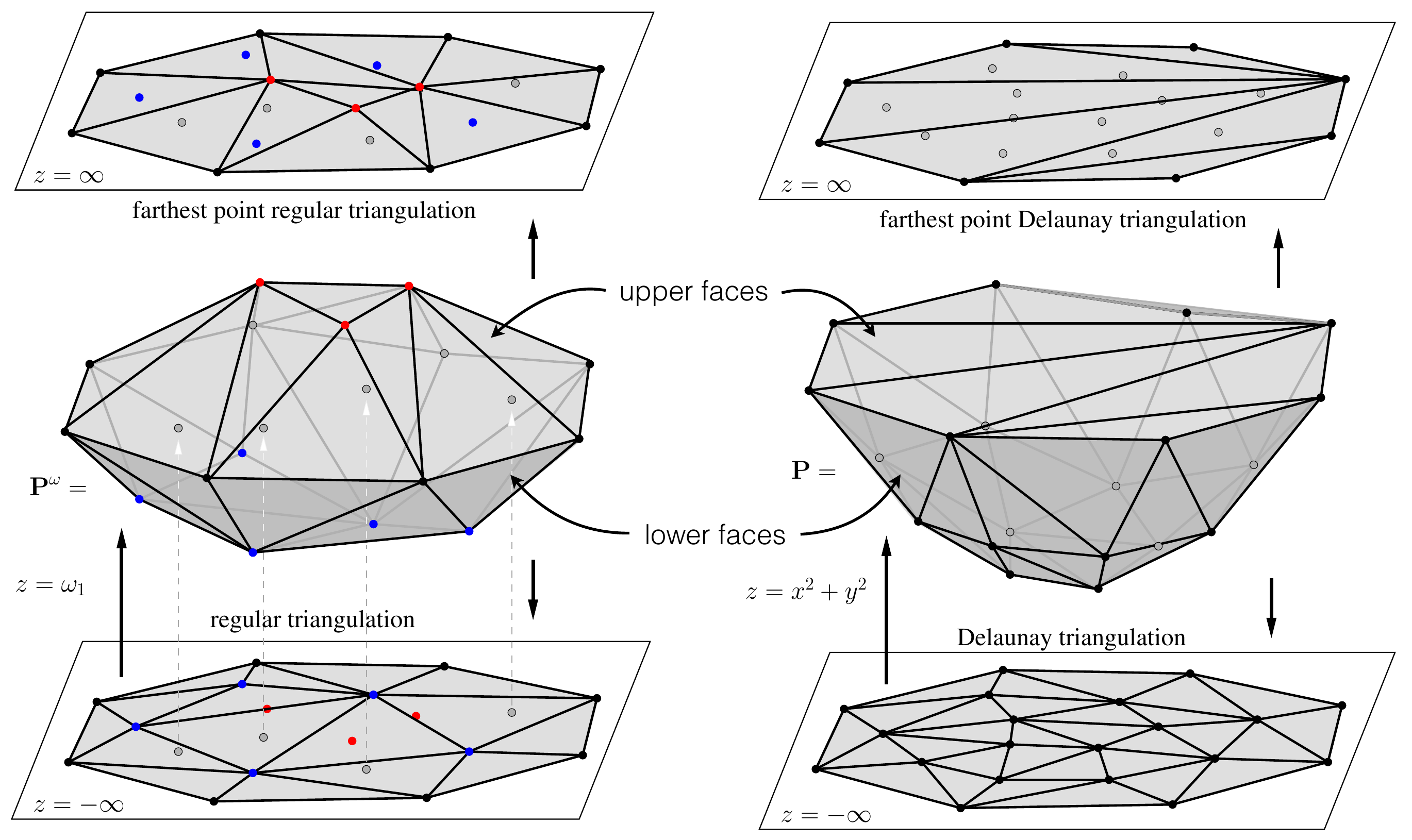}
\caption{
Left: the regular and farthest point regular triangulations of $({\bf A}, \omega)$ in $\mathbb{R}^2 \times \mathbb{R}$. 
Right: the Delaunay and farthest point Delaunay triangulations of ${\bf A}$ when $\omega = x^2 + y^2$. 
}
\label{fig:DT_and_FDT}
\end{figure}

Let ${\bf A}$ be a finite point set in $\mathbb{R}^d$. Pick a {\it height function} $\omega: {\bf A} \to \mathbb{R}$. For a point ${\bf p} = (p_1, \ldots, p_d) \in {\bf A}$, define its {\it lifted point} ${\bf p}' = (p_1, \ldots p_d, p_{d+1}) \in \mathbb{R}^{d+1}$, where $p_{d+1} = \omega({\bf p})$ is called the {\it height} of ${\bf p}$. The point set ${\bf A}^{\omega}  := \{{\bf p}' \,|\, {\bf p} \in {\bf A}\} \subseteq \mathbb{R}^3$ is called the {\it lifted point set} of ${\bf A}$. 

The convex hull of ${\bf A}^{\omega}$, $\textrm{conv}({\bf A}^{\omega})$, is a 3d polytope. 
A {\it lower face} of $\textrm{conv}({\bf A}^{\omega})$ is a facet of $\textrm{conv}({\bf A}^{\omega})$  that has a non-vertical supporting hyperplane with $\textrm{conv}({\bf A}^{\omega})$ lies entirely in the half-space which does not contain the point $({\bf 0}, -\infty)$.
Simply saying, it is a face of $\textrm{conv}({\bf A}^{\omega})$ that is ``visible from below" by placing the viewpoint at $({\bf 0}, -\infty)$.  
Likewise, an {\it upper face} of $\textrm{conv}({\bf A}^{\omega})$ is a facet of $\textrm{conv}({\bf A}^{\omega})$  that has a non-vertical supporting hyperplane with $\textrm{conv}({\bf A}^{\omega})$ lies entirely in the half-space which does not contain the point $({\bf 0}, +\infty)$. Simply saying, it is a face that is ``visible from above" by placing the viewpoint at $({\bf 0}, +\infty)$, see Figure~\ref{fig:DT_and_FDT} for examples.

The {\it canonical projection} is a map $\pi : \mathbb{R}^{d+1} \to \mathbb{R}^d, (p_1, \ldots, p_d, p_{d+1}) \to (p_1, \ldots, p_d)$ which ``deletes the last coordinate" of a given point. 
Let us assume that $\omega$ is {\it sufficiently generic}, by which we mean that no $d+1$ vertices of ${\bf A}$ are lifted to lie in a non-vertical hyperplane in $\mathbb{R}^{d+1}$({This implies that ${\bf A}$ is in general position). 
The canonical projection of the set of lower faces of $\textrm{conv}({\bf A}^{\omega})$ gives a unique   triangulation of ${\bf A}$, which is called the {\it (nearest point) regular triangulation} of $({\bf A}, \omega)$. 
Likewise, the canonical projection of the set of upper faces of $\textrm{conv}({\bf A}^{\omega})$ is also a unique triangulation of ${\bf A}$, which is the {\it farthest point regular triangulation} of $({\bf A}, \omega)$.
Figure~\ref{fig:DT_and_FDT} Left shows an example of these two triangulations of ${\bf A}$.
Note that some points of ${\bf A}$ may be contained in neither the regular nor farthest point regular triangulations of $({\bf A}, \omega)$. 
By varying the heigh function $\omega$, we will obtain different regular triangulations of ${\bf A}$.  
In particular, the Delaunay and the farthest point Delaunay triangulations of ${\bf A}$ are two special triangulations when $\omega = x^2+y^2$, see Figure~\ref{fig:DT_and_FDT} Right. 

Both the regular and farthest point regular triangulations of $({\bf A}, \omega)$ in $\mathbb{R}^d \times \mathbb{R}$ can be constructed efficiently using incremental vertex insertion  algorithms~\cite{Bowyer81,Watson81,Joe91-flip, EdelsbrunnerShah96}.  
In particular, Joe first proved an incremental flip algorithm to construct 3d Delaunay triangulations. The proof~\cite[Lemma 8, 9]{Joe91-flip} is based on the Acyclic Theorem of Edelsbrunner~\cite{Edelsbrunner90acy} (see Section~\ref{sec:acyc}). Edelsbrunner and Shah~\cite{EdelsbrunnerShah96} proved that this algorithm can construct regular triangulations in any dimensions. 

\begin{figure}
  \centering
  \begin{tabular}{c c c}
  \includegraphics[width=0.3\textwidth]{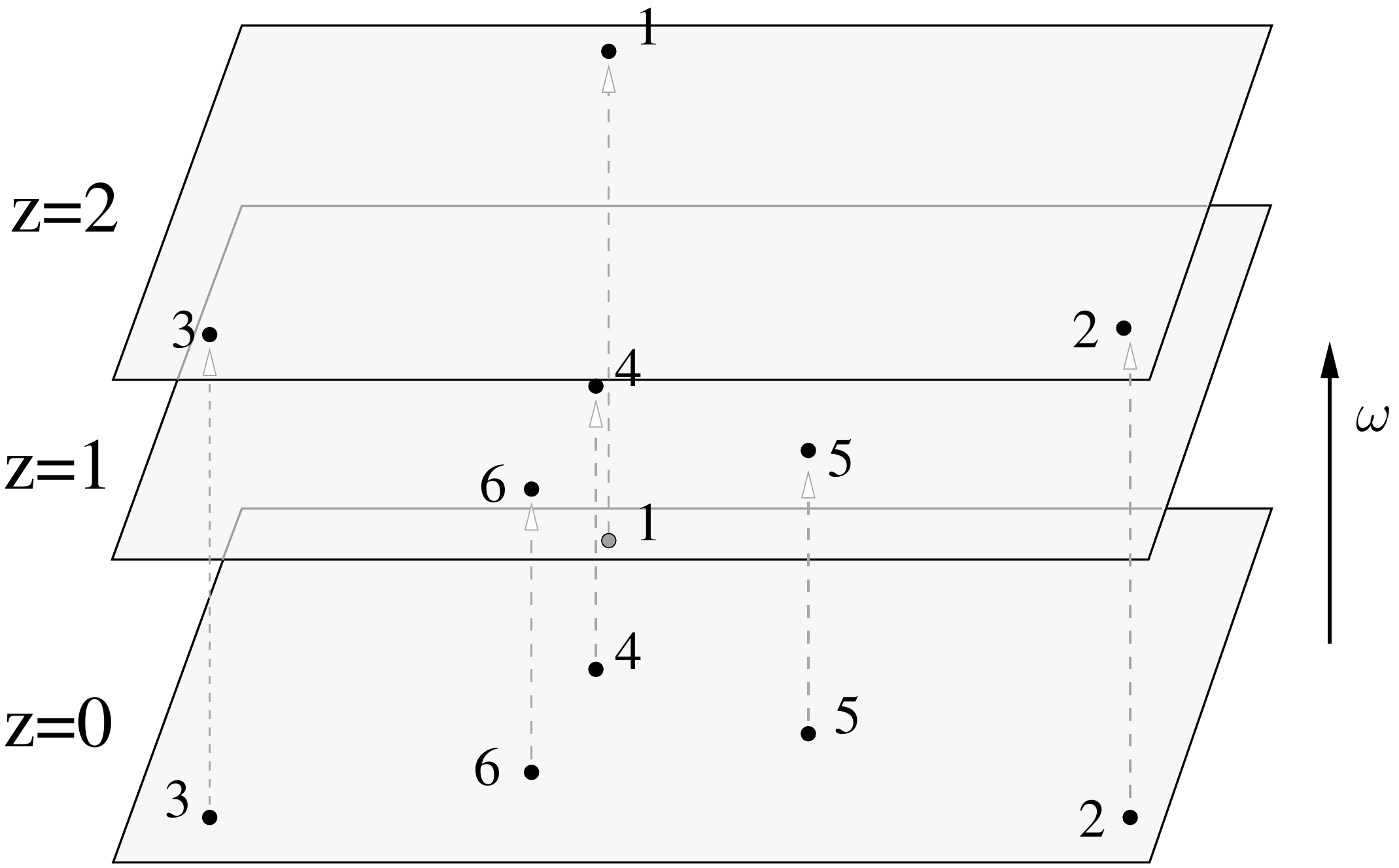} &
  \includegraphics[width=0.3\textwidth]{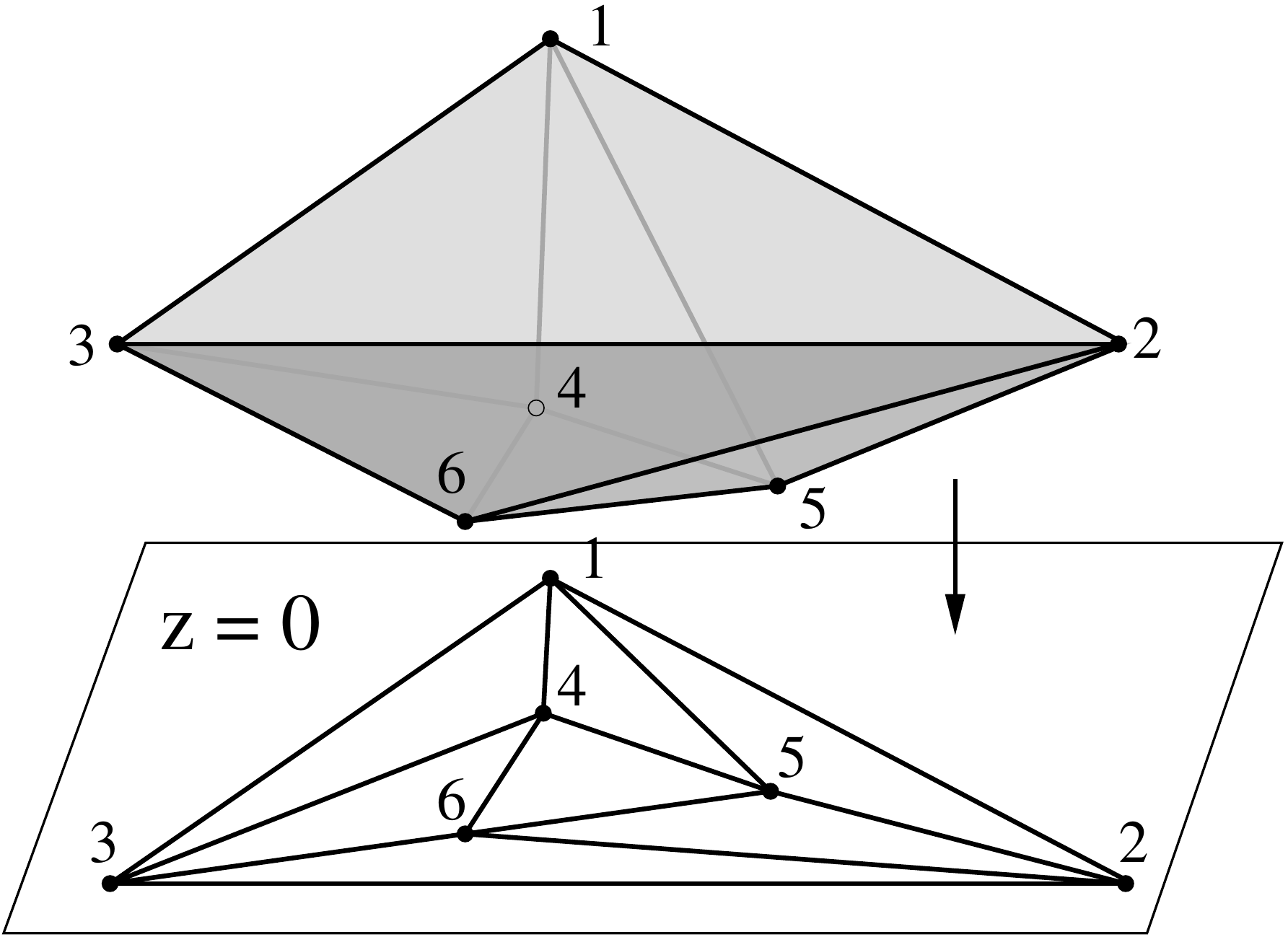} & 
  \includegraphics[width=0.3\textwidth]{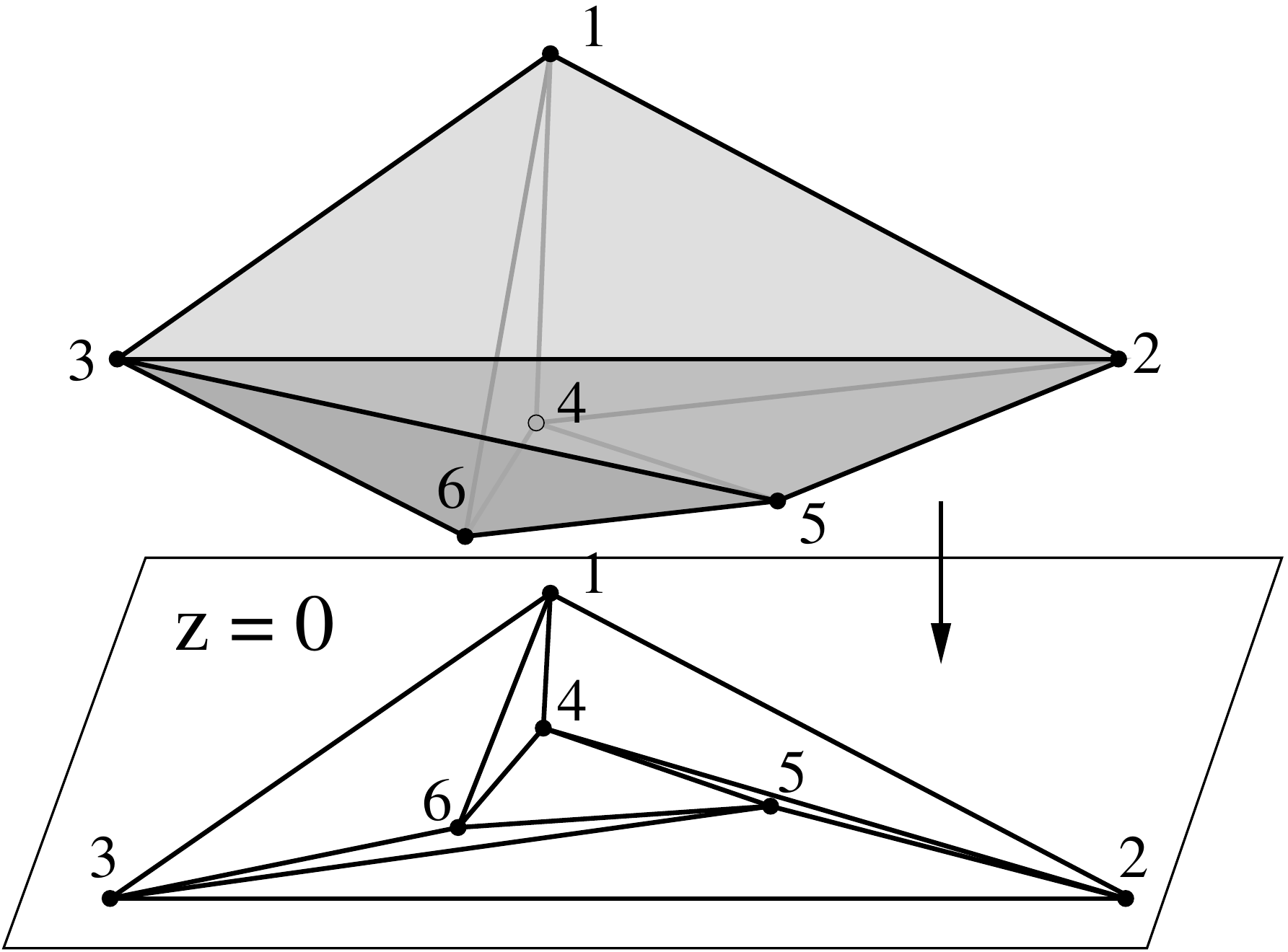} \\
  (1) & (2) & (3)
  \end{tabular}
\caption{Triangulations of a (weighted) point set ${\bf A}$ of $6$ points in the plane.  (1) The lifted point set ${\bf A}^{\omega}$ is produced by the height function $\omega$ which sends the vertices $1,2,3$ to the plane $z = 2$, and the vertices $4,5,6$ to the plane $z = 1$. 
(2) The regular triangulation of $({\bf A}, \omega)$. 
(3) A non-regular triangulation of ${\bf A}$. 
}
\label{fig:nonregular-prism-pts}
\end{figure}

\subsection{Non-regular triangulations}

A triangulation of ${\bf A}$ which is not regular is called a {\it non-regular triangulation} of ${\bf A}$. 
Equivalently, a {\it non-regular triangulation} of ${\bf A}$ is a triangulation of ${\bf A}$ such that there exists no height function $\omega$ which can ``lift" it into the lower or upper envelope of the convex hull of ${\bf A}^{\omega}$.
Although the properties of regular triangulations are well studied, e.g, by the theory of secondary polytopes~\cite{Gelfand1994Discriminants} (see Section~\ref{sec:undirected-flip-graph}), ``not so much is known about non-regular triangulations" (quoted from~\cite{JaumeRote2016}). 

Figure~\ref{fig:nonregular-prism-pts} illustrates the simplest example of a non-regular triangulation of  set of $6$ vertices in the plane. In this example, the height function is chosen such that the outmost three vertices are on a higher plane than the three innermost vertices, see (1). The regular triangulation is shown in (2). The triangulation in (3) is non-regular. 
In fact, no matter which height function $\omega$ is, the triangulation in (3) will correspond to neither the lower nor the upper faces of $\textrm{conv}({\bf A}^{\omega})$. 

\begin{figure}[ht]
  \centering
  \includegraphics[width=0.8\textwidth]{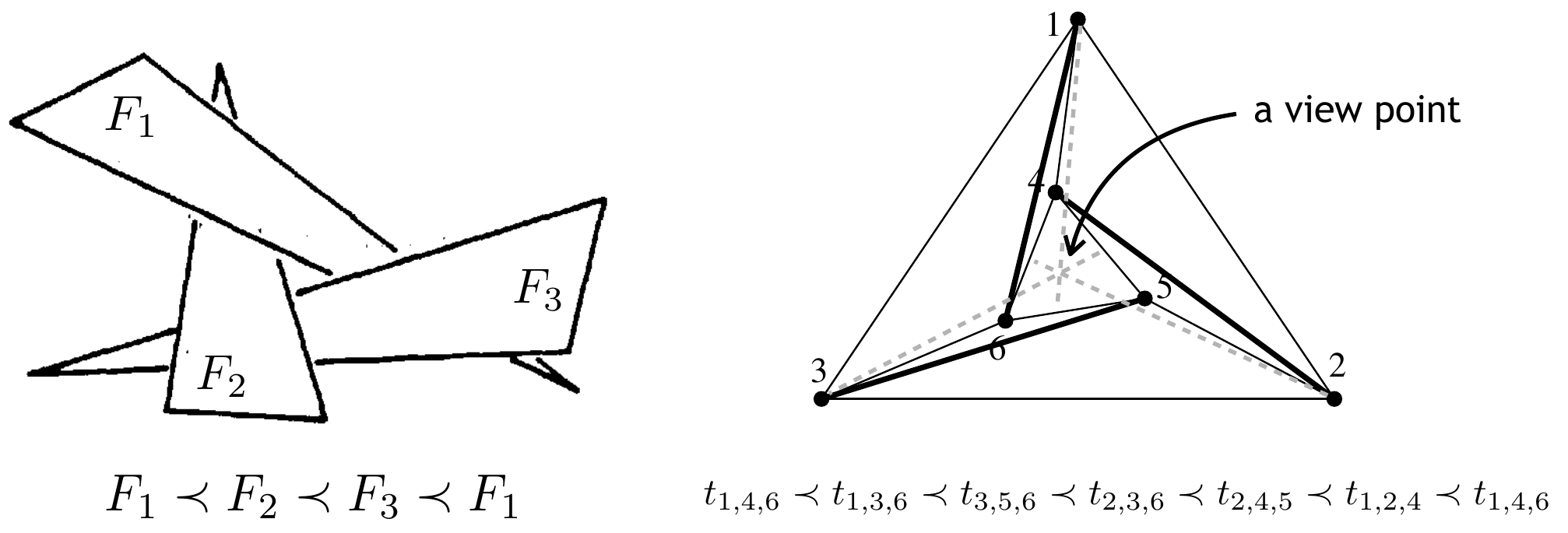}
\caption{Cycles of simplifies in the in\_front/behind relation. Left: the three triangles forms a cycle with respect to your eye position. Right: the three triangles form a cycle with respect to the fixed point. }
\label{fig:cycles_of_simplices}
\end{figure}

Given a triangulation of a point set, to check whether it is regular or not is to find whether there exists a height function $\omega$ (a set of weights on the set of points). This is a linear program problem.  

\subsection{The Acyclic Theorem}
\label{sec:acyc}

Edelsbrunner proved an important property which owed by all regular triangulations~\cite{Edelsbrunner90acy}. 
Let ${\bf x} \in \mathbb{R}^d$ be an arbitrary but fixed viewpoint. 
A simplex $\sigma$ lies {\it in front of} another simplex $\tau$ if there is a half-line starting at ${\bf x}$ that first passes through $\sigma$ and then through $\tau$, denoted as $\sigma \prec_{\bf x} \tau$.  The set of simplices in a triangulation together with $\prec_{\bf x}$ forms a relation. In general, this relation can have cycles, which are sequences $\tau_0 \prec_{\bf x} \tau_1, \ldots, \tau_k \prec_{\bf x} \tau_0$, see Figure~\ref{fig:cycles_of_simplices} Left. 

\begin{theorem}[Edelsbrunner~\cite{Edelsbrunner90acy}]
From any fixed viewpoint in $\mathbb{R}^d$, the in\_front/behind relation for simplices of regular triangulations forms no cycle.
\end{theorem}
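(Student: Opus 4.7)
The plan is to exploit the defining property of regular triangulations---that they are canonical projections of the lower envelope of $\mathrm{conv}(\mathbf{A}^{\omega})$---to build a real-valued potential on the $d$-simplices of the triangulation that is weakly monotone along the in-front/behind relation $\prec_{\mathbf{x}}$. The potential turns out to be strictly monotone for a generic viewpoint, which immediately rules out cycles, and the general case then follows by a perturbation argument.

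For the construction of the potential, I would fix the regular triangulation $\mathcal{T}$ of $(\mathbf{A},\omega)$ and, for each full-dimensional simplex $\sigma \in \mathcal{T}$, take the unique affine function $h_{\sigma}:\mathbb{R}^{d}\to\mathbb{R}$ whose graph is the lower face of $\mathrm{conv}(\mathbf{A}^{\omega})$ that projects down to $\sigma$. Regularity gives $h_{\sigma}(\mathbf{p}) \leq \omega(\mathbf{p})$ for every $\mathbf{p} \in \mathbf{A}$, so writing $\bar{h} := \max_{\sigma \in \mathcal{T}} h_{\sigma}$ for the lower-envelope function one has $\bar{h}|_{\sigma} = h_{\sigma}|_{\sigma}$ and $\bar{h} \geq h_{\sigma}$ pointwise everywhere. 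For the fixed viewpoint $\mathbf{x}$ I would then set $\phi(\sigma) := h_{\sigma}(\mathbf{x})$.

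The core step is the weak monotonicity $\sigma \prec_{\mathbf{x}} \tau \Rightarrow \phi(\sigma) \geq \phi(\tau)$. I would prove it by restricting attention to a witnessing half-line $t \mapsto \mathbf{x} + t\mathbf{v}$ that meets $\sigma$ at parameter $t_{1}$ and $\tau$ at parameter $t_{2}$ with $0 < t_{1} < t_{2}$. Setting $a(t) := h_{\sigma}(\mathbf{x}+t\mathbf{v})$ and $b(t) := h_{\tau}(\mathbf{x}+t\mathbf{v})$, both affine in $t$, the global bounds $\bar{h} \geq h_{\sigma}$ and $\bar{h} \geq h_{\tau}$ evaluated at the two contact points yield $a(t_{1}) \geq b(t_{1})$ and $b(t_{2}) \geq a(t_{2})$. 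Hence the affine function $b-a$ passes from nonpositive to nonnegative on $[t_{1},t_{2}]$, so it is non-decreasing in $t$, and evaluating at $t=0$ gives $\phi(\tau) - \phi(\sigma) = b(0) - a(0) \leq b(t_{1}) - a(t_{1}) \leq 0$.

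The main obstacle is upgrading ``$\geq$'' to a strict inequality so as to exclude cycles. I would first handle a generic viewpoint: the bad locus $\{\mathbf{x}\in\mathbb{R}^{d} : h_{\sigma}(\mathbf{x})=h_{\tau}(\mathbf{x}) \ \text{for some}\ \sigma\neq\tau \ \text{in}\ \mathcal{T}\}$ is a finite union of hyperplanes, and for any $\mathbf{x}$ outside this locus $\phi$ is injective on $\mathcal{T}$, so the weak monotonicity immediately forbids cycles. For an arbitrary viewpoint $\mathbf{x}$, I would argue by stability: any relation $\sigma \prec_{\mathbf{x}} \tau$ can be witnessed by a ray crossing the relative interiors of $\sigma$ and $\tau$ at strictly separated parameters, and this configuration persists for all $\mathbf{x}'$ sufficiently close to $\mathbf{x}$, so a hypothetical cycle at $\mathbf{x}$ would survive at a nearby generic $\mathbf{x}'$, contradicting the generic case. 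Lower-dimensional simplices reduce to this setting because each is a face of some full-dimensional simplex, so a cycle among them would lift to a cycle at the top dimension already handled.
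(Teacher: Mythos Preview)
The paper does not prove this theorem; it is quoted with attribution to Edelsbrunner and then used as a black box (for instance in the termination analysis of Section~\ref{sec:triang-to-monotone} and in the proof of Theorem~\ref{thm:poset-general}). There is therefore no in-paper argument to compare against.

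Your approach is essentially Edelsbrunner's original one: the potential $\phi(\sigma)=h_\sigma(\mathbf{x})$ is exactly the height of the lifted lower facet over the viewpoint (the ``power'' in the Delaunay special case), and your affine-on-a-ray computation establishing weak monotonicity is the standard step and is correct as written. Two places are under-justified. First, the perturbation: you assert that every relation $\sigma\prec_{\mathbf{x}}\tau$ in a hypothetical cycle is witnessed by a ray through relative interiors and hence persists for nearby $\mathbf{x}'$, but a priori the only witnessing rays might graze boundary faces, in which case stability is not automatic and needs an argument. Second, the reduction of lower-dimensional simplices to $d$-simplices is not valid as stated: replacing each face in a cycle by an arbitrary containing $d$-simplex need not produce a cycle among $d$-simplices (several faces could lift to the same simplex, or the lifted $\prec_{\mathbf{x}}$ relation could fail). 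For the present paper this second gap is harmless, since acyclicity is only ever invoked for tetrahedra, i.e.\ full-dimensional cells; but the theorem as stated covers all simplices and that case requires a separate treatment.
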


This theorem, besides its many applications, provides a useful way to prove the non-regularity of a triangulation:
{\it If a triangulation contains a cycle with respect to any viewpoint, then it is non-regular.}
The cycle with respect to the non-regular triangulation in Figure~\ref{fig:nonregular-prism-pts} (3) is shown in Figure~\ref{fig:cycles_of_simplices} Right.

However, this theorem does not imply that the acyclic property belongs to regular triangulations only.
Indeed, there are non-regular triangulations which are also acyclic viewed from any viewpoint. 
This is first shown by Jaume and Rote~\cite{JaumeRote2016}. 

\subsection{Flips}
\label{subsec:flips}

All point sets in this subsection are assumed to be in {\it general position}, which we mean that no $d+2$ points of a set of $d$-dimensional points lie on the same hyperplane.  

Let ${\bf T}$ be a set of $d+2$ points in $\mathbb{R}^d$, the Radon's theorem~\cite{Radon1921} states that  there exists a partition of ${\bf T}$ into two subsets ${\bf U}, {\bf V} \subset {\bf T}$ such that $\textrm{conv}({\bf U}) \cap \textrm{conv}({\bf V}) \neq \emptyset$.   
This means every triangulation of ${\bf T}$ either contains ${\bf U}$ as a subset, or ${\bf V}$, but not both. 
It is shown by Lawson~\cite{LAWSON1986231}, that a set of $d+2$ points can only have two triangulations. 
A {\it flip} is the operation that substitutes one triangulation of ${\bf T}$ for the other. 

In $\mathbb{R}^2$, there are two Radon partitions of $4$ points. This defines three types of flips, shown in Figure~\ref{fig:flip2d}.
We can distinguish these types of flips by the number of triangles in the triangulations before and after the flip. Hence there are {\it 2-2 flips} (or {\it edge-flips}), {\it 1-3 flips} (or {\it vertex-insertion}), and {\it 3-1 flips} (or {\it vertex-deletion}),   
see Figure~\ref{fig:flip2d}. Flips in $\mathbb{R}^3$ are classified accordingly.

\begin{figure}[ht]
  \centering
  \includegraphics[width=0.6\textwidth]{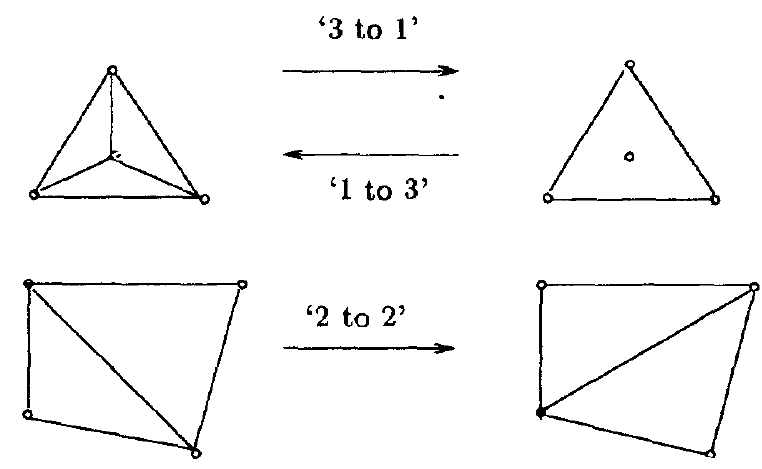}
\caption{Flips in $\mathbb{R}^2$.}
\label{fig:flip2d}
\end{figure}

\paragraph{Flippability} 
Let ${\cal K}$ be  triangulation of a point set ${\bf A}$ in $\mathbb{R}^2$. Consider an edge $e_{\bf ab} \in {\cal K}$. Assume it belongs to two triangles $f_{\bf abc}, f_{\bf abd} \in {\cal K}$. We say $e_{\bf ab}$ is {\it flippable} in ${\cal K}$ if there exists a subcomplex ${\cal S} \subseteq {\cal K}$ such that the underlying space $|{\cal S}|$ is equal to the convex hull of $\{{\bf a,b,c,d}\}$. The edge $e_{\bf ab}$ is flippable implies that either $|{\cal S}| = f_{\bf abc} \cup f_{\bf abd}$ or there exists a third triangle, without loss of generality, assume  ${\bf a}$ is the reflex vertex, $f_{\bf acd} \in {\cal K}$, such that $|{\cal S}| =f_{\bf abc} \cup f_{\bf abd} \cup f_{\bf acd} $.  In the former case, $e_{\bf ab}$ admits a 2-2 flip (an edge flip), and in the latter, it admits a 3-1 flip (vertex removal).  
We say that the convex hull of $\{{\bf a,b,c,d}\}$ is the {\it support} of this flip. 
Otherwise, $e_{\bf ab}$ is {\it unflippable}. This means, if we perform a flip on this edge, it will result a simplical complex which is not a geometric triangulation according to our definition.  Especially, the Intersection Property (IP) does not hold, 
see Figure~\ref{fig:flippability} for examples.  

\begin{figure}[ht]
  \centering
  \includegraphics[width=0.6\textwidth]{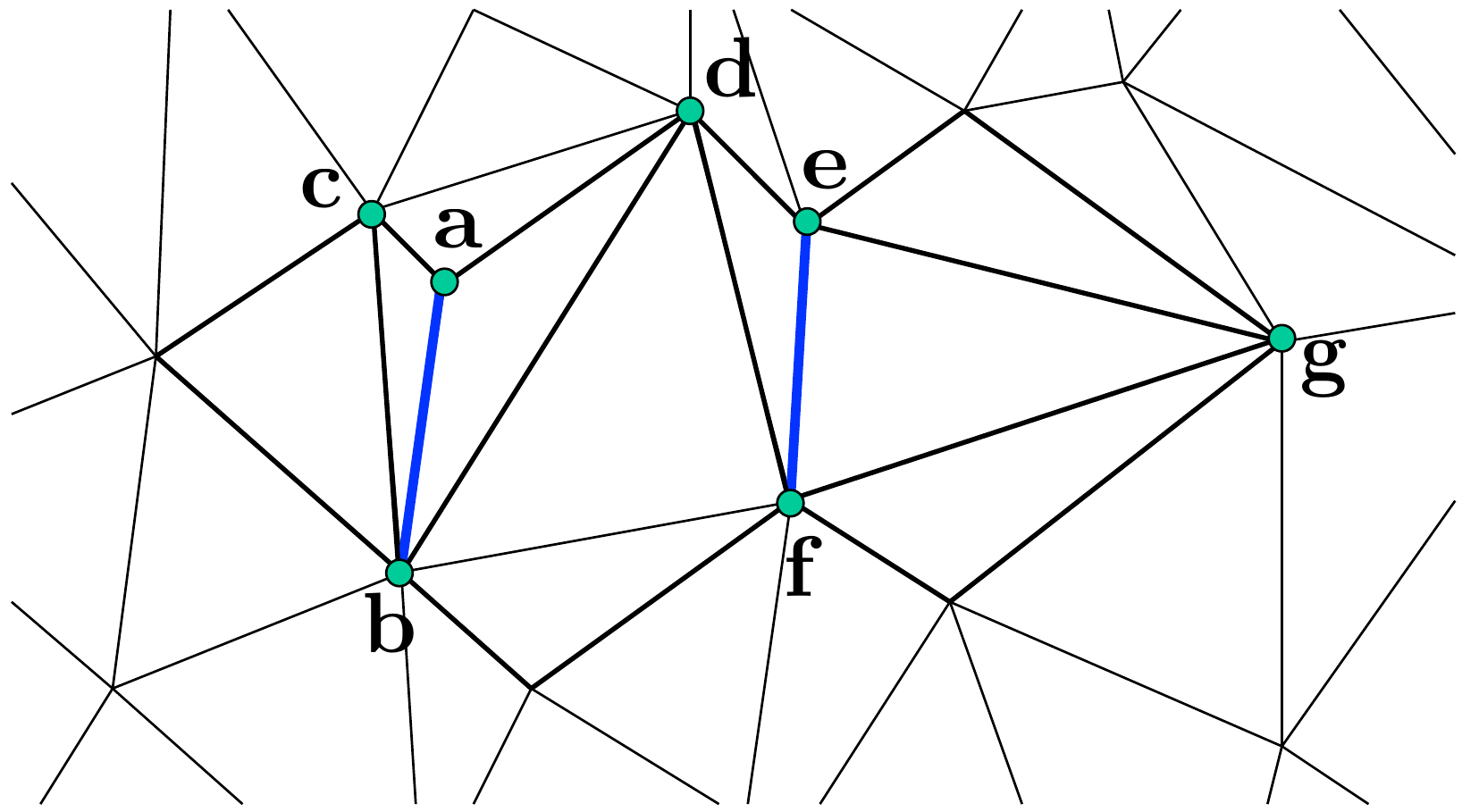}
\caption{Flippability of edges in $\mathbb{R}^2$. Edge $e_{\bf ab}$ is flippable (by a 3-1 flip) whose support of this flip is the convex hull of the set $\{{\bf a},{\bf b}, {\bf c}, {\bf d}\}$, so do the edges $e_{\bf bd}$ and $e_{\bf df}$ (by 2-2 flips), while the edge $e_{\bf ef}$ is unflippable.}
\label{fig:flippability}
\end{figure}

\subsection{Lawson's flip algorithm}

A classical algorithm (due to Charles Lawson~\cite{Lawson1977}) shows that one can transform any triangulation of ${\bf A}$ into the Delaunay triangulation of ${\bf A}$ by a sequence of edge flips. 

Let ${\mathcal K}$ be a triangulation of ${\bf A}$ in $\mathbb{R}^2$. An edge $e_{\bf ab} \in {\mathcal K}$ is {\it locally Delaunay} if either
(i) it is on the convex hull of ${\bf A}$, or
(ii) it belongs two triangles, $f_{\bf abc}, f_{\bf abd} \in {\cal K}$, and ${\bf d}$ does not lie inside of the circumcircle of $f_{\bf abc}$.
A locally Delaunay edge is not necessarily a Delaunay edge. The Delaunay Lemma~\cite{Delaunay1934} proved  that if every edge of ${\mathcal K}$ is locally Delaunay, then ${\mathcal K}$ is the Delaunay triangulation of ${\bf A}$.

\begin{figure}[ht]
  \centering
  \includegraphics[width=0.8\textwidth]{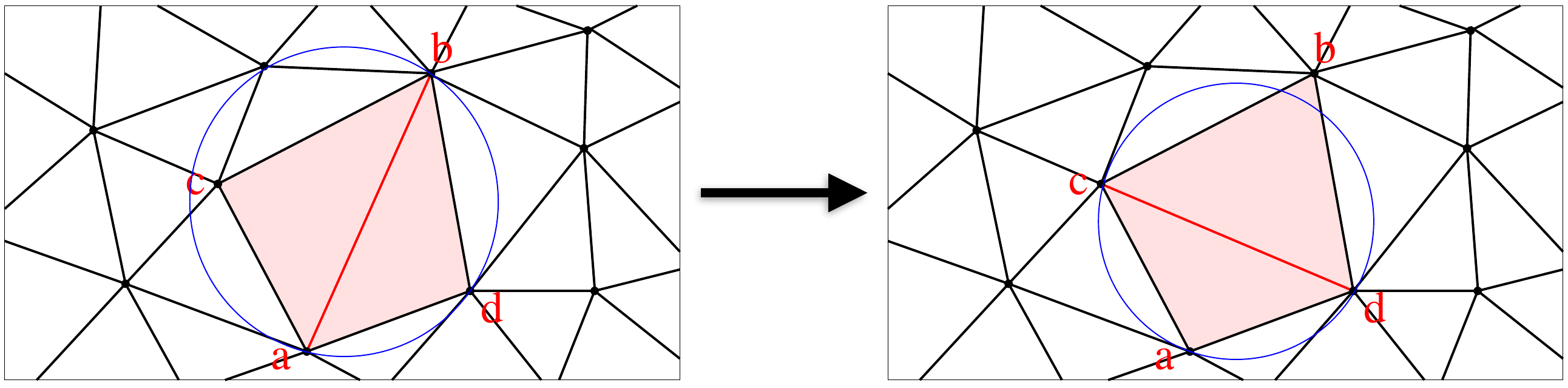}
  \caption{The edge flips from $e_{\bf ab}$ to $e_{\bf cd}$ replaces the locally non-Delaunay edge $e_{\bf ab}$ by the locally Delaunay edge $e_{\bf cd}$.}
  \label{fig:edge_flip}
\end{figure}

If an edge $e_{\bf ab} \in {\mathcal K}$ is not locally Delaunay, then it must be flippable (an easy exercise). A 2-2 flip can be used to replace $e_{\bf ab}$ by the edge $e_{\bf cd}$ in ${\mathcal K}$ such that  $e_{\bf cd}$ must be locally Delaunay, see Figure~\ref{fig:edge_flip}. We can use edge flips as elementary operations to convert an arbitrary triangulation ${\mathcal K}$ to the Delaunay triangulation. 

The Lawson's algorithm uses a stack to maintains all edges which may be locally non-Delaunay. Initially, all edges of ${\mathcal K}$ are pushed on the stack. It then pop these edges (one after one) and performs edge flips on those edges which are not locally Delaunay. After each flip, the stack will be updated by new edges on the convex hull of the support of this flip. The algorithm finishes when the stack is empty.  

Lawson's algorithm can be understood as gluing a sequence of tetrahedra. A locally non-Delaunay edge $e_{\bf ab}$ corresponding a locally non-convex edge $e_{\bf a'b'}$ in $\mathbb{R}^3$. Flipping $e_{\bf ab}$ to $e_{\bf cd}$ is likely gluing a tetrahedron $t_{\bf a'b'c'd'}$ from below to the triangles $f_{\bf a'b'c'}$ and $f_{\bf a'b'd'}$ in $\mathbb{R}^3$. 
Once we glue $t_{\bf a'b'c'd'}$ we cannot glue another tetrahedron right below $e_{\bf a'b'}$. In other words, once we flip $e_{\bf ab}$ we cannot introduce $e_{\bf ab}$ again by some other flip.  This implies that Lawson's algorithm will eventually terminate when all locally non-Delaunay edges are flipped. By the Delaunay lemma, the triangulation is Delaunay. Figure~\ref{fig:LawsonflipAlg_lift} illustrates the algorithm in both 2d and 3d.
This also implies there are at most as many flips as there are edges connecting $n$ points, namely $n \choose 2$. Each flip takes constant time, hence the total running time is $O(n^2)$. 

\begin{figure}[ht]
  \centering
  \includegraphics[width=0.9\textwidth]{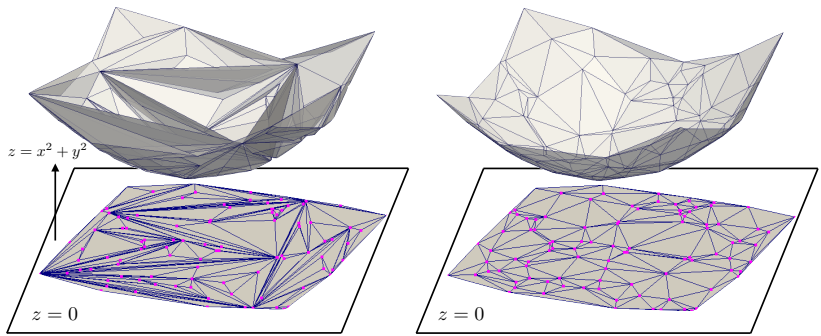}
  \caption{The lifted view of Lawson's flip algorithm which transforms a non-convex surface (left) in 3d into a convex one (right). }
  \label{fig:LawsonflipAlg_lift}
\end{figure}

\subsection{The undirected flip graph}
\label{sec:undirected-flip-graph}

\begin{figure}
  \centering
  \includegraphics[width=0.6\textwidth]{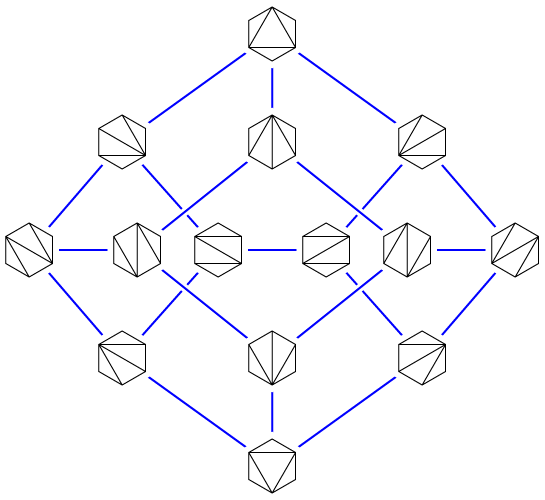}
\caption{The (undirected) flip graph of a the vertex set of a hexagon (\cite{EPPSTEIN2009790}).}
\label{fig:flipgraph-Eppstein}
\end{figure}

The {\it undirected flip graph} of a finite point set ${\bf A}$ is defined as follows: Each triangulation of ${\bf A}$ is a node, and there is an edge from $\mu$ to $\nu$ if there is a flip that changes the triangulation $\mu$ to $\nu$, see Figure~\ref{fig:flipgraph-Eppstein} for an example. 

Lawson's algorithm~\cite{Lawson1977} shows that one can transform any triangulation of ${\bf A}$ into the Delaunay triangulation of ${\bf A}$ by a sequence of edge flips. 
Since one can always add or remove a vertex in a 2d triangulation by a sequence of 2-2, 1-3 and 3-1 flips. 
This shows that the (undirected) flip graph of any 2d point set is connected. 

The undirected flip graph of a point set in three and higher dimensions are much more complicated. 
It was first proven by Joe~\cite{Joe1989} that Lawson's algorithm may fail in 3d. Moreover, the failure is due to the existence of a cycle of connected unflippable locally non-Delaunay faces~\cite[Lemma 4, 5]{Joe1989}. 

A construction of a triangulation of a point set in $\mathbb{R}^5$ given by Santos~\cite{Santos2000,Santos2000a} shows that the undirected flip graph of a point set in $\mathbb{R}^5$ is disconnected.  This leads to the following open problem.

\begin{prob}[Open Problem]
Whether the undirected flip graph of a point set in $\mathbb{R}^3$ or $\mathbb{R}^4$ is connected or not. 
\end{prob}

A nice result from the secondary polytope theory~\cite{Gelfand1994Discriminants} shows that the undirected flip graph of the set of all regular triangulations of any point set in $\mathbb{R}^d$ is connected.  This implies the following very useful theorem which will be extensively used in this paper. A proof of this theorem is found in the book~\cite[Theorem 5.3.7]{TriangBook}. The definition of ``directed flips" is given later in Section~\ref{sec:graph-poset}. 

\begin{theorem}~\label{thm:secondarypolytope}
Between every two regular triangulations ${\cal T}_1$ and ${\cal T}_2$ of ${\bf A}$, there exists a monotone sequence of directed flips which starts at ${\cal T}_1$ and ends at ${\cal T}_2$ or vice versa. 
\end{theorem}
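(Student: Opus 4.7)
The plan is to use a parametric homotopy in the space of height functions. Since ${\cal T}_1$ and ${\cal T}_2$ are regular, pick generic height functions $\omega_1, \omega_2: {\bf A} \to \mathbb{R}$ realising them as lower regular triangulations, and set $\omega := \omega_2 - \omega_1$; this will be the lifting with respect to which we orient all flips. Consider the affine path $\omega_t := \omega_1 + t\omega$ for $t \in [0,1]$ and track the evolution of the regular triangulation ${\cal T}(\omega_t)$. After an arbitrarily small perturbation of $\omega_1, \omega_2$ within their respective secondary-fan cells (which leaves ${\cal T}_1$ and ${\cal T}_2$ unchanged), the path $\omega_t$ crosses only codimension-one strata of the secondary fan of ${\bf A}$, so there exist finitely many $0 < t_1 < \ldots < t_k < 1$ at which ${\cal T}(\omega_t)$ changes, each change being a single flip supported on a $(d+2)$-point circuit $X_i \subset {\bf A}$ whose lift under $\omega_{t_i}$ becomes coplanar.

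The crucial step is to read off the direction of each of these flips with respect to $\omega$. Let the essentially unique affine dependence on $X_i$ be $\sum_j \lambda_j p_j = 0$ with $\sum_j \lambda_j = 0$. Coplanarity at $t_i$ gives $\sum_j \lambda_j \omega_{t_i}(p_j) = 0$, and by linearity of $\omega_t$ in $t$,
\begin{equation*}
\sum_j \lambda_j \omega_t(p_j) \;=\; (t - t_i)\sum_j \lambda_j \omega(p_j).
\end{equation*}
A standard circuit computation (comparing the lifted heights of the two Radon subsets at the projected crossing point) shows that, for any height function $h$, the lower regular triangulation of $X_i^h$ uses exactly those faces $X_i \setminus \{p_k\}$ with $\lambda_k$ of one fixed sign iff $\sum_j \lambda_j h(p_j)$ has the opposite sign. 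Applying this criterion both to $h = \omega_{t_i \pm \epsilon}$ and to $h = \omega$, the displayed equation makes the two signs agree for $t = t_i + \epsilon$ and be opposite for $t = t_i - \epsilon$. Therefore the faces of $X_i$ appearing in ${\cal T}(\omega_{t_i + \epsilon})$ are precisely the lower faces of the $(d+1)$-simplex $X_i^\omega$, while those in ${\cal T}(\omega_{t_i - \epsilon})$ are its upper faces. By the definition of directed flips, each transition at $t_i$ replaces upper faces of $X_i^\omega$ by lower faces, i.e.\ is a \emph{down-flip} with respect to $\omega$.

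Concatenating the transitions yields a monotone sequence of down-flips from ${\cal T}_1 = {\cal T}(\omega_0)$ to ${\cal T}_2 = {\cal T}(\omega_1)$; reversing the parameter direction $t \mapsto 1-t$ produces the symmetric monotone sequence of up-flips from ${\cal T}_2$ back to ${\cal T}_1$, which is the ``or vice versa'' clause. The main obstacle is the genericity argument in the first paragraph: one must verify that a generic perturbation of $(\omega_1, \omega_2)$ within their secondary-fan cells can simultaneously make the segment $\omega_t$ transverse to every wall of the fan it meets and force each wall crossing to involve a single circuit, so that only elementary flips occur. A secondary bookkeeping issue is that 1-3 and 3-1 flips must be handled uniformly by the same framework; this reduces to observing that such flips correspond exactly to $(d+2)$-circuits with an unbalanced Radon partition ($|T_+|=1$ or $|T_-|=1$), whose single vertex rises above or drops below the lower envelope precisely at the corresponding $t_i$.
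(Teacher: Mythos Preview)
The paper does not give its own proof of this theorem: it is quoted as background and attributed directly to \cite[Theorem~5.3.7]{TriangBook}, with the secondary-polytope theory of \cite{Gelfand1994Discriminants} mentioned as the source of the underlying connectivity result. So there is no in-paper argument to compare against.

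Your argument is the standard secondary-fan proof that the cited reference uses: walk the segment $\omega_t=\omega_1+t(\omega_2-\omega_1)$ through the secondary fan, perturb to make it transverse to the walls, and read off a flip at each wall. Your circuit/sign computation showing that every wall-crossing is a down-flip with respect to $\omega=\omega_2-\omega_1$ is correct, and you rightly flag the genericity step (transversality to walls, and that each wall met corresponds to a single circuit) as the only real work left. One small clarification worth making explicit: the height function $\omega$ that orients the flips is the one \emph{you} construct from ${\cal T}_1,{\cal T}_2$, not the ambient $\omega$ fixed elsewhere in the paper; the theorem as stated is about the existence of \emph{some} monotone sequence, and this is exactly what your choice delivers.
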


Another interesting example is the vertex sets of {\it cyclic polytopes}, see e.g.~\cite[Chap 0]{Ziegler1995-book} and~\cite[Section 6.1]{TriangBook}.  It has been shown that there is a ``friendly" poset structure on the set of all triangulations of cyclic polytopes (introduced below). The Hasse diagram of this poset will be the graph of all triangulations. In particular, this  graph is connected, see~\cite{Edelman1996,Rambau96-thesis} and~\cite[Section 6.1]{TriangBook}. 

For the most recent and comprehensive exposition of undirected flip graphs, as well as the most latest results, we refer to the book~\cite{TriangBook}.

\subsection{Higher Stasheff-Tamari (HST) posets}

Besides the undirected flip graph, there is another structure on a point set by placing a partial order on triangulations of this point set.

Let ${\bf C}$ be the vertex set of a {\it cyclic polytope} ${\bf C}(n, 2)$ in $\mathbb{R}^2$ (see e.g.~\cite[Chap 0]{Ziegler1995-book}), where $n$ is the number of points of ${\bf C}$. 
Let the four points ${\bf p}_i, {\bf p}_j, {\bf p}_k, {\bf p}_l \in {\bf A}$, where $i < j < k < l$, form a quadrilateral.  A flip is called {\it up-flip} if it replaces the diagonal edge $[{\bf p}_i, {\bf p}_k]$ by $[{\bf p}_j, {\bf p}_l]$. It is a {\it down-flip} otherwise.  Let ${\cal T}_1, {\cal T}_2$ be two triangulations  of ${\bf C}$, we say ${\cal T}_1 \le_1 {\cal T}_2$ when ${\cal T}_2$ can be produced from ${\cal T}_1$ by a (possibly empty) sequence of up-flips. 
Then the set of all triangulations of ${\bf C}$ is a partially ordered set (poset), see Figure~\ref{fig:HST1-Rambau} Left.  
It is known as the {\it first Higher Stasheff-Tamari poset}, denoted as ${\bf HST}_1$.
By using the height of the characteristic sections~\cite[Def. 6.1.15]{TriangBook} as the partial order, one obtain another  poset structure known as the {\it second Higher Stasheff-Tamari poset}, denoted as ${\bf HST}_2$, see Figure~\ref{fig:HST1-Rambau} Right.

\begin{figure}
  \centering
  \includegraphics[width=0.8\textwidth]{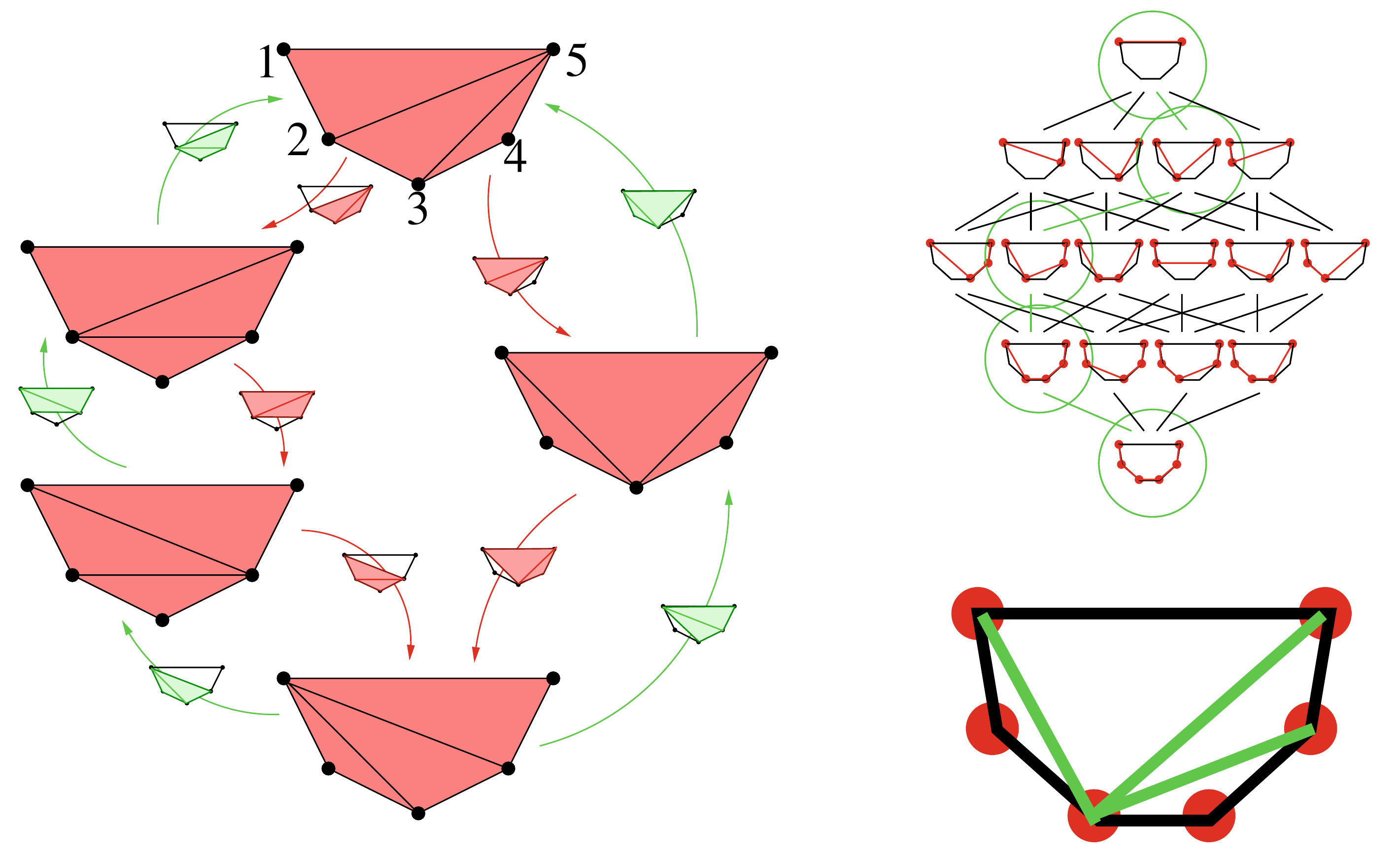}
\caption{Examples of higher Stasheff-Tamari  (HST) Posets for triangulations of cyclic polytopes (Courtesy of~\cite{TriangBook}).  Left: ${\bf HST}_1$ of the vertex set a $5$-gon. It is the vertex set of the $2d$ cyclic polytope ${\bf C}(5,2)$. Right: ${\bf HST}_2$ of the vertex set ${\bf C}(6,1)$, and a triangulation of ${\bf C}(6,2)$ resulted by a maximal chain.}
\label{fig:HST1-Rambau}
\end{figure}

These two poset link to many nice combinatorial and geometric objects. 
Its property has been well studied~\cite{Edelman1996,Rambau96-thesis}. 
The following theorem is about the structure properties of triangulations of cyclic polytopes proven in~\cite{TriangBook}. It is proven for cyclic polytope of any dimension.  
We state the two-dimension case only. 

\begin{theorem}~\cite[Theorem 6.1.19]{TriangBook}
\label{thm:poset-cyclicpoly}
Let ${\bf A}$ be the point set of $C(n, 2)$. Let $\omega$ be a 
height function defined on the vertices of ${\bf A}$. Then
\begin{itemize}
\item[(1)] This poset is bounded. It has a unique minimum which is the  regular triangulation of $({\bf A}, \omega)$. It has a unique maximum, which is the farthest-point regular triangulation of $({\bf A}, \omega)$.

\item[(2)] For any triangulation of ${\bf A}^{\omega}$ there exists a maximum sequence of flips in the poset, i.e., all triangulations of ${\bf A}^{\omega}$ are coded in this poset. 


\item[(3)] All internal nodes of this poset are 2d regular triangulations of ${\bf A}$. 
\end{itemize}
\end{theorem}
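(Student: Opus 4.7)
The plan is to reduce all three parts to direct applications of the paper's main correspondences (Theorem~\ref{thm:monotone-to-triang}, Theorem~\ref{thm:triang-to-monotone}, and Theorem~\ref{thm:secondarypolytope}) combined with the distinguishing feature of the cyclic polytope case: the point set ${\bf A} = {\bf C}(n,2)$ is a convex $n$-gon in the plane, so every point of ${\bf A}$ is a vertex of $\mathrm{conv}({\bf A})$ and, for any height function $\omega$, every point of ${\bf A}^\omega$ is a vertex of $\mathrm{conv}({\bf A}^\omega)$. Consequently every triangulation of ${\bf A}$ uses all $n$ vertices, only 2-2 flips occur in the directed flip graph, and every triangulation of ${\bf A}$ is regular (it can be lifted to the lower envelope of a suitable strictly convex height function). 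This observation disposes of part~(3) immediately: every node, internal or extremal, is a regular triangulation.

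For part~(1) I would treat the unique minimum; the unique maximum follows by the obvious up/down symmetry. Let ${\cal T}_\omega$ be the regular triangulation of $({\bf A},\omega)$; its lifted image is the lower envelope of $\mathrm{conv}({\bf A}^\omega)$, so for every interior diagonal $e_{\bf ab}\in {\cal T}_\omega$ with complementary diagonal $e_{\bf cd}$ in the quadrilateral ${\bf abcd}$, the lifted edge $e_{\bf a'b'}$ lies strictly below $e_{\bf c'd'}$. Hence every flip performed on ${\cal T}_\omega$ is by definition an up-flip, so no element of the poset lies below ${\cal T}_\omega$. To upgrade this to a genuine minimum, I would invoke Theorem~\ref{thm:secondarypolytope}: since all triangulations of ${\bf A}$ are regular, for any ${\cal T}$ there is a monotone sequence of directed flips between ${\cal T}_\omega$ and ${\cal T}$. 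The direction must be upward, since the lifted surface of ${\cal T}_\omega$ already sits on the lower envelope and any down-flip from ${\cal T}_\omega$ would lower it strictly below. Therefore ${\cal T}_\omega \le_1 {\cal T}$ for every ${\cal T}$, which establishes ${\cal T}_\omega$ as the unique minimum.

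For part~(2) I would use Corollary~\ref{thm:triang-to-monotone-viewpoint} with the vertical viewpoint $({\bf 0},-\infty)$. Given an arbitrary 3d triangulation ${\cal S}$ of $\mathrm{conv}({\bf A}^\omega)$, I need to show it is acyclic with respect to this viewpoint; then the corollary yields a monotone sequence of up-flips whose initial and terminal triangulations are the canonical projections of the lower and upper boundaries of ${\cal S}$, which are exactly ${\cal T}_\omega$ and the farthest-point regular triangulation. By part~(1) this sequence is a maximum chain of the poset, so every 3d triangulation of ${\bf A}^\omega$ is encoded. The hard part is the acyclicity claim: we need that no in-front/behind cycle of tetrahedra can arise in a triangulation of a lifted convex polygon when viewed from $\pm\infty$ along the $z$-axis. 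I would argue this by a projection/shelling argument: each tetrahedron of ${\cal S}$ projects to a triangle of ${\bf A}$, and two tetrahedra whose projections are distinct are ordered in the $\prec_{\bf x}$ relation exactly by whether their shared projected region lies above or below; since the $z$-coordinate of the lifted surface between consecutive ``slices'' is monotone on each projected triangle, no cycle can form. This acyclicity step, rather than any combinatorial manipulation of flips, is the main obstacle and is the place where the convex-polygon hypothesis (every point on the convex hull, no interior vertices, only 2-2 flips available) is truly used.
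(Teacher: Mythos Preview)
This theorem is stated in the paper as a cited result from \cite[Theorem~6.1.19]{TriangBook}, not as something the paper proves from scratch. The only piece the paper engages with directly is part~(1): after proving Theorem~\ref{thm:poset-convex-heights}, the paper remarks that a convex $n$-gon has no interior vertices, so no $3$-$1$ flips are ever needed, and hence Lawson's algorithm terminates in \emph{both} directions regardless of whether $\omega$ is convex or concave. This is the paper's ``another proof'' of part~(1), and your argument for~(1) is essentially the same idea, except that you route through Theorem~\ref{thm:secondarypolytope} and the observation that all triangulations of a convex polygon are regular, rather than through Theorem~\ref{thm:poset-convex-heights}. Both approaches exploit the same underlying fact (no interior vertices, only $2$-$2$ flips), so here you are aligned with the paper.

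Your treatment of part~(3) is fine: every triangulation of a convex polygon is indeed regular, a standard fact (the secondary polytope is the associahedron).

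For part~(2), however, your approach diverges from the cited proof and has a genuine gap. The paper explicitly describes the \cite{TriangBook} argument (see the beginning of Section~\ref{sec:triang-to-monotone}) as an \emph{explicit, technical construction}: for any two triangulations of $C(n,d)$ one writes down by formula a triangulation of the region between their lifts in $C(n,d+1)$. You instead propose to invoke Corollary~\ref{thm:triang-to-monotone-viewpoint}, which requires showing that \emph{every} $3$d triangulation of $\mathrm{conv}({\bf A}^\omega)$ is acyclic with respect to the vertical viewpoint. Your sketch of this acyclicity claim is not a proof: a tetrahedron of ${\cal S}$ does not project to a triangle of ${\bf A}$ (in the $2$-$2$ case it projects to a convex quadrilateral), and the phrase ``the $z$-coordinate of the lifted surface between consecutive slices is monotone on each projected triangle'' does not pin down a well-defined ordering that rules out cycles. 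The acyclicity of arbitrary triangulations of $C(n,3)$ from the vertical direction is true, but it requires a real argument exploiting the combinatorics of cyclic polytopes (e.g., Gale's evenness or the admissible-ordering machinery), not the generic projection reasoning you outline. You correctly identified this as the main obstacle; it remains an obstacle.
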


It is shown that the Hasse diagram of this poset will be the graph of all triangulations. 
It is a lattice, which means, it has unique minimum and maximum elements and every two elements have a unique minimum upper bound. 
In particular, there is a bijective relation between the maximal chain (sequence) of flips and all triangulations of the lifted cyclic polytope in $\mathbb{R}^{3}$, see an example in Figure~\ref{fig:HST1-Rambau} Right. This is the key fact to show that the flip graph of the vertex set (up to $\mathbb{R}^5$) is connected. 

\subsection{Triangulating non-convex polyhedra}

Decomposing a 3d polyhedron into a set of tetrahedra that forms a simplicial complex is a classical problem in computational geometry. 

It is shown that this problem has many theoretical difficulties. There exists simple non-convex  polyhedra whose interior cannot be triangulated with its own vertices. The smallest example is constructed by Sch\"onhardt in 1928~\cite{Schonhardt1928}, which is a twisted triangular prism of $6$ vertices, see Figure~\ref{fig:indecomposable} Left. Other construction of such polyhedra are reported in~\cite{Bagemihl48-decomp-polyhedra,Jessen1967,Chazelle1984,Rambau05,Bezdek2016,SI201892}.  Ruppert and Seidel~\cite{RuppertSeidel92} proved that the problem to determine whether a simple non-convex polyhedron can be triangulated without Steiner points is NP-complete.  

\begin{figure}[ht]
  \centering
  \includegraphics[width=0.8\textwidth]{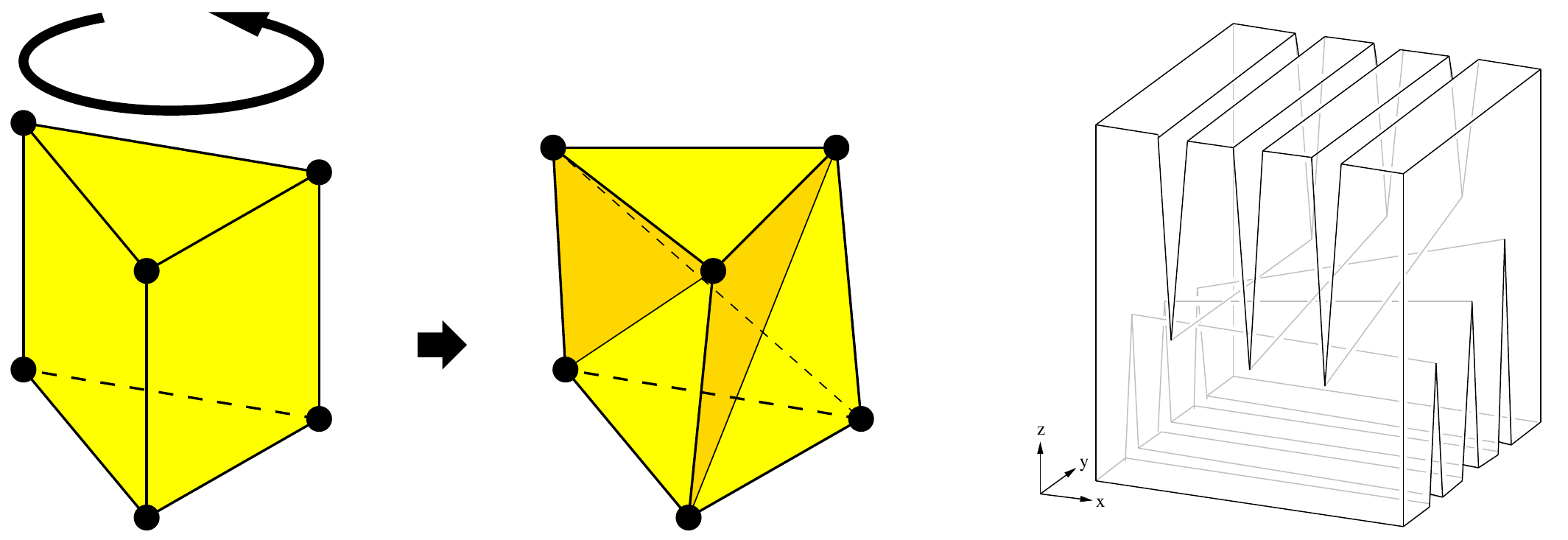}
\caption{Left: The Sch\"onhardt polyhedron. Right: A Chazelle polyhedron.}
\label{fig:indecomposable}
\end{figure}

If addition vertices are allowed, so-called {\it Steiner points}, Chazelle~\cite{Chazelle1984} constructs a family of polyhedra with $n$ vertices and proved that $\Omega(n^2)$ Steiner points are needed to triangulate them, see Figure~\ref{fig:indecomposable} Right.  Chazelle also proved that any simple polyhedron of $n$ vertices can be triangulated into $O(n^2)$ tetrahedra~\cite{Chazelle1984}.  In practice, a challenging question is: {\it Given any polyhedron, how to triangulate it with the number of Steiner points as small as possible?}

The problem is much difficult when the boundary of the polyhedra needs to be preserved, i.e., Steiner points, if they are needed, are only allowed in the interior of the polyhedra. 
There are algorithms based on heuristics, see e.g~\cite{GeorgeHechtSaltel91,WeatherillHassan94,GeorgeBorouchakiSaltel03,Si2015-TetGen}, but none of them provides any bound on the number of interior Steiner points. 

There are some interesting results on triangulating special classes simple and non-simple polyhedra without Steiner points.  We refer to the work of Goodman and Pach~\cite{Goodman1988}, Bern~\cite{Bern93-tetra}, and Toussaint et al~\cite{Toussaint93}. 
In particular, Toussaint et al~\cite{Toussaint93}  showed that certain classes of rectilinear (isothetic) simple polyhedra can always be triangulated in $O(n^2)$ time where $n$ is the number of vertices in the polyhedron. They also showed that polyhedral slabs (even with holes) as well as subdivision slabs can always be triangulated  in $O(n \log n)$ time. Furthermore, for simple polyhedral slabs $O(n)$ time suffices. 
They showed that polyhedra that are the union of three convex polyhedra can always be triangulated in $O(n^2)$ time.  They remarked that: {\it one direction of future research is to find other nontrivial classes polyhedra
which can be triangulated.}

\section{Monotone Sequences of Directed Flips and the Directed Flip Graph}
\label{sec:graph-poset}



Let ${\bf A}$ be a finite point set in $\mathbb{R}^2$ and let $\omega: {\bf A} \to  \mathbb{R}$ be a height function which lifts every vertex ${\bf p} = (p_1, p_2) \in {\bf A}$ into a lifted point ${\bf p}' = (p_1, p_2, \omega({\bf p})) \in \mathbb{R}^{3}$. 
Let ${\bf A}^{\omega} = \{{\bf p}' \;|\; {\bf p} \in {\bf A}\}$ in $\mathbb{R}^{3}$ be the set of lifted points of ${\bf A}$. 

Let $\{{\bf a}, {\bf b},{\bf c},{\bf d}\} \subset {\bf A}$ be the support of a flip (2-2 flip, 1-3 flip, or 3-1 flip) in a triangulation of ${\bf A}$. This flip corresponds to a tetrahedron $t_{\bf a'b'c'd'}$, where $\{{\bf a}', {\bf b}',{\bf c}',{\bf d}'\} \subset {\bf A}^{\omega}$.  (The definition of the support of a flip is given in Section~\ref{subsec:flips}.)
Let ${\cal T}_1$ and ${\cal T}_2$ be two triangulations of ${\bf A}$ differing by a flip. We say ${\cal T}_1$ is {\it decreasing with respect to the height function $\omega$} if the upper faces of $t$ appear in ${\cal T}_1$ and the lower faces of $t$ appear in ${\cal T}_2$. In this case, we call this flip a {\it down-flip}. 
Likewise, we say  ${\cal T}_1$ is {\it increasing with respect to the height function $\omega$} if it is the other way around. In this case, we call this flip an {\it up-flip}. The two types of directed flips are illustrated in Figure~\ref{fig:directed_flips}.


We say a sequence of directed flips {\it monotone} if all directed flips in this sequence are of the same type. 


Similar to the construction of the first Higher Stasheff-Tamari (${\bf HST}_1$) poset~\cite{Rambau96-thesis,TriangBook}.
We can construct a directed flip graph on the set of triangulations of a point set by using one of the two types of directed flips.

We first define a partial order on the set of triangulations of ${\bf A}$ which has a height function $\omega$ defined on vertices of ${\bf A}$.  Let ${\cal T}_1$ and ${\cal T}_2$ be two triangulations of ${\bf A}$. We say that ${\cal T}_1 \le_{1} {\cal T}_2$ if ${\cal T}_2$ can be produced from ${\cal T}_1$ by a (possibly empty) sequence of up-flips. Likewise, we say ${\cal T}_1 \ge_{1} {\cal T}_2$ if ${\cal T}_2$ can be produced from ${\cal T}_1$ by a (possibly empty) sequence of down-flips.

The $\le_1$ relation defines a poset on the set of triangulations of ${\bf A}$.  If ${\bf A}$ is the vertex of a cyclic polytope, then this poset is exactly ${\bf HST}_1$.  What would be a poset of an arbitrary point set look like?

\subsection{A motivation example}
\label{sec:graph-poset-example}

Figure~\ref{fig:prism_poset_1} shows the poset of the point set ${\bf A}$ of $6$ points with a given height function $\omega$.  It is the smallest example which contains a non-regular triangulation. 

This poset is produced using the up-flips. All sequence of up-flips are starting from the regular triangulation  (bottom) to the farthest point regular triangulation (top) of $({\bf A}, \omega)$. In each triangulation of ${\bf A}$, the green edges are locally non-regular with respect to up-flips and flippable. The (thin) black edges are also locally non-regular, but not flippable. The thick black edges are locally regular.  

Each arrow in this graph represents an up-flip, which is either a $\operatorname{2-2}$ or a $\operatorname{3-1}$ flip. 
The numbers in front of each arrow, like $\operatorname{34-56}$ or $\operatorname{123-4}$,  indicate the involved vertices which support this flip. For example, $\operatorname{34-56}$ means a $\operatorname{2-2}$ flip which replaces edge $e_{\bf 34}$ by $e_{\bf 56}$, and $\operatorname{123-4}$ means a $\operatorname{3-1}$ flip which removes the vertex $v_{\bf 4}$ from the triangle $f_{\bf 123}$. 
The four vertices of each flip are the vertices of a tetrahedron.   
Note that in this example, there is no $\operatorname{3-1}$ flip. 

From this graph, we observe some interesting properties of this poset. 
\begin{itemize}
\item There is a unique minimum, which is the regular triangulation of $({\bf A}, \omega)$, denoted as ${\cal T}_{reg}$.
\item There is no unique maximum. Instead, there are two top elements, which are the farthest point regular triangulation of $({\bf A}, \omega)$, denoted as ${\cal T}_{far}$, and the non-regular triangulation of ${\bf A}$, denoted as ${\cal T}_{nonreg}$.
\item There is no path (up-flips) from ${\cal T}_{nonreg}$ to ${\cal T}_{far}$. 
\end{itemize}

This graph could partially explain the behaviour of Lawson's flip algorithm~\cite{Lawson1977}:

\begin{itemize}
\item Any sequence of down-flips (which are reversing the arrows) will reach ${\cal T}_{reg}$. Even it starts from ${\cal T}_{nonreg}$. This explains that the Lawson's algorithm with respect to the down-flips must terminate at the unique minimum, the regular triangulation of $({\bf A}, \omega)$.  
\item However, it may fail to reach the ${\cal T}_{far}$ with respect to up-flips. There exists a sequence which ends at ${\cal T}_{nonreg}$.   
\end{itemize}

We call a sequence of up-flips {\it maximal} if it starts from the ${\cal T}_{reg}$ and ends at the ${\cal T}_{far}$ of $({\bf A}, \omega)$. 
From~\ref{fig:prism_poset_1} (4) we could observe there are many maximal sequences. 
In the next section, we will prove that every maximal sequence gives a tetrahedralisation of ${\bf A}^{\omega}$. 
It is not hard to observe that several maximal sequences may lead to the same tetrahedralisation. 
They  contain the same set of flips but with have different orders. 
By a filtering of maximal sequences with the same set of flips, we can get all six tetrahedralisations of ${\bf A}^{\omega}$, shown in Figure~\ref{fig:prism_poset_2} Top. 

\begin{figure}[ht]
  \centering
  \includegraphics[width=.9\textwidth]{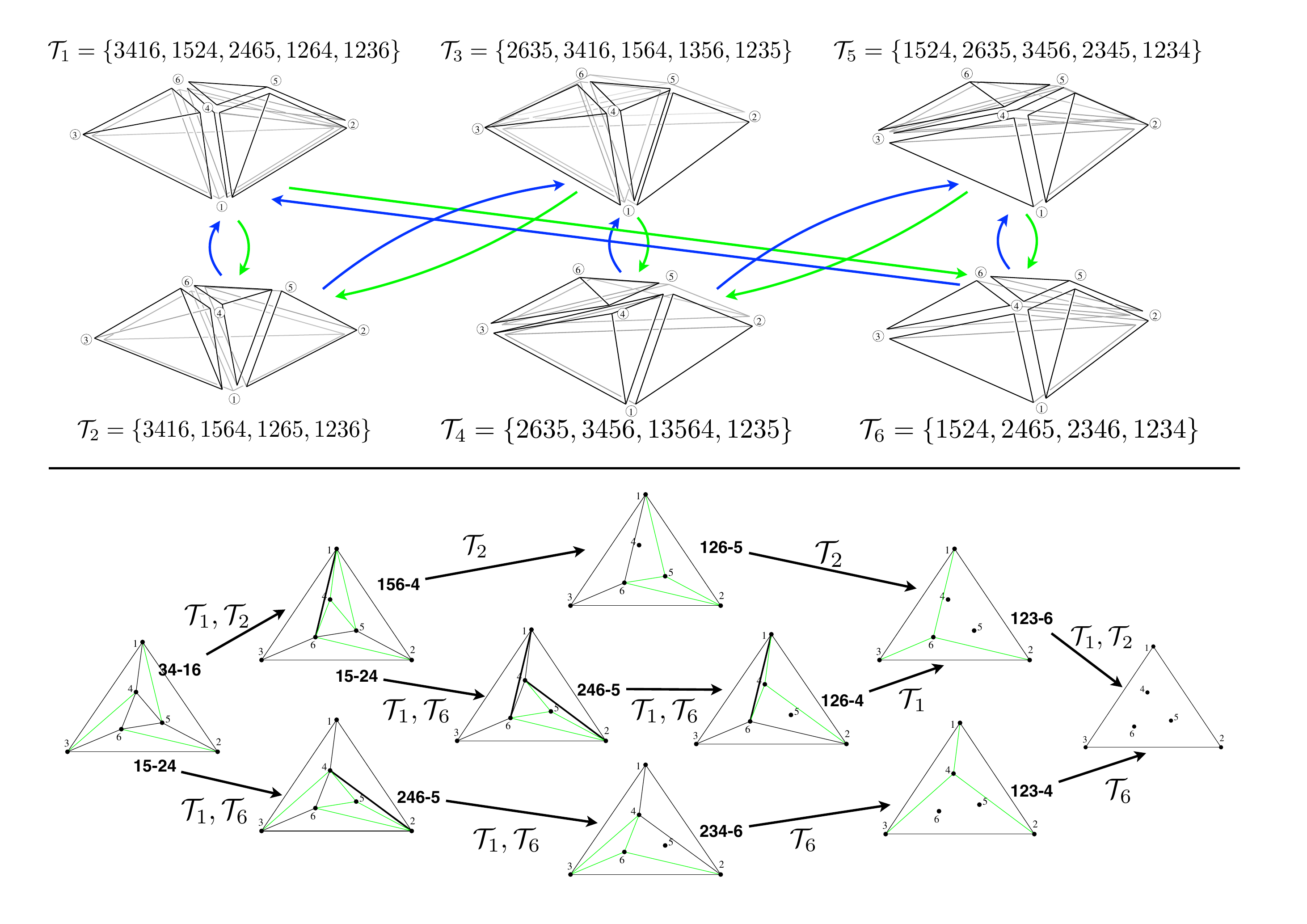}
  \caption{Top: the (indirected) flip graph of the six tetrahedralisations ${\cal T}_1, \ldots, {\cal T}_6$ of ${\bf A}^{\omega}$. A green arrow indicates a $\operatorname{3-2}$ flip and a blue arrow indicates a $\operatorname{2-3}$ flip, respectively.  Bottom: The three sequences of flips corresponding to the tetrahedralisations ${\cal T}_1$, ${\cal T}_2$, and ${\cal T}_6$ are shown, respectively. The 3d flips between ${\cal T}_1$,  ${\cal T}_2$ and ${\cal T}_1$,  ${\cal T}_6$ can be observed. }
  \label{fig:prism_poset_2}
\end{figure}

Note that the lengths of the six tetrahedralisations are either $4$ or $5$. This is due to 3d flips, $\operatorname{2-3}$ or $\operatorname{3-2}$ flips, between these tetrahedralisations. It could be indeed observed from this poset (see Figure~\ref{fig:prism_poset_2} Bottom):

\begin{itemize}
\item If two maximal sequences have the same length, then there is no 3d flip between them.  
\item If two maximal sequences have two elements in common, and their lengths are exactly different by one, then there is a 3d flip between their corresponding tetrahedralisations.
\end{itemize}

By the above observation, we could ``see" that the set of six tetrahedralisations is connected by 3d flips. In other words, the undirected flip graph of ${\bf A}^{\omega}$ is connected. 

Finally, we discuss the existence of the non-regular triangulation ${\cal T}_{nonreg}$ of ${\bf A}$.  
Since there is no path between ${\cal T}_{nonreg}$ and ${\cal T}_{far}$, this shows that the 3d non-convex polyhedron with ${\cal T}_{nonreg}$ and ${\cal T}_{far}$ as its boundary is indecomposable with its own vertices. This links to the well-known Sch\"onhardt polyhedron~\cite{Schonhardt1928}. 

\section{From Monotone Sequences of Directed Flips to Triangulations of Polyhedra}
\label{sec:monotone-to-triang}

Recall a {\it triangulation} of a 3d point set is a 3d simplical complex such that its underlying space is this convex hull of this point set.  
Let ${\bf A}$ be a finite point set in $\mathbb{R}^2$ and let $\omega: {\bf A} \to  \mathbb{R}$ be a height function defined on the vertices of ${\bf A}$.  In this section, we prove the following theorem.

\begin{theorem}~\label{thm:monotone-to-triang}
Any monotone sequence of directed flips which transforms from the regular triangulation into the farthest point regular triangulation of $({\bf A}, \omega)$ (or vice versa) corresponds to a 3d triangulation of ${\bf A}^{\omega}$. 
\end{theorem}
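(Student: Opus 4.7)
I would prove Theorem~\ref{thm:monotone-to-triang} by induction on the length $k$ of the monotone sequence, building the 3d triangulation one tetrahedron at a time. It suffices to treat up-flips (the down-flip case is symmetric by reversing the sequence). Write the sequence as ${\cal T}_{reg} = {\cal T}_0 \to {\cal T}_1 \to \cdots \to {\cal T}_k = {\cal T}_{far}$, and for each step $i \to i+1$ let $t_i$ be the tetrahedron in $\mathbb{R}^3$ whose four vertices in ${\bf A}^{\omega}$ are the lifts of the support of the $(i+1)$-th flip. I also associate to every intermediate triangulation ${\cal T}_i$ of ${\bf A}$ its \emph{lifted surface} ${\cal T}_i^{\omega}$, obtained by lifting each vertex by $\omega$; since ${\cal T}_i$ triangulates $\textrm{conv}({\bf A})$, the surface ${\cal T}_i^{\omega}$ is the graph of a piecewise-linear function over $\textrm{conv}({\bf A})$, and in particular is a topological disk projecting bijectively down to $\textrm{conv}({\bf A})$.

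\textbf{Induction hypothesis.} After $i$ flips, the collection ${\cal C}_i := \{t_0,\dots,t_{i-1}\}$ together with all faces of these tetrahedra forms a 3d simplicial complex whose underlying space is the closed ``slab'' lying above ${\cal T}_0^{\omega}$ and below ${\cal T}_i^{\omega}$, and whose top surface is exactly ${\cal T}_i^{\omega}$. The base case $i=0$ is trivial: the complex is empty and ${\cal T}_0^{\omega} = {\cal T}_{reg}^{\omega}$.

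\textbf{Inductive step.} By definition of an up-flip, the lower faces of $t_i$ appear in ${\cal T}_i$, hence the lifted lower faces of $t_i$ appear in ${\cal T}_i^{\omega}$, while the upper faces of $t_i$ appear in ${\cal T}_{i+1}^{\omega}$. Projecting down, the ``shadow'' of $t_i$ is the convex hull in $\mathbb{R}^2$ of the flip's support, and since $t_i$ sits entirely in the vertical half-space above this lower portion of ${\cal T}_i^{\omega}$ and entirely below the upper portion of ${\cal T}_{i+1}^{\omega}$, the new complex ${\cal C}_{i+1} := {\cal C}_i \cup \{t_i \text{ and its faces}\}$ has underlying space equal to the slab between ${\cal T}_0^{\omega}$ and ${\cal T}_{i+1}^{\omega}$. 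I then verify the three simplicial-complex axioms: closure is automatic; the union property is the slab computation just described; for the intersection property the only nontrivial check is that $t_i$ meets any previously added $t_j$ ($j<i$) only in a common face. This is where vertical monotonicity does the work: interiors of the previously glued tetrahedra lie strictly below ${\cal T}_i^{\omega}$ while the interior of $t_i$ lies strictly above it, so their intersection is contained in ${\cal T}_i^{\omega}$, where both complexes are genuine simplicial complexes sharing the lower faces of $t_i$.

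\textbf{Conclusion and main obstacle.} At $i=k$ the top surface is ${\cal T}_{far}^{\omega}$, which is the upper envelope of $\textrm{conv}({\bf A}^{\omega})$, and ${\cal T}_0^{\omega}$ is the lower envelope, so the slab filled by ${\cal C}_k$ equals $\textrm{conv}({\bf A}^{\omega})$; thus ${\cal C}_k$ is a 3d triangulation of ${\bf A}^{\omega}$. The main technical obstacle is the intersection property in the inductive step: one must argue that $t_i$ cannot ``stick through'' the current upper surface ${\cal T}_i^{\omega}$ into the already-triangulated slab, even when ${\cal T}_i$ is non-regular and even when the flip is a $1$-$3$ or $3$-$1$ type (so that the shadow of $t_i$ contains an interior vertex on ${\cal T}_i^{\omega}$). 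The clean way to handle this is to use the fact that ${\cal T}_i^{\omega}$ is the graph of a PL function, together with the observation that the lifted lower faces of $t_i$ are precisely the restriction of this graph over the 2d support of the flip, so $t_i$ lies in the half-space above ${\cal T}_i^{\omega}$ by construction of an up-flip.
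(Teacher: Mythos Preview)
Your proposal is correct and follows essentially the same route as the paper: both arguments glue the tetrahedra $t_i$ one at a time along the rising family of lifted surfaces ${\cal T}_i^{\omega}$, use vertical monotonicity of the characteristic sections to get the union and intersection properties, and conclude that the final slab is $\textrm{conv}({\bf A}^{\omega})$. Your inductive formulation is more explicit than the paper's sketch---in particular you isolate the intersection check $t_i \cap t_j$ via separation by the graph ${\cal T}_i^{\omega}$, whereas the paper dispatches this in one line---but the underlying idea is identical.
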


The theorem in one dimension lower is illustrated in Figure~\ref{fig:HST1-Rambau} Right. A nice example shows that a monotone sequence of up-flips between a set of 1-dimensional triangulations produces a triangulation of 2-dimensional polygon. 

\subsection{Characteristic sections and lifted triangulations}

We first introduce a very useful geometric connection between a triangulation in the plane and another one in $\mathbb{R}^3$. 

Let ${\cal T}$ be a triangulation of ${\bf A}$ in $\mathbb{R}^2$, 
 and let $\omega : {\bf A} \to \mathbb{R}$ be a height function defined on vertices of ${\bf A}$.  
We can define a piecewise-linear function by
\[
  g_{\omega, {\cal T}} : \textrm{conv}({\bf A}) \to \mathbb{R}^3, \; g_{\omega, {\cal T}}({\bf p}) := \omega(\bf p), \; \forall {\bf p} \in {\bf A},
\]
and extended affinely on each simplex $\sigma \in {\cal T}$.  
This function is called the {\it characteristic section} of ${\cal T}$~\cite[Chap 5, Definition 5.2.12, page 229]{TriangBook}.  

Geometrically, $g_{\omega, {\cal T}}$ is a piecewise flat surface in $\mathbb{R}^3$ whose canonical projection (which deletes the last coordinate of any point ${\bf x} \in g_{\omega, {\cal T}}$) is the planar domain $\textrm{conv}({\bf A})$. Equivalently, it lifts the planar domain $\textrm{conv}({\bf A})$ into a piecewise flat surface in $\mathbb{R}^3$. 

We define the {\it lifted triangulation} of ${\cal T}$ via $\omega$ as a two-dimensional triangulation ${\cal T}^{\omega}$ embedded in $\mathbb{R}^3$, such that:
\begin{itemize}
\item[(1)] The vertex set of ${\cal T}^{\omega}$ is the lifted vertex set of ${\cal T}$. It is a subset of ${\bf A}^{\omega}$.
\item[(2)] There is a bijection between the sets of simplices of ${\cal T}^{\omega}$ and ${\cal T}$, i.e., every simplex of ${\cal T}$ is mapped uniquely to a simplex of ${\cal T}^{\omega}$. 
\item[(3)]  The union of all simplices of ${\cal T}^{\omega}$ is equal to the characteristic section $g_{\omega, {\cal T}}$. 
\end{itemize}
It is easy to verify that the lifted triangulation ${\cal T}^{\omega}$ is a triangulation of the characteristic section $g_{\omega, {\cal T}}$. Figure~\ref{fig:LawsonflipAlg_lift} shows two planar triangulations and their corresponding lifted triangulations in $\mathbb{R}^3$. 

Every triangulation of ${\bf A}$ corresponds to a lifted triangulation via $\omega$.
In particular, the regular and the farthest point regular triangulations of $({\bf A}, \omega)$ correspond to the two extreme lifted triangulations of ${\bf A}$, denoted as ${\cal R}^{\omega}$ and ${\cal F}^{\omega}$, which are the sets of lower and upper faces of $\textrm{conv}({\bf A}^{\omega})$, respectively.  All other lifted triangulations of ${\bf A}$ must lie between ${\cal R}^{\omega}$ and ${\cal F}^{\omega}$. 


We say that the characteristic section $g_{\omega, {\cal T}_1}$ {\it lies vertically below} $g_{\omega, {\cal T}_2}$ 
if for any two points ${\bf x}_1 = (x, y, z_1) \in g_{\omega, {\cal T}_1}$ and ${\bf x}_2 = (x, y, z_2) \in g_{\omega, {\cal T}_2}$, the inequality $z_1 \le z_2$ holds. This means $g_{\omega, {\cal T}_1}$ is nowhere higher than $g_{\omega, {\cal T}_2}$.

The relation ``characteristic section lies vertically below" defines a partial order on the set of triangulations of ${\bf A}$ as well as the set of lifted triangulations of ${\bf A}$~\cite[Chap 6.1.3, page 284]{TriangBook}. 
Let ${\cal T}_1$ and ${\cal T}_2$ be two triangulations of ${\bf A}$.  
We say ${\cal T}_1$ is less than ${\cal T}_2$, denoted as ${\cal T}_1 \le_2 {\cal T}_2$ if the characteristic sections $g_{\omega, {\cal T}_1}$ lies vertically below $g_{\omega, {\cal T}_2}$.  The same, we say that 
the lifted triangulation ${\cal T}_1^{\omega}$ {\it lies lower than} ${\cal T}_2^{\omega}$, denoted as ${\cal T}_1^{\omega} \le_2 {\cal T}_2^{\omega}$.  

With the concept of lifted triangulations,
let us re-look at what a directed flip (up-flip or down-flip) in a triangulation ${\cal T}$ of ${\bf A}$ changes the characteristic sections in $\mathbb{R}^3$. 
Let ${\cal T}_1$ be a triangulation of ${\bf A}$.  An up-flip in ${\cal T}_1$ is equivalent to exchange the lower faces by upper faces of a tetrahedron $t_{\bf a'b'c'd'}$ in $\textrm{conv}({\bf A}^{\omega})$, where $\textrm{conv}\{{\bf a, b, c, d}\}$ is the support of this flip. As a result, the lifted triangulation ${\cal T}_1^{\omega}$ is ``deformed" into the lifted triangulation ${\cal T}_2^{\omega}$ which lies vertically above ${\cal T}_1^{\omega}$. 

This fact gives a 3d view of a monotone sequence of directed flips which transforms the regular into the farthest point regular triangulation of $({\bf A}, \omega)$. 
It ``deforms" the ${\cal R}^{\omega}$ gradually into the ${\cal F}^{\omega}$ by passing through a sequence of lifted triangulations of ${\bf A}$,
\[
 {\cal S} := \{{\cal R}^{\omega} = {\cal T}_0^{\omega}, {\cal T}_1^{\omega} \ldots, {\cal T}_m^{\omega} = {\cal F}^{\omega}\}.
\]
Every two adjacent lifted triangulations ${\cal T}_1^{\omega}$ and ${\cal T}_2^{\omega}$ in this sequence is connected by an up-flip in the planar triangulation.  This sequence of lifted triangulations have the following properties:

\begin{itemize}
\item ${\cal S}$ is ordered, i.e., ${\cal T}_i^{\omega} \le_2 {\cal T}_{i+1}^{\omega}, \; \forall i = 0, \ldots, m-1$.
\item Let $\textrm{Vol}({\cal R}^{\omega}, {\cal T}_i^{\omega})$ denote the volume between ${\cal R}^{\omega}$ and ${\cal T}_i^{\omega}$. Then
\[
   \textrm{Vol}({\cal R}^{\omega}, {\cal T}_{i+1}^{\omega}) - \textrm{Vol}({\cal R}^{\omega}, {\cal T}_{i}^{\omega}) = t_i,
\]
where $t_i$ is a tetrahedron in $\textrm{conv}({\bf A}^{\omega})$. The faces of $t_i$ are exactly the triangles involved in the up-flip which transforms ${\cal T}_i$ into ${\cal T}_{i+1}$. This shows that the set of lifted triangulation ${\cal S}$ corresponds to a set of tetrahedra in $\textrm{conv}({\bf A}^{\omega})$,
\[
    T := \{t_0, t_1, \ldots, t_{m-1}\}.
\]

\end{itemize}

With the above definitions and facts, we can prove our theorem. 

\subsection{The proof}

We prove the above theorem by showing that the set of tetrahedra related to a monotone sequence of directed flips fulfils all of the three properties of a triangulation, which are the (i) Closure Property, (ii) Intersection Property, and (iii) Union Property. 

\begin{proof}
Let $V$, $E$, $F$ be the set of all vertices, edges, and triangles of ${\cal S}$.  
Let ${\cal T}_{\cal S} = \{V, E, F, T\}$. We want to show that ${\cal T}_{\cal S}$ fulfils all the three properties of a triangulation, i.e., it is a triangulation of ${\bf A}^{\omega}$. 

First, every $t_i, i = 0, \ldots, m-1$, connects two lifted triangulations ${\cal T}_{i}^{\omega}$ and ${\cal T}_{i+1}^{\omega}$. 
The fact that the sequence of lifted triangulations sweeps through the volume of $\textrm{conv}({\bf A}^{\omega})$ implies that there is no hole left in the volume. Therefore $\cup T = \textrm{conv}({\bf A}^{\omega})$, i.e., the Union Property (iii) holds. 

Second, it is obvious that the faces of every tetrahedron $t_i$ are in $F$, and the edges of $t_i$ are in $E$, and the vertices of $t_i$ are in $V$.  Moreover, $E$ and $V$ are the subsets of edges and vertices of faces of $F$. 
It is sufficient to show that every face in $F$ belongs to at least one tetrahedron $t_i \in T$. 
By the general position assumption there is no identical triangle in ${\cal R}$ and ${\cal F}$ of $({\bf A}, \omega)$. 
Then every face $f \in F$ is either the lower or the upper face of at least one tetrahedron in $T$. 
These facts together ensure that the Closure Property (i) holds for ${\cal T}_{\cal S}$. 

Last, the Intersection Property (ii) holds in ${\cal T}_{\cal S}$ by the fact that every two lifted triangulations are either disjoint or share at their common triangles in $F$. 

This proves that ${\cal T}_{\cal S}$ is a triangulation of $\textrm{conv}({\bf A}^{\omega})$. 
\end{proof}

Although the above theorem consider the biggest convex polytope whose boundaries are the ${\cal R}^{\omega}$ and ${\cal F}^{\omega}$ of ${\bf A}^{\omega}$. Our proof shows that it holds for non-convex polyhedra as well, as long as the boundary of a polyhedron can be 
represented by two planar triangulations which are canonical projections of their lifted triangulations, respectively.  

\begin{cor}~\label{cor:triang}
Let ${\cal T}_u$ and ${\cal T}_v$ be two triangulations of ${\bf A}$ in the plane. Assume the lifted triangulation ${\cal T}_u^{\omega}$ lies strictly lower than ${\cal T}_v^{\omega}$. Let $P_{uv}$ be the 3d polyhedron whose boundary is ${\cal T}_u^{\omega} \cup {\cal T}_v^{\omega}$, then any monotone sequence of directed flips which transform ${\cal T}_u$ into ${\cal T}_v$ corresponds to a triangulation of $P_{uv}$.
\end{cor}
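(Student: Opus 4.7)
The plan is to reduce the corollary to the argument already established for Theorem~\ref{thm:monotone-to-triang} by observing that nothing in that proof really used the fact that the lower and upper surfaces were the lower and upper faces of $\textrm{conv}({\bf A}^{\omega})$; the same sweep-by-flips argument goes through whenever the two boundary surfaces are characteristic sections with one lying vertically below the other. So I would begin by setting up exactly the same data as in the theorem: a monotone (say up-flip) sequence ${\cal T}_u = {\cal T}_0, {\cal T}_1, \ldots, {\cal T}_m = {\cal T}_v$ of planar triangulations, the corresponding lifted triangulations ${\cal T}_u^{\omega} = {\cal T}_0^{\omega}, \ldots, {\cal T}_m^{\omega} = {\cal T}_v^{\omega}$, and the sequence of tetrahedra $T = \{t_0, \ldots, t_{m-1}\}$ where each $t_i$ has as its lower faces the triangles of ${\cal T}_i^{\omega}$ involved in the flip and as its upper faces those of ${\cal T}_{i+1}^{\omega}$.

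Next I would verify that each intermediate lifted triangulation lies inside $P_{uv}$. By the monotonicity of the sequence (only up-flips), ${\cal T}_i^{\omega} \le_2 {\cal T}_{i+1}^{\omega}$ for all $i$, hence ${\cal T}_u^{\omega} \le_2 {\cal T}_i^{\omega} \le_2 {\cal T}_v^{\omega}$. Since the boundary edges of ${\cal T}_u$ and ${\cal T}_v$ on $\partial\,\textrm{conv}({\bf A})$ are forced to be the edges of the boundary polygon of $\textrm{conv}({\bf A})$, they lift to the same vertical boundary in both ${\cal T}_u^{\omega}$ and ${\cal T}_v^{\omega}$, and the same is true for every intermediate ${\cal T}_i^{\omega}$. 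Thus the entire sweep is contained in $P_{uv}$ and the tetrahedra $t_i$ lie in $P_{uv}$.

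I would then repeat the three checks from the proof of Theorem~\ref{thm:monotone-to-triang} verbatim for $P_{uv}$ in place of $\textrm{conv}({\bf A}^{\omega})$. For the Union Property, the difference $\textrm{Vol}({\cal T}_u^{\omega}, {\cal T}_{i+1}^{\omega}) - \textrm{Vol}({\cal T}_u^{\omega}, {\cal T}_i^{\omega})$ is precisely the volume of $t_i$, and the sweep exhausts the volume lying between ${\cal T}_u^{\omega}$ and ${\cal T}_v^{\omega}$, i.e., exactly $P_{uv}$. For the Closure Property, every face in $F$ is by construction a triangle of some ${\cal T}_i^{\omega}$, and each is then either the lower or the upper face of at least one tetrahedron in $T$ (general position rules out the edge case of a face surviving from ${\cal T}_u^{\omega}$ all the way to ${\cal T}_v^{\omega}$ only if it also bounds a $t_i$; boundary-of-$P_{uv}$ faces are covered by exactly one tetrahedron, interior faces by two). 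The Intersection Property follows as in the theorem, since two consecutive lifted triangulations meet only along the shared subcomplex outside the support of the flip, and non-consecutive ones are separated by the sweep.

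The main obstacle I anticipate is purely bookkeeping: confirming that the vertical boundary of $P_{uv}$ is actually identical to the one swept out by the boundaries of the ${\cal T}_i^{\omega}$, so that no $t_i$ pokes out of $P_{uv}$ and no gap is left near $\partial P_{uv}$. This is not deep — it follows from the fact that a directed flip has its support inside $\textrm{conv}({\bf A})$ and never touches the boundary polygon — but it is the only place where the argument needs a word beyond citing the theorem. Once this is dispatched, the corollary is just the theorem with $({\cal R}^{\omega}, {\cal F}^{\omega})$ replaced by $({\cal T}_u^{\omega}, {\cal T}_v^{\omega})$.
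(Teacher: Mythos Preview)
Your proposal is correct and matches the paper's approach exactly: the paper does not give a separate proof of the corollary but simply remarks that the proof of Theorem~\ref{thm:monotone-to-triang} nowhere used that the bounding surfaces were $\mathcal{R}^{\omega}$ and $\mathcal{F}^{\omega}$, so the same argument applies to any pair $\mathcal{T}_u^{\omega}, \mathcal{T}_v^{\omega}$. Your additional bookkeeping about the shared vertical boundary is a reasonable clarification, though note that a flip's support may touch (but never alter) the boundary polygon of $\textrm{conv}(\mathbf{A})$; what matters is that the boundary edges are fixed throughout the sweep, which you correctly use.
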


Based on this fact, we present a triangulation algorithm for a special class of 3d non-convex polyhedra in Section~\ref{sec:triang3d}. 

\section{From Triangulations of Polyhedra to Monotone Sequences of Directed Flips}
\label{sec:triang-to-monotone}

In this section, we consider the reverse of the Theorem~\ref{thm:monotone-to-triang}, which is the following {\bf question}: {\it given a triangulation of a 3d polyhedron, does it corresponds to a monotone sequence of directed flips?}

So far, it is only proven for a special case when ${\bf A}$ is the vertex set of a cyclic polytope~\cite[Section 6.1.5]{TriangBook}. The essential part of its proof is to show that between any two triangulations of ${\bf A}$ in $\mathbb{R}^d$, 
there exists a triangulation of the polyhedron in $\mathbb{R}^{d+1}$ 
bounded by these two lifted triangulations without new vertex. 
It is achieved by an explicit (very technical) construction of such triangulation in $\mathbb{R}^{d+1}$. 
However, there are difficulties to directly generalise this result to an arbitrary point set in $\mathbb{R}^2$. 
The main difficulty is that there are 3d non-convex polyhedron which can not be tetrahedralised without additional vertices, such as the Sch\"onhardt polyhedron~\cite{Schonhardt1928}. 

Let ${\bf A}$ be a finite point set in $\mathbb{R}^2$, and $\omega$ be a heigh function defined on the vertices of ${\bf A}$. Let ${\cal T}_u$ and ${\cal T}_v$ be two triangulations of ${\bf A}$.  Without loss of generality, assume the lifted triangulation ${\cal T}_u^{\omega}$ lies vertically below ${\cal T}_v^{\omega}$, and assume that the characteristic sections of ${\cal T}_u^{\omega}$ and ${\cal T}_v^{\omega}$ only intersect at their boundary. Let $P_{uv}$ be the 3d polyhedron whose lower faces are  ${\cal T}_u^{\omega}$ and upper faces are ${\cal T}_v^{\omega}$. 
Assume a triangulation ${\cal T}_{uv}$ of ${P}_{uv}$ is given.  
Moreover, the vertex set of ${\cal T}_{uv}$ is exactly the vertex set of ${P}_{uv}$, i.e., ${\cal T}_{uv}$ contains no additional vertex.  
We describe an algorithm to find a monotone sequence of directed flips between ${\cal T}_u$ and ${\cal T}_v$. 

\subsection{A tetrahedra-driven flipping algorithm}

\paragraph{Notations} We will simultaneously work in two and three dimensions. 
We will relate a vertex in $\mathbb{R}^3$ and its canonical projection in a plane at the same time.  
Since the vertices in ${\cal T}_{uv}$ are in one-to-one correspondence with the vertices in ${\cal T}_u$ and  ${\cal T}_v$, to simplify the notations, we do not distinguish them between $\mathbb{R}^2$ and $\mathbb{R}^3$. For example, we denote $t_{\bf abcd}$ be a tetrahedron in ${\cal T}_{uv}$ with vertices ${\bf a, b, c, d} \in \mathbb{R}^3$. At the same time, we also denote $f_{\bf abc}$ a triangle in ${\cal T}_u$ and ${\cal T}_v$ with vertices ${\bf a, b, c} \in \mathbb{R}^2$.  

\paragraph{Removability and flippability}
We assume that the vertex set ${\bf A}$ is in {\it general position} in the plane which means, no three vertices of ${\bf A}$ lie on a common line.  So the only flips we need are: 2-2, 1-3, and 3-1 flip. 

The idea of our algorithm is to incrementally remove all tetrahedra of ${\cal T}_{uv}$, one at a time. 
Every tetrahedron in ${\cal T}_{uv}$ corresponds a directed flip, which changes its lower and upper faces (or vice versa). 
However, not every tetrahedron is ``removable" at any time. 
Let $t_{\bf abcd}$ be a tetrahedron in ${\cal T}_{uv}$. Let ${\cal T}$ be a triangulation produced by the corresponding sequence of directed flips.  We say that a face $f_{\bf abc}$ of $t_{\bf abcd}$ is {\it exposed} in ${\cal T}$ if $f_{\bf abc}$ is a triangle in ${\cal T}$. 
We say that a tetrahedron $t_{\bf abcd} \in {\cal T}_{uv}$ is {\it removable} (from ${\cal T}_{uv}$) if it in one of the following three cases.  


\begin{itemize}
\item[(1)] One face of $t_{\bf abcd}$ is exposed in ${\cal T}$. Let the face be $f_{\bf abc}$. 
If ${\bf d}$ lies in the interior of the triangle $f_{\bf abc}$, then $t_{\bf abcd}$ is {\it removable}. It corresponds to a 1-3 flip which inserts the vertex ${\bf d}$ into ${\cal T}$, see Figure~\ref{fig:flip2d_flippable} (1).

\item[(2)] Two faces of $t_{\bf abcd}$ are exposed in ${\cal T}$. Let the two faces be $f_{\bf abc}$ and $f_{\bf abd}$, respectively. If the vertices ${\bf a,b,c,d}$ form a strictly convex quadrilateral in ${\cal T}$, 
then $t_{\bf abcd}$ is {\it removable}. It corresponds to a 2-2 flip which replaces edge $e_{\bf ab}$ by $e_{\bf cd}$ in ${\cal T}$,  see Figure~\ref{fig:flip2d_flippable} (2).

\item[(3)] Three faces of $t_{\bf abcd}$ are exposed in ${\cal T}$. Let the three faces be $f_{\bf abd}$, $f_{\bf bcd}$, and $f_{\bf cad}$, then $t_{\bf abcd}$ is {\it removable}. It corresponds to a 3-1 flip which removes the vertex ${\bf d}$ from ${\cal T}$, see Figure~\ref{fig:flip2d_flippable} (3).
\end{itemize}


\begin{figure}[ht]
  \centering
  \begin{tabular}{c c c}
   \includegraphics[width=0.32\textwidth]{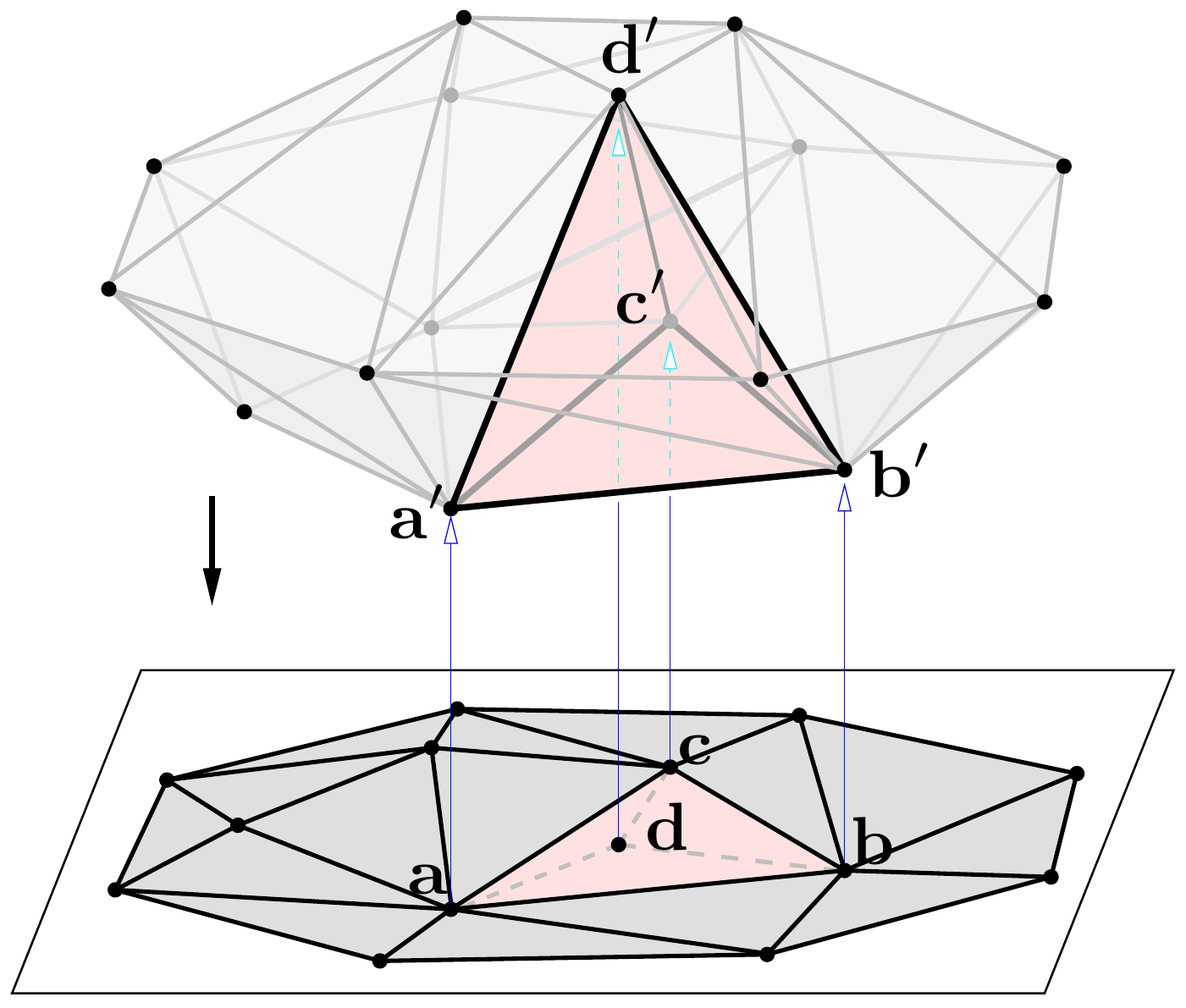} &
   \includegraphics[width=0.32\textwidth]{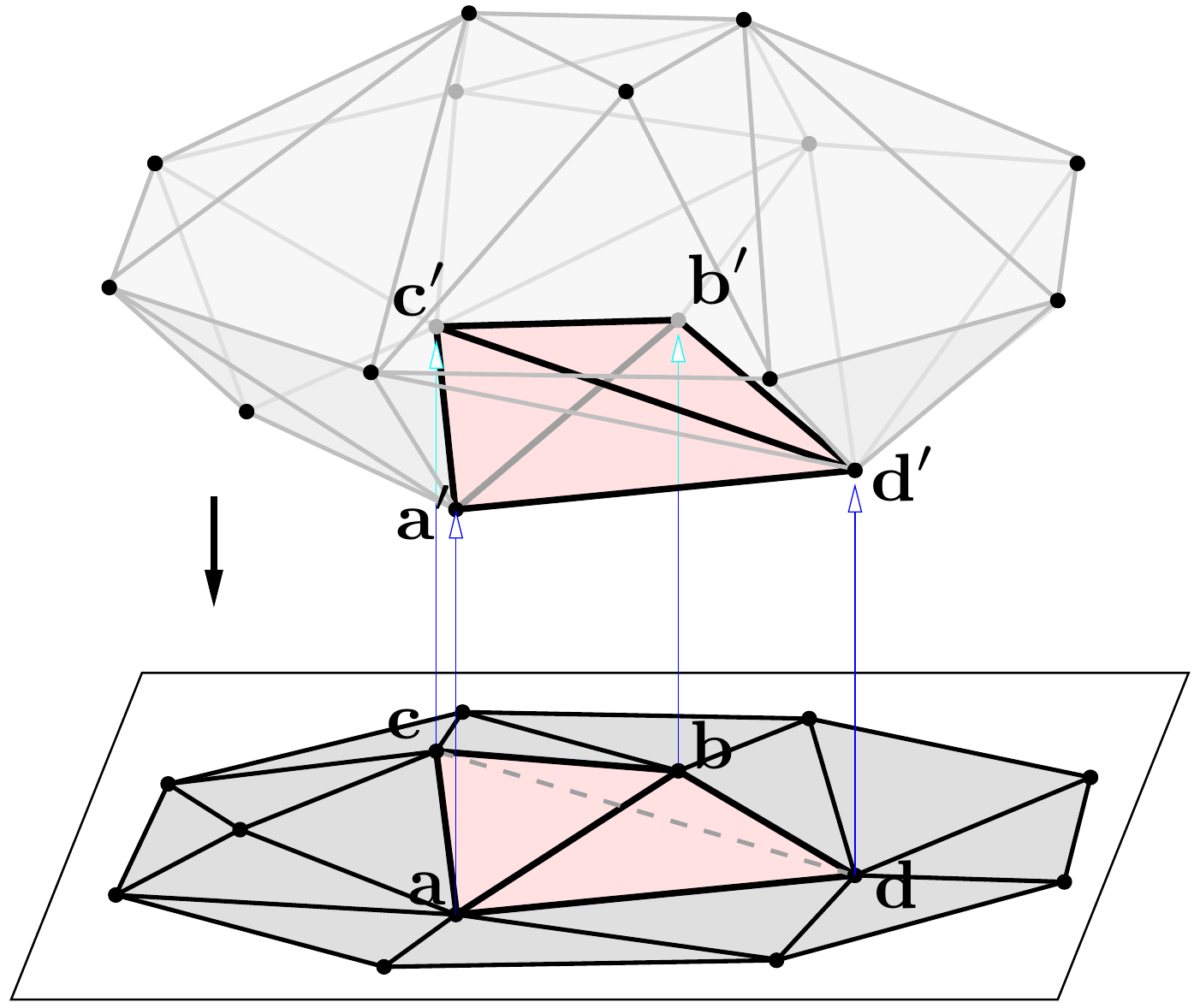} &
   \includegraphics[width=0.32\textwidth]{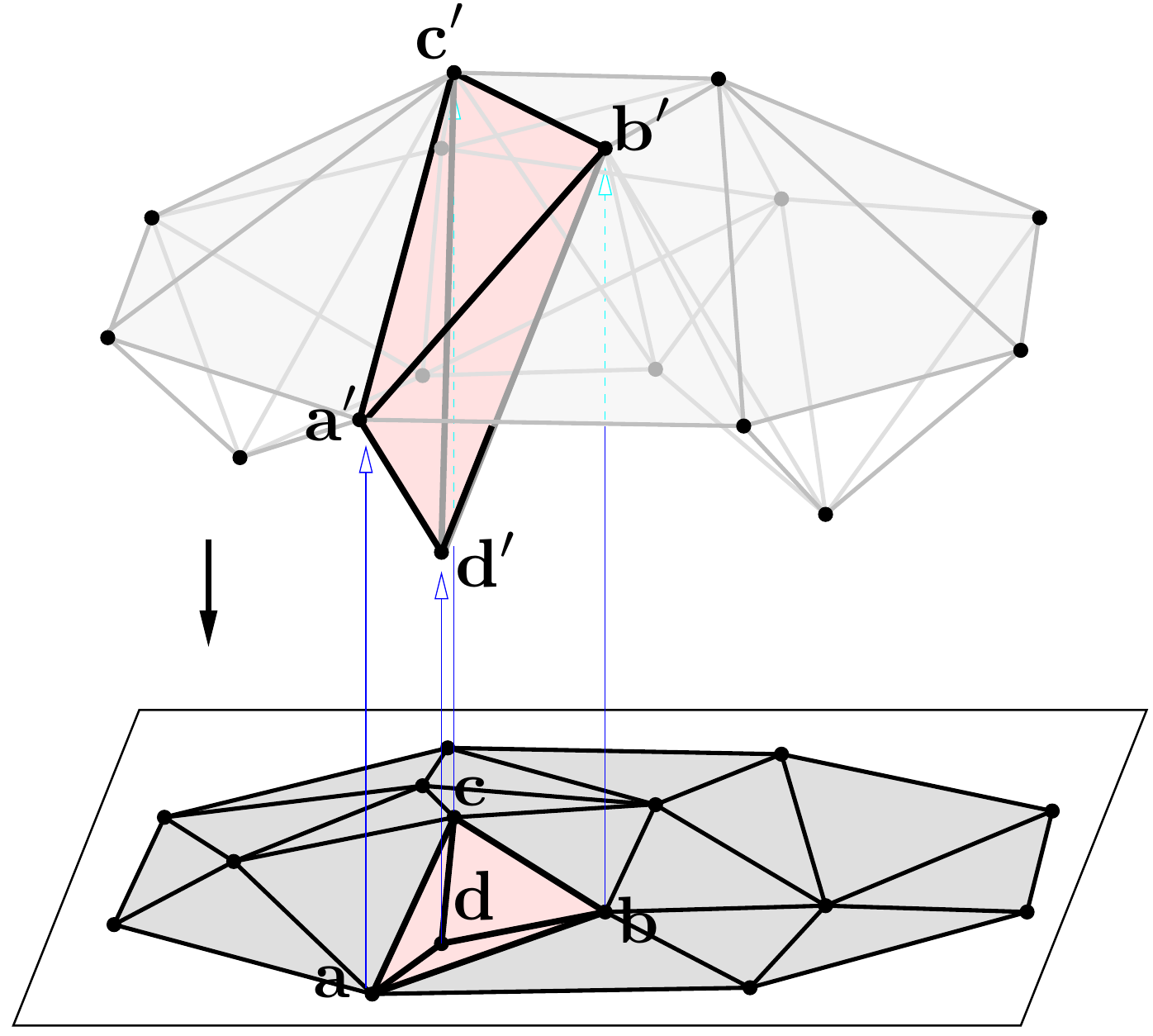} \\ 
   (1) & (2) & (3)
  \end{tabular}
\caption{A removable tetrahedron $t_{\bf abcd}$. (1) The face $f_{\bf abc}$ is exposed in ${\cal T}$, $t_{\bf abcd}$ corresponds to a 1-3 flip. (2) Two faces $f_{\bf abc}$ and $f_{\bf abd}$ are exposed in ${\cal T}$, $t_{\bf abcd}$ corresponds to a 2-2 flip. (3) Three faces $f_{\bf abd}, f_{\bf bcd}, f_{\bf acd}$ are exposed in ${\cal T}$, $t_{\bf abcd}$ corresponds to a 3-1 flip.}
  \label{fig:flip2d_flippable}
\end{figure}

The cases when $t_{\bf abcd}$ is not removable will be discussed in the analysis of the algorithm in the next subsection. 

\paragraph{The algorithm}
The tetrahedron-driven flip algorithm is described in Figure~\ref{fig:flip-algo}. It takes a tetrahedralisation ${\cal T}_{uv}$ and a triangulation ${\cal T}$ (${\cal T}$ is either ${\cal T}_u$ or ${\cal T}_v$). 
This algorithm initialises two lists, ${\cal Q}$, which contains all tetrahedra in ${\cal T}_{uv}$, and ${\cal L}$, which is empty at the beginning. Then it runs into a loop until ${\cal Q}$ is empty. At each iteration, it searches a removable tetrahedron $t_{\bf abcd} \in {\cal Q}$. If such a tetrahedron exists, the triangulation ${\cal T}$ is updated by performing the corresponding flip induced by $t_{\bf abcd}$. 
${\cal Q}$ is reduced by one tetrahedron and it is added to ${\cal L}$. 

\begin{figure}[ht]
  \centering
  \begin{tabular}{l l}
  {\textsc Input}: & ${\cal T}_{uv}$ and ${\cal T}_u$;\\
  {\textsc Output}: & ${\cal T}_v$ and ${\cal L}$ (a sorted array of tetrahedra);
  \end{tabular}\\
  \begin{tabular}{l l}
  1 & Let ${\cal T} := {\cal T}_u$ , ${\cal Q} := \{t_{\bf abcd} \; | t_{\bf abcd} \in {\cal T}_{uv}\}$, ${\cal L} := \emptyset$; \\
  2 & {\bf while} ${\cal Q} \neq \emptyset$ {\bf do}\\
  3 & $\quad$ {\bf if} a removable $t_{\bf abcd}$ exists {\bf then}\\ 
  4 & $\quad \quad$ Update ${\cal Q} := {\cal Q} \setminus \{t_{\bf abcd}\}$; \\
  5 & $\quad \quad$ Perform the corresponding flip in ${\cal T}$;\\
  6 & $\quad \quad$ Update ${\cal L}: = {\cal L} \cup \{t_{\bf abcd}\}$;\\
  7 & $\quad$ {\bf endif}\\
  8 & {\bf endwhile}\\
  \end{tabular}
  \caption{A tetrahedralisation-driven flip algorithm.}
  \label{fig:flip-algo}
\end{figure}

If this algorithm terminates, it outputs a list ${\cal L}$ of sorted tetrahedra of ${\cal T}_{uv}$. Since each tetrahedron in ${\cal L}$ corresponds to a directed flip,  it is also a monotone sequence of directed flips which transforms ${\cal T}_u$ into ${\cal T}_v$.

\subsection{Analysis of termination}


The crucial question in this algorithm is: does there exist a removable tetrahedron in ${\cal T}_{uv}$ (line 3)? If all tetrahedra are not removable, then it gets stuck.

We will use the {\it in-front/behind} relation~\cite{Edelsbrunner90acy} to relate two tetrahedra in $\mathbb{R}^3$. 
Without loss of generality, we assume that the lifted triangulation ${\cal T}_u^{\omega}$ lies vertically below ${\cal T}_v^{\omega}$. 
We use ${\cal T}_{u}$ as the input, so ${\cal T}_v$ is the target (output).  By placing our viewpoint at ${\bf x} = (0, 0, -\infty)$, i.e., we're looking from bottom to top. 
We say that a tetrahedron $t_{\bf pqrs}$ is {\it in-front} of a tetrahedron $t_{\bf abcd}$, denoted as $t_{\bf pqrs} \prec t_{\bf abcd}$, if there exists a ray starting from ${\bf x}$ that intersects first $t_{\bf pqrs}$ then $t_{\bf abcd}$. It is the same to say that $t_{\bf abcd}$ is {\it behind} $t_{\bf pqrs}$, denoted as $t_{\bf abcd} \succ t_{\bf pqrs}$.  

Our algorithm initialises the planar triangulation ${\cal T} := {\cal T}_u$.   
At each time, this algorithm searches a removable tetrahedron $t_{\bf abcd} \in {\cal T}_{uv}$. 
The search starts from an arbitrary tetrahedron $t_{\bf abcd} \in {\cal T}_{uv}$ such that it has at least one exposed face $f_{\bf abc} \in {\cal T}$. If $t_{\bf abcd}$ is removable, then the algorithm continues. 
Assume $t_{\bf abcd}$ is not removable. We show that it is always possible to find another tetrahedron $t_{\bf pqrs} \in {\cal T}_{uv}$ such that it is in front of $t_{\bf abcd}$, i.e., $t_{\bf pqrs} \prec t_{\bf abcd}$. There are following cases to be considered.


\begin{itemize}

\item[{(i)}] Two faces of $t_{\bf abcd}$ are exposed in ${\cal T}$. 
Without loss of generality, let the two faces are $f_{\bf abc}$ and $f_{\bf abd}$ which share the common edge $e_{\bf ab} \in {\cal T}$. Since $t_{\bf abcd}$ is not removable, then ${\bf a}, {\bf b}, {\bf c}, {\bf d}$ form a non-convex quadrilateral in the plane. Let ${\bf a}$ be the non-convex vertex of this quadrilateral. 
Then the face $f_{\bf cda}$ of $t_{\bf abcd}$ is missing in ${\cal T}$, see Figure~\ref{fig:flip2d_unflippable} (1).  
Then there must exist another triangle $f_{\bf pqr} \in {\cal T}$, such that $f_{\bf pqr} \not\in \{f_{\bf abc}, f_{\bf abd}\}$ and $f_{\bf pqr} \cap f_{\bf cda} \neq \emptyset$. 
Hence the tetrahedron $t_{\bf pqrs} \prec t_{\bf abcd}$. 

\begin{figure}
  \centering
  \begin{tabular}{cc}
  \includegraphics[width=0.45\textwidth]{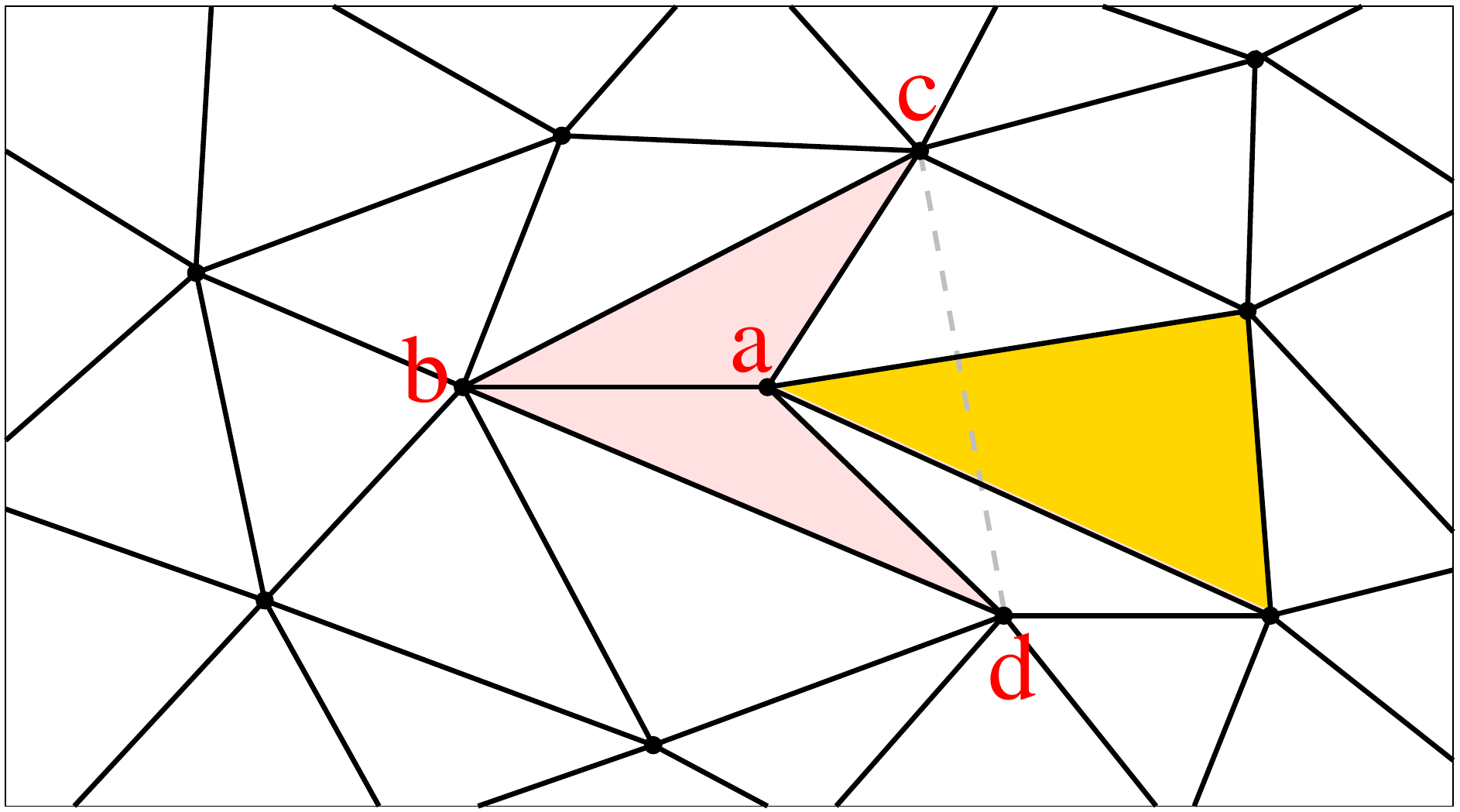} &
  \includegraphics[width=0.45\textwidth]{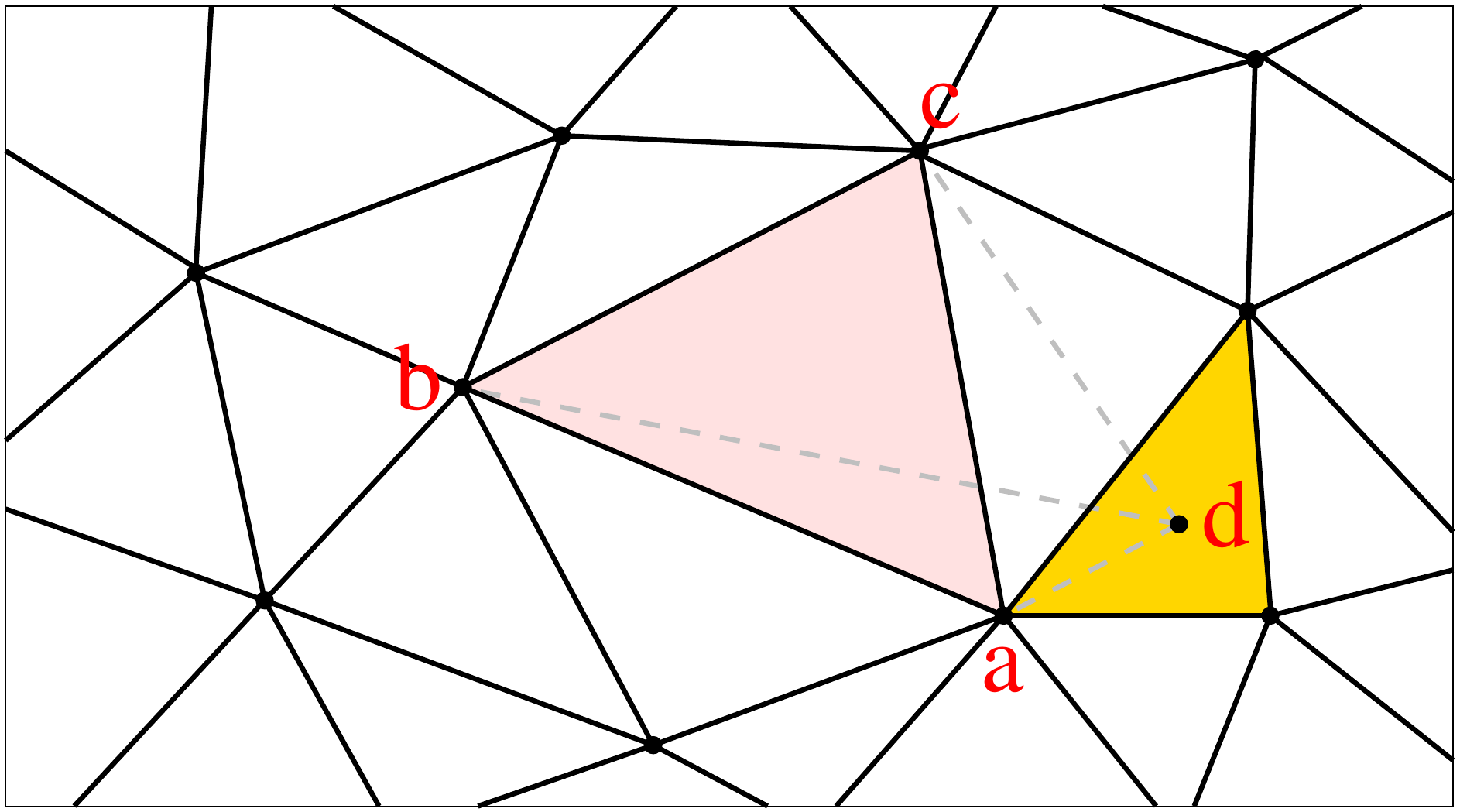}\\
  (1) & (2)\\
  \includegraphics[width=0.45\textwidth]{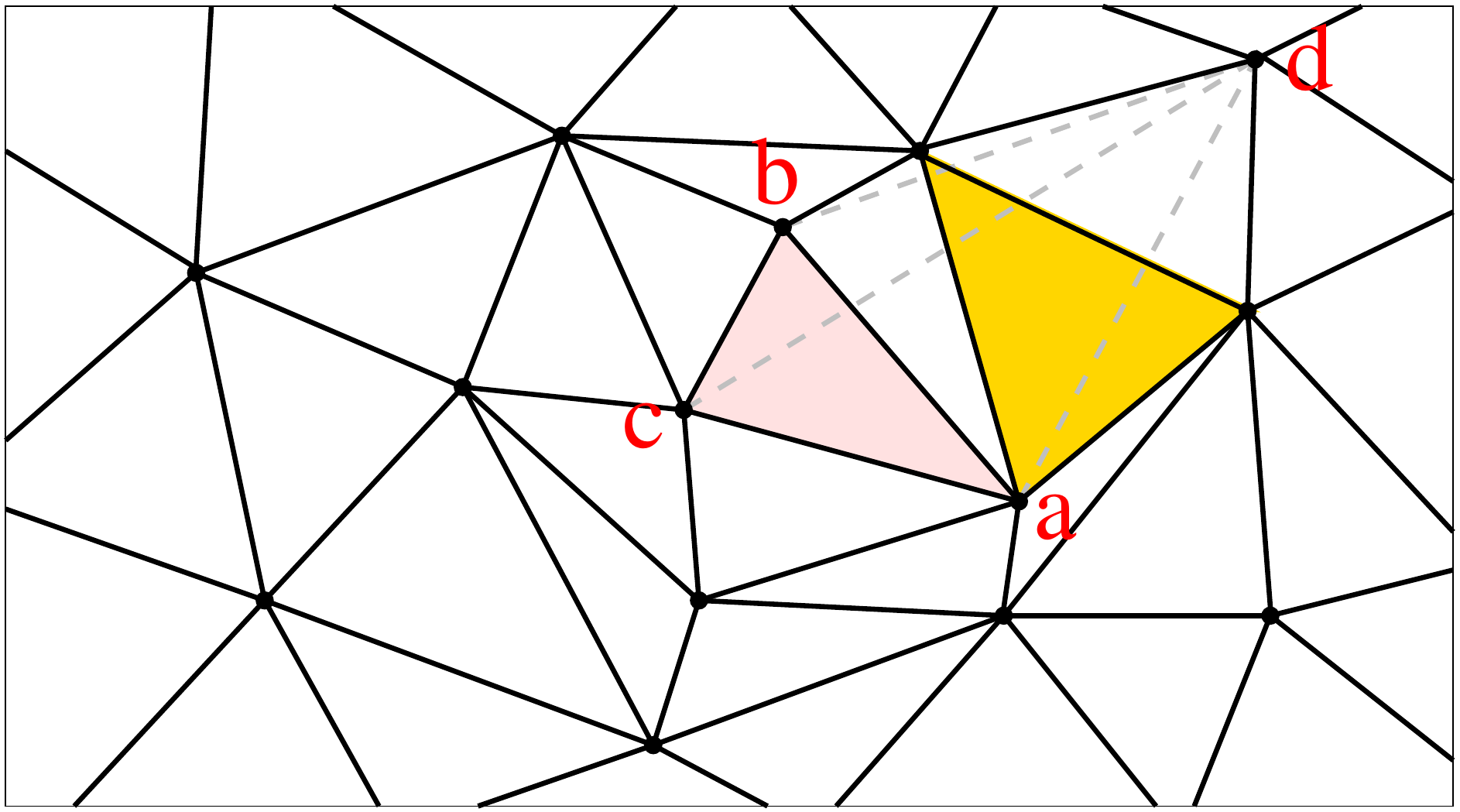} &
  \includegraphics[width=0.45\textwidth]{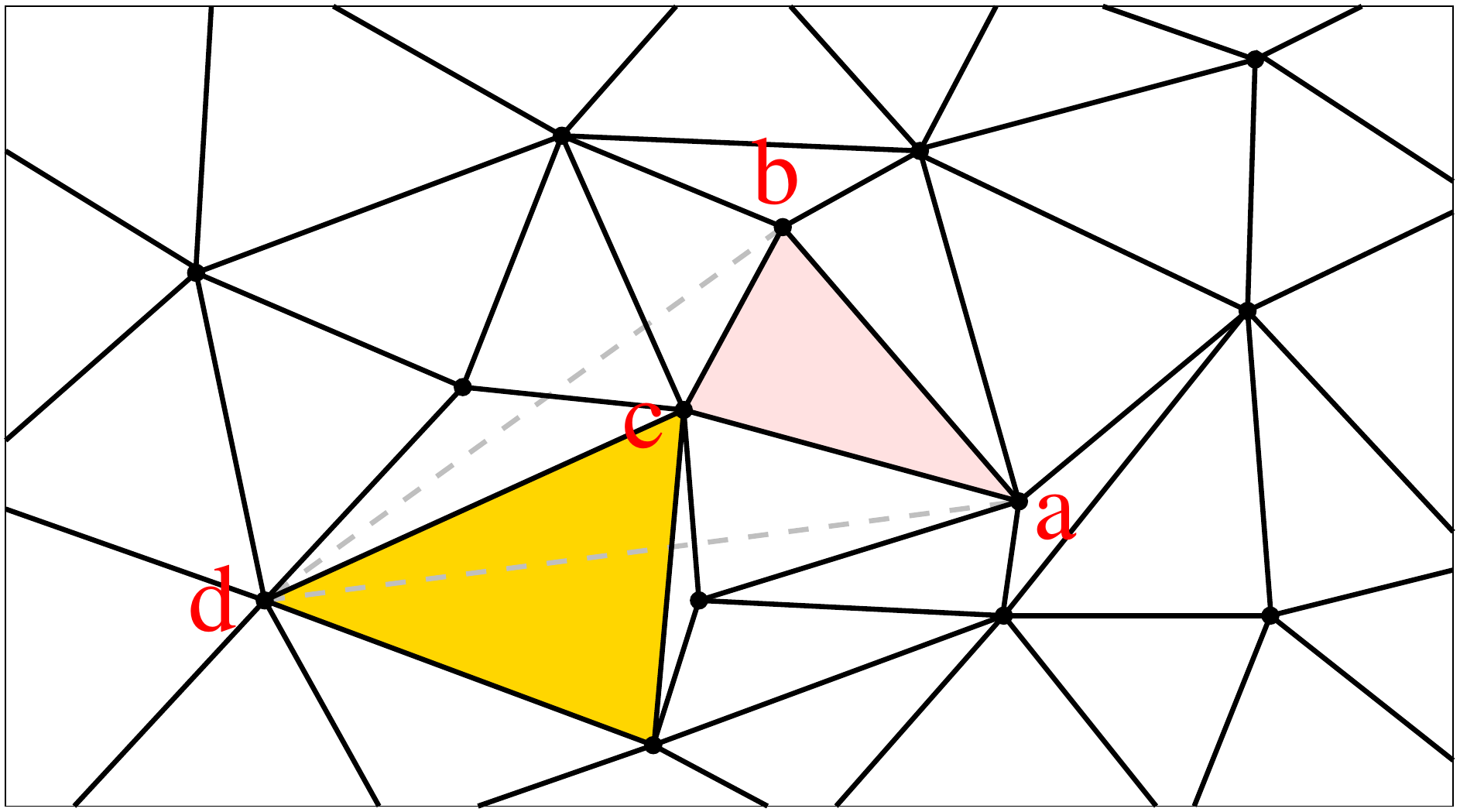}\\
  (3) & (4)\\
  \end{tabular}
  \caption{All cases when a tetrahedron $t_{\bf abcd}$ is not removable. 
  In (1) neither a 2-2 flip nor a 3-1 flip is possible on edge $e_{\bf ab}$. 
  In (2), a 1-3 flip is not possible since the new vertex ${\bf d}$ does not inside the triangle $f_{\bf abc}$.  
  In (3) nd (4) a 1-3 flip is not possible since the vertex ${\bf d}$ already exists in ${\cal T}$. 
  }
  \label{fig:flip2d_unflippable}
\end{figure}

\item[{(ii)}] Only the face $f_{\bf abc}$ of $t_{\bf abcd}$ is exposed in ${\cal T}$, and ${\bf d} \not\in {\cal T}$. Since $t_{\bf abcd}$ is not removable, then ${\bf d}$ lies in a triangle $f_{\bf pqr} \in {\cal T}$, such that $f_{\bf pqr} \neq f_{\bf abc}$, see Figure~\ref{fig:flip2d_unflippable} (2). Hence the tetrahedron $t_{\bf pqrs} \prec t_{\bf abcd}$. 

\item[{(iii)}] Only the face $f_{\bf abc}$ of $t_{\bf abcd}$ is exposed in ${\cal T}$, and ${\bf d} \in {\cal T}$. Then we can show that the face $f_{\bf abd} \not\in {\cal T}$.  We distinguish two cases:
  \begin{itemize}
  \item[(a)] The edge $e_{\bf cd}$ intersects $e_{\bf ab}$, see Figure~\ref{fig:flip2d_unflippable} (3).
    If $f_{\bf abd}$ exists, it must belong to another tetrahedron $t_{\bf abde}$, where ${\bf e} \neq {\bf c}$. Then either the edge $e_{\bf ae}$ or $e_{\bf be}$ must go through the interior of $t_{\bf abcd}$ -- a contradiction that ${\cal T}_{uv}$ is a tetrahedralisation. Then there must exist a face $f_{\bf pqr} \in {\cal T}$, and $f_{\bf pqr} \cap f_{\bf cda} \neq \emptyset$, see Figure~\ref{fig:flip2d_unflippable} (3).

  \item[(b)] The edge $e_{\bf cd}$ does not intersect $e_{\bf ab}$, see Figure~\ref{fig:flip2d_unflippable} (4).
  Then the vertex ${\bf c}$ blocks the appearance of $f_{\bf abd}$. Therefore any face except $f_{\bf abc}$ in ${\cal T}$ which contains ${\bf c}$ as a vertex can be the face  $f_{\bf pqr} \in {\cal T}$, and $f_{\bf pqr} \cap f_{\bf cda} \neq \emptyset$, see Figure~\ref{fig:flip2d_unflippable} (4).
  \end{itemize}

In both cases, there exists a tetrahedron $t_{\bf pqrs} \prec t_{\bf abcd}$. 

\end{itemize}

From the above analysis, we see that starting from an arbitrary tetrahedron $t_{\bf abcd} \in {\cal T}_{uv}$, it is either a removable tetrahedron, or there exists another tetrahedron $t_{\bf pqrs} \in {\cal T}_{uv}$ and  $t_{\bf pqrs} \prec t_{\bf abcd}$. This process can be repeated.  This generates a sequence of tetrahedra, which are:
\[
t_{\bf abcd} = t_{{\bf a}_0{\bf b}_0{\bf c}_0{\bf d}_0} \succ t_{{\bf a}_1{\bf b}_1{\bf c}_1{\bf d}_1} \succ \ldots \succ t_{{\bf a}_m{\bf b}_m{\bf c}_m{\bf d}_m} = t_{\bf pqrs}
\]
such that each tetrahedron $t_{{\bf a}_i{\bf b}_i{\bf c}_i{\bf d}_i} \in {\cal T}_{uv}$, $i = 0, \ldots, m$, is not removable. 
Therefore this algorithm will terminate if the above sequence does not form a cycle, i.e, $t_{{\bf a}_m{\bf b}_m{\bf c}_m{\bf d}_m} = t_{{\bf a}_0{\bf b}_0{\bf c}_0{\bf d}_0}$. 

\begin{theorem}~\label{thm:triang-to-monotone}
This algorithm terminates as long as ${\cal T}_{uv}$ contains no cycle of tetrahedra from the viewpoint ${\bf x} = (0,0,-\infty)$ (or ${\bf x} = (0,0,+\infty)$). 
\end{theorem}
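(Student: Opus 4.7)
The plan is to complete the argument already developed in the case analysis immediately preceding the statement. The preceding paragraphs have effectively reduced termination to two remaining claims: (a) at every iteration with ${\cal Q}\neq\emptyset$, at least one tetrahedron of ${\cal Q}$ has an exposed face in the current triangulation ${\cal T}$; and (b) starting from such a tetrahedron, the descending chain produced by cases (i)--(iii) is finite under the acyclicity hypothesis. Combining (a) and (b) will show that every iteration finds a removable tetrahedron, after which finite termination is immediate from the fact that $|{\cal Q}|$ strictly decreases.

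For (a), I would formulate and maintain the following loop invariant: at any moment, the lifted triangulation ${\cal T}^{\omega}$ coincides with the lower boundary of the unprocessed region $|{\cal Q}|\cup {\cal T}_v^{\omega}$. At initialisation this holds since ${\cal T}={\cal T}_u$ and $|{\cal Q}|=P_{uv}$, so the lower boundary is ${\cal T}_u^{\omega}$. After a removable tetrahedron $t_{\bf abcd}$ is processed, the lower-face subset of $t_{\bf abcd}$ is replaced by the upper-face subset in ${\cal T}^{\omega}$, which is exactly the change produced by detaching $t_{\bf abcd}$ from the lower boundary of $|{\cal Q}|\cup {\cal T}_v^{\omega}$; so the invariant propagates. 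Given the invariant, whenever ${\cal Q}\neq\emptyset$ the interface between ${\cal T}^{\omega}$ and $|{\cal Q}|$ is non-empty, which exhibits at least one tetrahedron of ${\cal Q}$ with an exposed face in ${\cal T}$.

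For (b), pick any such $t_{\bf abcd}\in{\cal Q}$. If it is removable we are done. Otherwise cases (i)--(iii) produce another $t_{\bf pqrs}\in{\cal Q}$ with $t_{\bf pqrs}\prec_{\bf x} t_{\bf abcd}$ for ${\bf x}=(0,0,-\infty)$; moreover $t_{\bf pqrs}$ itself has an exposed face, since the piercing ray from ${\bf x}$ enters $t_{\bf pqrs}$ through a triangle of ${\cal T}$. Iterating yields a chain
\[
t_{\bf abcd}=t_0\succ_{\bf x} t_1\succ_{\bf x} t_2\succ_{\bf x}\cdots
\]
of non-removable tetrahedra in ${\cal Q}\subseteq{\cal T}_{uv}$, each of which has an exposed face. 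The acyclicity hypothesis on ${\cal T}_{uv}$ says $\prec_{\bf x}$ is a strict partial order on ${\cal T}_{uv}$, so no tetrahedron can recur in the chain. Because ${\cal T}_{uv}$ is finite, the chain must stop at some $t_m$; this $t_m$ has no successor in front of it in ${\cal Q}$, so none of the cases (i)--(iii) applies, and therefore $t_m$ is removable.

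Putting the pieces together: each iteration removes one tetrahedron from ${\cal Q}$ and performs the corresponding flip in ${\cal T}$, so the loop terminates after exactly $|{\cal T}_{uv}|$ iterations. The symmetric statement for the viewpoint $(0,0,+\infty)$ follows from the same argument after swapping the roles of ${\cal T}_u$ and ${\cal T}_v$ and reversing the direction of the flips. I expect the main obstacle to be a clean verification of the invariant in (a): one has to check that a 1-3, 2-2, or 3-1 flip driven by a removable $t_{\bf abcd}$ genuinely corresponds, in the lifted picture, to peeling $t_{\bf abcd}$ off the lower boundary of $|{\cal Q}|\cup{\cal T}_v^{\omega}$, and that this is well-defined even when multiple faces of $t_{\bf abcd}$ are simultaneously exposed; the chain argument (b) is then an essentially formal consequence of acyclicity and finiteness.
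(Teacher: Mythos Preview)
Your proposal follows essentially the same route as the paper: the chain argument in your part~(b) is exactly the paper's proof, which concludes directly from the case analysis that the $\prec_{\bf x}$-descending sequence of non-removable tetrahedra terminates whenever it cannot cycle. Your part~(a), maintaining the invariant that ${\cal T}^{\omega}$ is the lower boundary of the unprocessed region, is a point the paper leaves entirely implicit (it simply asserts that the search ``starts from an arbitrary tetrahedron $t_{\bf abcd}\in{\cal T}_{uv}$ such that it has at least one exposed face'' without justifying existence), so your version is in fact more careful than the original on this point.
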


By the Acyclic Theorem~\cite{Edelsbrunner90acy}, a regular triangulation contains no cycle of faces from any viewpoint. An immediate result is that this algorithm will terminate if ${\cal T}_{uv}$ is a regular tetrahedralisation.  

Our analysis shows that only cycles of tetrahedra from the given viewpoint ${\bf x}$ will cause  problem.  
More generally, we can consider a parallel projection of ${\cal T}_{uv}$ through any line direction $L$ (a one-dimensional affine subspace) in $\mathbb{R}^3$. We will also get two planar triangulations ${\cal T}_{u, L}$ and ${\cal T}_{v, L}$ in two planes, one lies in front of ${\cal T}_{uv}$, and one lies behind ${\cal T}_{uv}$ from a viewpoint on $L$.  
We can re-define the height function $\omega$ on vertices of ${\cal T}_{u, L}$ and ${\cal T}_{v, L}$ with respect to $L$ as following: 
Let ${\bf p} = (p_x, p_y, p_z) \in \mathbb{R}^3$ be a vertex in  ${\cal T}_{u, L}$ or ${\cal T}_{v, L}$, we project the vector $(p_x, p_y, p_z)$ onto the line $L$ and let the $\omega({\bf p})$ be the length of the projected vector. 
Our algorithm will transform ${\cal T}_{u, L}$ into ${\cal T}_{v, L}$ or vice versa by a sequence of monotone flips as long as there is no cycle in ${\cal T}_{uv}$ viewed by any viewpoint on $L$, an example is shown in Figure~\ref{fig:cyc3_proj_both}.

We thus have the following theorem about the termination of this algorithm.

\begin{cor}~\label{thm:triang-to-monotone-viewpoint}
This algorithm terminates as long as the 3d trianagulation ${\cal T}_{uv}$ of ${\bf A}^{\omega}$ in $\mathbb{R}^3$ contains no cycle of tetrahedra with respect to a viewpoint on the direction of the height function $\omega$. 
\end{cor}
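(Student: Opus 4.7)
The plan is to reduce this corollary to Theorem~\ref{thm:triang-to-monotone} by a coordinate change. The key observation is that the termination analysis preceding Theorem~\ref{thm:triang-to-monotone} never exploited any specific property of the viewpoint $(0,0,-\infty)$: the case analysis (i)--(iii) is purely combinatorial in the front/back envelope triangulations, and the in-front/behind relation $\prec$ used to exhibit a strictly preceding tetrahedron is defined for any viewpoint. Consequently, after a suitable change of coordinates aligning $L$ with the vertical axis, the argument of Theorem~\ref{thm:triang-to-monotone} applies verbatim.

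First I would apply an affine change of coordinates in $\mathbb{R}^3$ that sends the direction $L$ to the $z$-axis. Tetrahedra map to tetrahedra, face and vertex incidences are preserved, and the in-front/behind relation with respect to viewpoints at infinity along $L$ becomes the in-front/behind relation with respect to $(0,0,\pm\infty)$. In particular, ${\cal T}_{uv}$ is acyclic viewed from $L$ if and only if the transformed triangulation is acyclic viewed vertically.

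Next I would identify, after this coordinate change, the triangulations ${\cal T}_{u,L}$ and ${\cal T}_{v,L}$ defined in the paragraph preceding the statement with the lifted triangulations of the transformed ${\bf A}^{\omega}$. The redefined height function $\omega$ (obtained by projecting each vertex onto $L$) is exactly the third coordinate in the new frame, and so its characteristic sections coincide with the lower and upper envelopes of the transformed ${\cal T}_{uv}$. The hypotheses of Theorem~\ref{thm:triang-to-monotone} are therefore met, and the tetrahedron-driven flipping algorithm runs on the transformed data identically to before, terminating by that theorem.

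The main subtle step will be ensuring that $L$ is in sufficiently general position relative to ${\bf A}^{\omega}$, so that no two vertices share the same $L$-coordinate and no tetrahedron degenerates under projection perpendicular to $L$. Under this mild genericity assumption, the redefined $\omega$ is injective on vertices, the front and back triangulations ${\cal T}_{u,L}$ and ${\cal T}_{v,L}$ are well-defined planar triangulations, and the reduction is complete. If $L$ happens to lie in a non-generic direction, one may perturb it infinitesimally without creating a cycle (since acyclicity is an open condition on the direction), and then invoke the generic case.
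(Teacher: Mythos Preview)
Your proposal is correct and follows the same approach as the paper: the paper's argument for this corollary is contained in the paragraph immediately preceding it, which observes that one may project along any line direction $L$, redefine $\omega$ accordingly, and reuse the termination analysis verbatim. Your coordinate-change formulation is simply a cleaner way to say this, and your attention to genericity of $L$ (which the paper does not address) is a welcome addition rather than a deviation.
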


We also see another nice fact of the 3d triangulations produced by monotone sequence of directed flips. 

\begin{cor}~\label{thm:triang-to-monotone-acyclic}
Let ${\cal T}$ be a 3d triangulation of a polyhedron $P$ in $\mathbb{R}^3$. 
If ${\cal T}$ corresponds to a monotone sequence of directed flips between two triangulations of a point set ${\bf A}$ in $\mathbb{R}^2$ with a height function $\omega : {\bf A} \to \mathbb{R}$, then the set of tetrahedra of ${\cal T}$ is acyclic with respect to a viewpoint on the direction of the height function $\omega$. 
\end{cor}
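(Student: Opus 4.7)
The plan is to use the sequence of directed flips itself as a certificate of acyclicity, by showing that the order in which tetrahedra appear in the sequence is a linear extension of the in-front/behind relation from a viewpoint on the $\omega$-axis. Without loss of generality I would assume the sequence consists of up-flips, giving a chain of lifted triangulations $\mathcal{T}_u^\omega = \mathcal{T}_0^\omega \le_2 \mathcal{T}_1^\omega \le_2 \cdots \le_2 \mathcal{T}_m^\omega = \mathcal{T}_v^\omega$, where the $k$-th flip contributes the tetrahedron $t_k$ sandwiched between $\mathcal{T}_k^\omega$ and $\mathcal{T}_{k+1}^\omega$ (cf.\ the proof of Theorem~\ref{thm:monotone-to-triang}). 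I would place the viewpoint at $\mathbf{x} = (0,0,-\infty)$, which is the direction of $\omega$.

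The main step would be to prove the following key observation: if the tetrahedra $t_i$ and $t_j$ share a common vertical line, then along that line the $z$-intervals they occupy are ordered according to their indices. This is because at any point $(x_0, y_0)$ in the common projection, the heights of the characteristic sections satisfy $h_0 \le h_1 \le \cdots \le h_m$ by the relation $\le_2$, and whenever $(x_0,y_0)$ lies in the support of the $k$-th flip, the vertical segment cut out of $t_k$ is exactly $\{(x_0, y_0, z) : h_k \le z \le h_{k+1}\}$. Consequently $t_i \prec_{\mathbf{x}} t_j$, i.e., there is a vertical ray from $\mathbf{x}$ piercing $t_i$ before $t_j$, forces $i < j$.

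From this observation acyclicity follows immediately: the binary relation $\prec_{\mathbf{x}}$ on tetrahedra is contained in the strict total order induced by the position in the flip sequence, and a total order admits no cycles. The symmetric case of a down-flip sequence is handled by reversing the chain and the viewpoint to $(0,0,+\infty)$. The main obstacle I expect is purely bookkeeping: making the key observation rigorous requires the general-position assumption to ensure that a generic vertical line $(x_0, y_0)$ is not tangent to any tetrahedron of $\mathcal{T}$ and that the monotone stacking $h_0 \le h_1 \le \cdots \le h_m$ is realised on its projection. Once this generic case is established the acyclicity statement extends to all rays by a standard perturbation argument, since a cycle in $\prec_{\mathbf{x}}$ is witnessed by finitely many rays and would persist under a small perturbation.
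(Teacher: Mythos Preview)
Your proposal is correct and is essentially the argument the paper has in mind. The paper states this corollary without a separate proof, treating it as immediate from the preceding analysis; the substance needed is exactly the vertical stacking of characteristic sections set up in Section~\ref{sec:monotone-to-triang} (the proof of Theorem~\ref{thm:monotone-to-triang}), which you invoke correctly: at any point $(x_0,y_0)$ the heights $h_0\le h_1\le\cdots\le h_m$ are monotone, so the tetrahedron $t_k$ occupies the slab $[h_k,h_{k+1}]$ and the flip index is a linear extension of $\prec_{\mathbf x}$. Your closing remark on general position and perturbation is slightly over-cautious---once a witnessing ray for $t_i\prec_{\mathbf x}t_j$ exists, you may shift it to pass through the interiors of both projected supports, where the slab argument applies cleanly---but this does not affect correctness.
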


Note that the 3d triangulation ${\cal T}$ in the above Corollary is not necessarily a regular triangulation.

\subsection{A cycle consists of three tetrahedra}

Our analysis of this algorithm shows, ${\cal T}_{uv}$ may be non-regular and the algorithm still terminates as long as there is no cycle from the chosen viewpoint which produces ${\cal T}_u$ and ${\cal T}_v$. It is thus interesting to study and understand such (non-regular) tetrahedralisations containing cycles (from some viewpoints).  

The simplest non-regular tetrahedralisation contains $7$ vertices. 
It is first appear in~\cite{Joe1989}. 
It is the simplest tetrahedralisation which will cause the 3d Lawson's flip algorithm fails~\cite{Joe1989}.
A construction of this example is given in the book~\cite[Chap 3, Example 3.6.15, page 139]{TriangBook}. 

\begin{figure}
  \centering
  \includegraphics[width=0.7\textwidth]{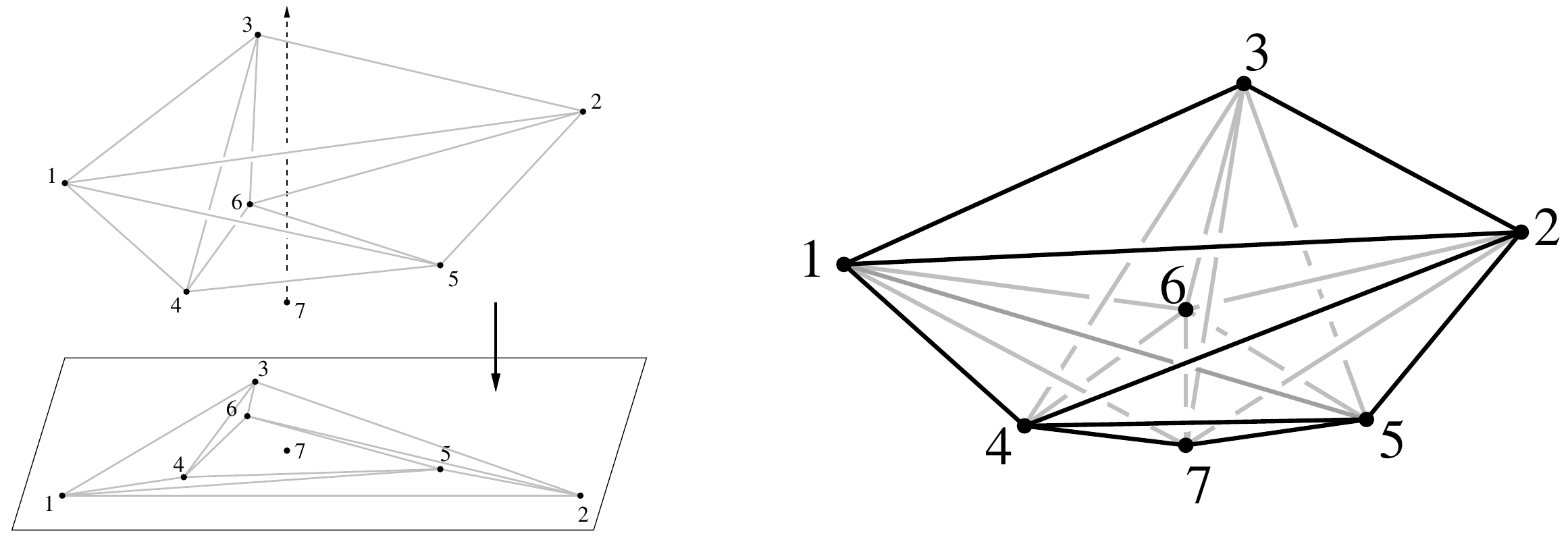}
\caption{Left: The point set of this tetrahedralisation contains the point set of a Sch\"onhardt polyhedron with an additional vertex lie outside this polyhedron and is visible by all vertices of it from the interior of this polyhedron. Right: the tetrahedralisation of the left point set consists of $10$ tetrahedra. }
\label{fig:cyc3_all_tets} 
\end{figure}

The basic fact is that this tetrahedralisation contains the boundary triangulation of the Sch\"onhardt polyhedron. 
It is necessary to contain an additional vertex which does not belong to the Sch\"onhardt polyhedron. 
In particular, this additional vertex is visible by all vertices of the Sch\"onhardt polyhedron through the bottom face, see Figure~\ref{fig:cyc3_all_tets} Left. This property ensures that a tetrahedralisation containing this Sch\"onhardt polyhedron exists. 
Figure~\ref{fig:cyc3_all_tets} Right shows the the tetrahedralisation, denoted as ${\cal T}_{uv}$. There are $10$ tetrahedra in ${\cal T}_{uv}$. Three  tetrahedra are highlighted by different color. 
They form a cycle if the viewpoint is chosen along the $z$ axis. 
This shows that ${\cal T}_{uv}$ is a non-regular triangulation.  
However, these tetrahedra do not contain cycle if the viewpoint is chosen along the $x$ axis, see Figure~\ref{fig:cyc3_proj_both}.  

\begin{figure}[ht]
  \centering
  \includegraphics[width=0.7\textwidth]{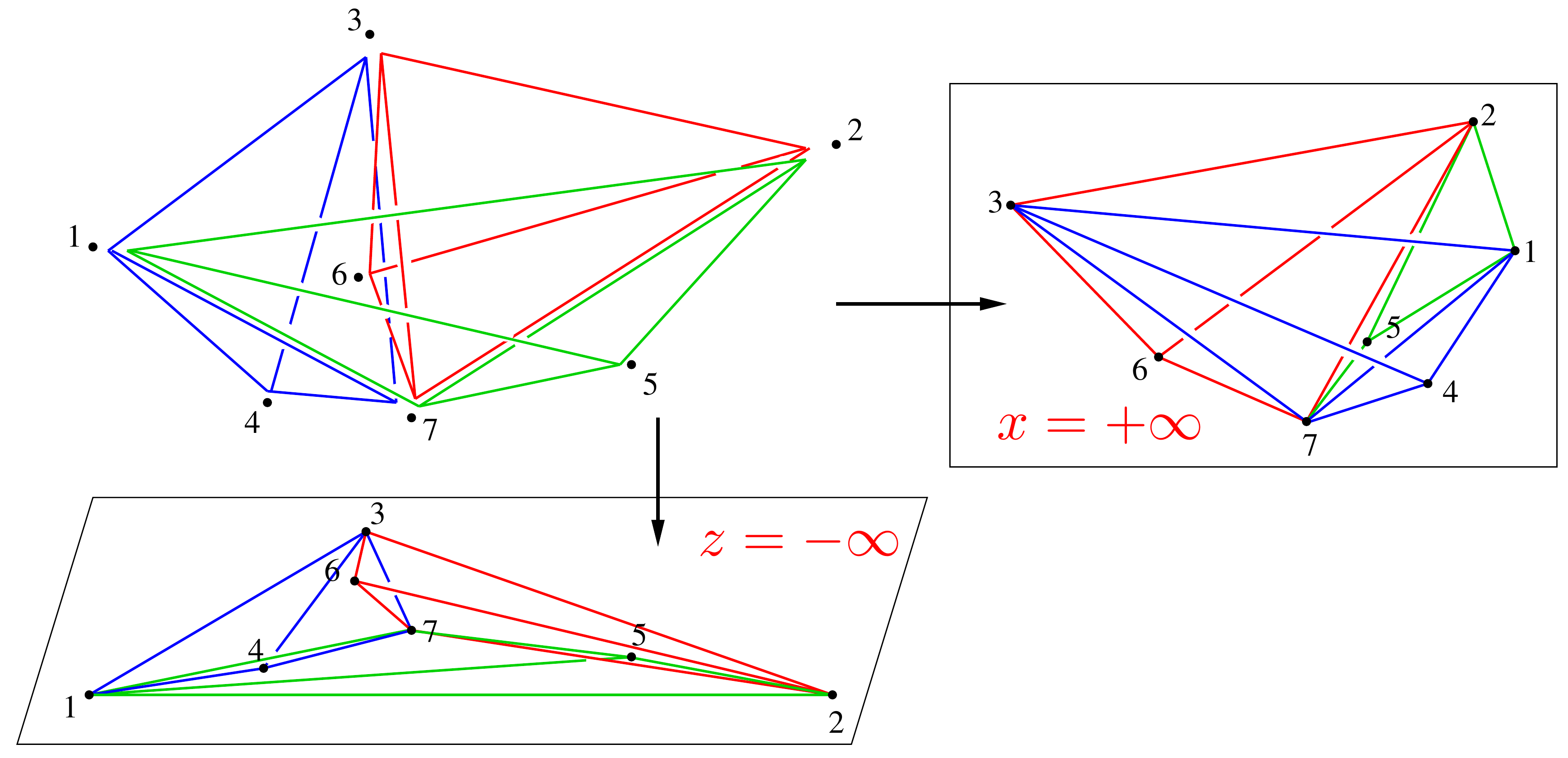}
\caption{Two orthogonal projections of the three tetrahedra in Figure~\ref{fig:cyc3_all_tets}. In particular, the projection along the $z$-direction contains a cycle $t_{1374} \prec_z t_{1275} \prec_z t_{2376} \prec_z t_{1374}$, while along the $x$-direction does not.}
\label{fig:cyc3_proj_both} 
\end{figure}

Let ${\cal T}_{u,z}$ and ${\cal T}_{v,z}$ be the two traingulaitons obtained by the orthogonal projections of ${\cal T}_{uv}$ along the $z$-axis. Likewise, let ${\cal T}_{u,x}$ and ${\cal T}_{v,x}$ be the two traingulaitons obtained by the orthogonal projections of ${\cal T}_{uv}$ along the $x$-axis, see Figure~\ref{fig:cyc3_flip_both}.  Then the flip algorithm will fail to transform ${\cal T}_{u,z}$ to ${\cal T}_{v,z}$, or vice versa. However, it will succeed to transform ${\cal T}_{u,x}$ to ${\cal T}_{v,x}$, or vice versa.

\begin{figure}[ht]
  \centering
  \includegraphics[width=0.7\textwidth]{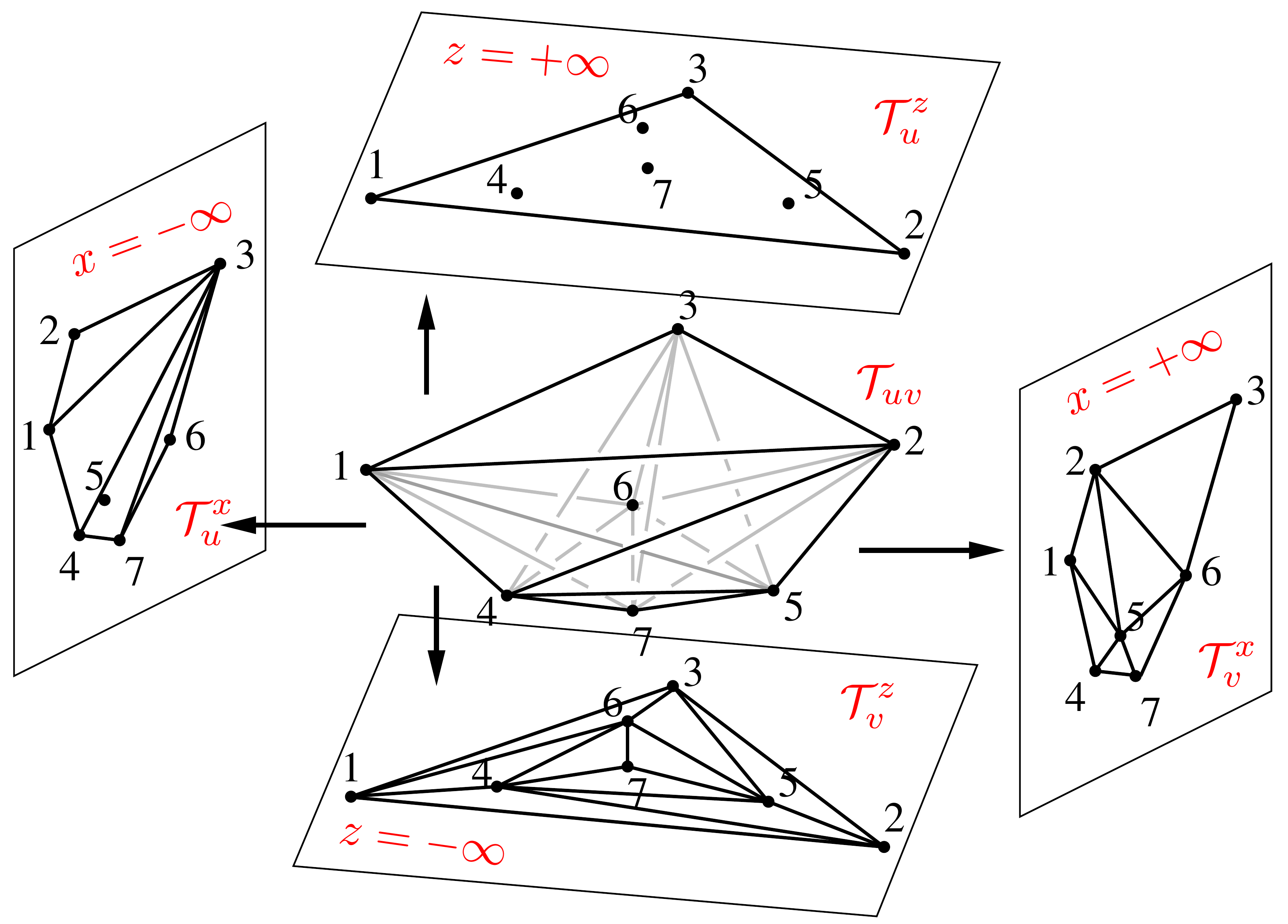}
\caption{Triangulations obtained from different directions of the parallel projection of the same tetrahedralisation ${\cal T}_{uv}$. }
\label{fig:cyc3_flip_both} 
\end{figure}

From this example, we see that the existence of a viewpoint which ``sees" no cycle of tetrahedra is a one of a key properties of a 3d triangulation of a point set. It is necessary to study the following question.

\begin{question}~\label{question:acyclic_viewpoint}
If ${\cal T}_{uv}$ contains no interior point, is it always possible to find a viewpoint which produces two triangulations ${\cal T}_{u}$ and ${\cal T}_{v}$ such that the algorithm will succeed?
\end{question}

\section{Properties of the Directed Flip Graph}
\label{sec:poset-structure}

Since triangulations of ${\bf A}$ can be ordered by the directed flips,  
we obtain a directed flip graph (poset) defined on the set of all triangulations of ${\bf A}$.  
As we have already shown in the motivation example in Section~\ref{sec:graph-poset}, it encodes many useful information of this point set. 
It is important to understand the structure of this poset. 
However, as we could observe that the size of this poset grows exponentially as the number of vertices grows. 
Figure~\ref{fig:poset_7_points} shows partially of this poset for the point set of $7$ vertices for the point set shown in Figure~\ref{fig:cyc3_all_tets}. 

\begin{figure}[ht]
  \centering
  \includegraphics[width=1.0\textwidth]{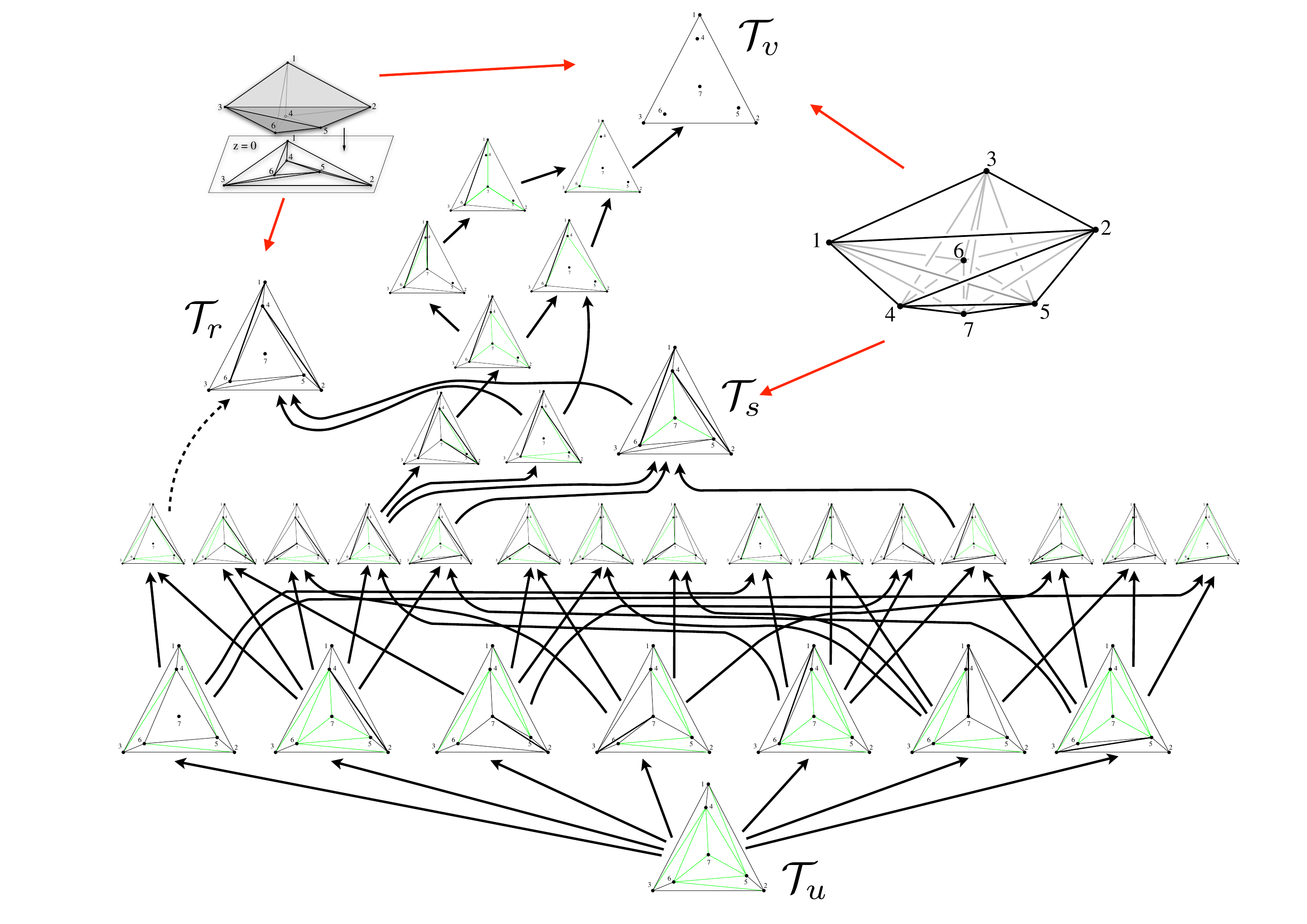}
\caption{A subset of the directed flip graph of a set of $7$ vertices.}
\label{fig:poset_7_points} 
\end{figure}

\subsection{General properties of this poset}
\label{sec:poset_general_properties}

In this section, some (obvious) properties of this poset are proven. 
Let ${\bf A}$ be a finite point set in $\mathbb{R}^2$ and let $\omega: {\bf A} \to  \mathbb{R}$ be a height function.  Let ${\cal G}({\bf A}, \omega)$ be the directed flip graph (poset) of all triangulations of ${\bf A}$ ordered by one of the directed flips.  

\paragraph{Definitions} A {\it path} between two nodes in this poset represents a monotone sequence of directed flips between two triangulations of ${\bf A}$.  
Call a path {\it maximum} if it starts from the regular and ends at the farthest point regular triangulation of $({\bf A}, \omega)$, or vice versa. 
A node of this poset is called an {\it external node} if it can only be either the start or end of a path, otherwise, it is an {\it internal node}.

\begin{theorem}~\label{thm:poset-general}
\item[(1)] Every path between two nodes in this poset corresponds to a triangulation of the 3d polyhedron whose boundary is the union of the two triangulations. 

\item[(2)] 
Every maximum path in this poset corresponds to a triangulation of ${\bf A}^{\omega}$.
Moreover, this triangulation may be non-regular.

\item[(3)] There exists non-regular triangulation of ${\bf A}^{\omega}$ which does not correspond to any maximum path in this poset.  

\item[(4)] This poset is in general not bounded. 

\item[(5)] Except the regular and the farthest point regular triangulations of $({\bf A}, \omega)$, all other external nodes in this poset are non-regular triangulations of ${\bf A}$.

\item[(6)] Internal nodes of of this poset may be non-regular triangulations of ${\bf A}$.


\end{theorem}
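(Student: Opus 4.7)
I plan to dispatch the six claims in order, leaning on the results already established in Sections~\ref{sec:monotone-to-triang} and~\ref{sec:triang-to-monotone}. Claim (1) is immediate from Corollary~\ref{cor:triang}: a path in $\mathcal{G}(\mathbf{A},\omega)$ is by construction a monotone sequence of directed flips between two triangulations, and Corollary~\ref{cor:triang} identifies such a sequence with a triangulation of the polyhedron bounded by the two lifted triangulations. The first sentence of (2) is the special case in which the path is maximum, so the polyhedron equals $\mathrm{conv}(\mathbf{A}^{\omega})$, and is therefore a direct consequence of Theorem~\ref{thm:monotone-to-triang}. For the second sentence of (2), the $7$-vertex Sch\"onhardt-plus-apex configuration of Figure~\ref{fig:cyc3_all_tets} already does the work: its tetrahedralisation is non-regular, yet Figure~\ref{fig:cyc3_proj_both} shows that it is acyclic along the $x$-direction, so by Corollary~\ref{thm:triang-to-monotone-viewpoint} with $\omega$ chosen as the $x$-coordinate it arises from a maximum monotone path in the corresponding poset.

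Claim (3) is the contrapositive of Theorem~\ref{thm:triang-to-monotone}: a tetrahedralisation of $\mathbf{A}^{\omega}$ containing a cycle of tetrahedra with respect to a viewpoint along the $\omega$-direction cannot be produced by any monotone sequence of directed flips, hence by no maximum path. The same $7$-vertex configuration, now with $\omega$ chosen as the $z$-coordinate, supplies such a non-regular triangulation via the 3-cycle $t_{1374}\prec_z t_{1275}\prec_z t_{2376}\prec_z t_{1374}$ exhibited in Figure~\ref{fig:cyc3_proj_both}. Claim (4) follows from the motivating $6$-vertex example of Figure~\ref{fig:prism_poset_1}: both $\mathcal{T}_{far}$ and $\mathcal{T}_{nonreg}$ are maximal elements, so the poset admits no unique supremum and is not bounded.

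For (5) I argue by contradiction. Suppose $\mathcal{T}$ is an external node distinct from $\mathcal{T}_{reg}$ and $\mathcal{T}_{far}$, and that $\mathcal{T}$ is regular with respect to some auxiliary height function. Because the lower envelope $\mathcal{R}^{\omega}$ and the upper envelope $\mathcal{F}^{\omega}$ of $\mathrm{conv}(\mathbf{A}^{\omega})$ sandwich every characteristic section, we have $\mathcal{T}_{reg}\le_2\mathcal{T}\le_2\mathcal{T}_{far}$. Applying Theorem~\ref{thm:secondarypolytope} with the fixed $\omega$ to each of the pairs $\{\mathcal{T}_{reg},\mathcal{T}\}$ and $\{\mathcal{T},\mathcal{T}_{far}\}$, and using the fact that up-flips strictly raise the characteristic section in the $\le_2$ order, the monotone sequences produced must each consist of up-flips. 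Their concatenation is a maximum path passing through $\mathcal{T}$ as an intermediate node, contradicting the externality of $\mathcal{T}$. Hence every external node other than the two extremes is non-regular.

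Claim (6) is an existence statement that I would settle by exhibiting a non-regular internal node. The cleanest construction augments the $6$-vertex example of Figure~\ref{fig:prism_poset_1} by adjoining a new vertex lifted sufficiently high above the outer triangle that every 1-3 insertion of it is an up-flip; then the extension of $\mathcal{T}_{nonreg}$ to this augmented vertex set inherits non-regularity from its subcomplex while acquiring outgoing up-flips through the new vertex, rendering it internal in the augmented poset. Alternatively one may read off such a node directly from the graph partially depicted in Figure~\ref{fig:poset_7_points}. The main obstacle of the whole proof lies in (5), where one must take care to apply Theorem~\ref{thm:secondarypolytope} with the fixed $\omega$ of the poset rather than any auxiliary height function realising $\mathcal{T}$ as regular; the needed compatibility of the $\le_1$ and $\le_2$ orderings on regular triangulations is exactly what makes the concatenation step work.
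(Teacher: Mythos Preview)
Your handling of (1)--(5) matches the paper's proof almost exactly: (1) via Corollary~\ref{cor:triang}, the first half of (2) via Theorem~\ref{thm:monotone-to-triang}, the second half of (2) and (3) via the $7$-vertex Sch\"onhardt-plus-apex example projected along the $x$- and $z$-axes respectively, and (5) via Theorem~\ref{thm:secondarypolytope}. For (4) you invoke the $6$-vertex prism example, whereas the paper uses two nodes $\mathcal{T}_r,\mathcal{T}_s$ from the $7$-vertex poset of Figure~\ref{fig:poset_7_points}; both are valid, and yours is arguably the cleaner witness. Your treatment of (5) is actually a bit more careful than the paper's, which simply takes the contrapositive of Theorem~\ref{thm:secondarypolytope} without spelling out why the monotone sequence must go in the up-flip direction; your use of the $\le_2$ sandwich to pin down the direction is a genuine improvement.

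The one place your plan falls short of the paper is (6). The paper does \emph{not} read a witness off Figure~\ref{fig:poset_7_points}, nor does it augment the $6$-vertex example; instead it builds an explicit $8$-vertex configuration (Figure~\ref{fig:non-regular-sections-triang3d}) with a concrete tetrahedralisation of $15$ tetrahedra, exhibits a maximum chain passing through seven intermediate triangulations (Figure~\ref{fig:non-regular-sections-poset}), and then certifies each of those intermediate nodes as non-regular by producing an in-front/behind cycle of triangles viewed from vertex~$7$ (Figure~\ref{fig:non-regular-sections-cycles}). Your augmentation idea is sound in principle---adding a seventh vertex whose lift sits high enough that the $1$-$3$ insertion is an up-flip does turn $\mathcal{T}_{nonreg}$ into an internal node, and non-regularity is inherited since any lifting realising it over the $7$-point set would restrict to a lifting over the original six---but you have not actually carried out the verification, and your fallback of ``reading off'' from Figure~\ref{fig:poset_7_points} is not supported by anything in the paper. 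So for (6) you owe either the explicit check of your augmented example or an independent construction along the paper's lines.
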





\begin{proof}
Case (1) is exactly proven by the Corollary~\ref{cor:triang}. Note that the 3d polyhedron is not necessarily convex. 

The first part of (2) is proven by Theorem~\ref{thm:monotone-to-triang}. We prove the second part of (2) by showing an example. Figure~\ref{fig:cyc3_flip_both} shows such a 3d triangulation. First of all, there exists a monotone sequence of flips from ${\cal T}_{u,{\bf x}}$ to ${\cal T}_{v,{\bf x}}$ which produces this triangulation. Hence a poset produced by viewpoints on the $x$-axis will contain this triangulation.  
However, the fact that there exists no monotone sequence of flips from ${\cal T}_{u,{\bf z}}$ to ${\cal T}_{v,{\bf z}}$ shows that it contains a cycle from viewpoints on the $z$-axis. By the Acyclic Theorem~\cite{Edelsbrunner90acy} it must be non-regular.  
The same example also proves (3). 

Next we show that (2) and (3) together imply (4), i.e., this poset is not bounded. 
This can be proven by showing two examples (therefore there are  at least two reasons). 
Both examples are shown in Figure~\ref{fig:poset_7_points}.  
The first example is that there exists no path of up-flips (as well as down-flips) from the non-regular triangulation ${\cal T}_r$ to the further point regular triangulation ${\cal T}_v$.  
In this case, there exists no 3d triangulation between these two triangulations.  
The second example is that there exists no path of up-flips (as well as down-flips) from the non-regular triangulation ${\cal T}_s$ to the further point regular triangulation ${\cal T}_v$. In this case, although there exists a 3d triangulation between these two triangulations, but there is no monotone path between them.

To prove (5) we use the theory of secondary polytopes. By Theorem~\ref{thm:secondarypolytope},  between every two regular triangulations of ${\bf A}$ there must exist a monotone sequence of directed flips. The negative of this Theorem shows that if there exists no monotone sequence of directed flips between two triangulations of ${\bf A}$, then at least one of the triangulations is non-regular. This shows that all external nodes of this poset, except the regular and farthest-point regular triangulations of $({\bf A}, \omega)$, are non-regular. 

Finally, we construct an example to prove (6). The point set of this example contains $8$ vertices, denoted as ${\bf A}$. The constructed tetrahedralisation ${\cal T}$ has $15$ tetrahedra. They are shown in Figure~\ref{fig:non-regular-sections-triang3d}. It is necessary to have the 8th vertex which makes this example work. Figure~\ref{fig:non-regular-sections-poset} shows one of the maximum chains of the poset of this point set, in particular, it starts from the regular triangulation of ${\bf A}$ to the farthest-point regular triangulation of ${\bf A}$. The sequence of flips are ordered by the tetrahedra as we have ordered in the list shown in the Figure~\ref{fig:non-regular-sections-triang3d}. 

One can show that in this chain, the internal nodes (3), (4), ... , (9) are non-regular triangulations. We use the negative of the Acyclic Theorem~\cite{Edelsbrunner90acy} to show that in each of these triangulations, there exists a cycle of triangles with respect to the in-front/behind relation viewed from the vertex $7$. Figure~\ref{fig:non-regular-sections-cycles} shows two of these cycles. 
\end{proof}

\begin{figure}[ht]
  \centering
  \includegraphics[width=0.8\textwidth]{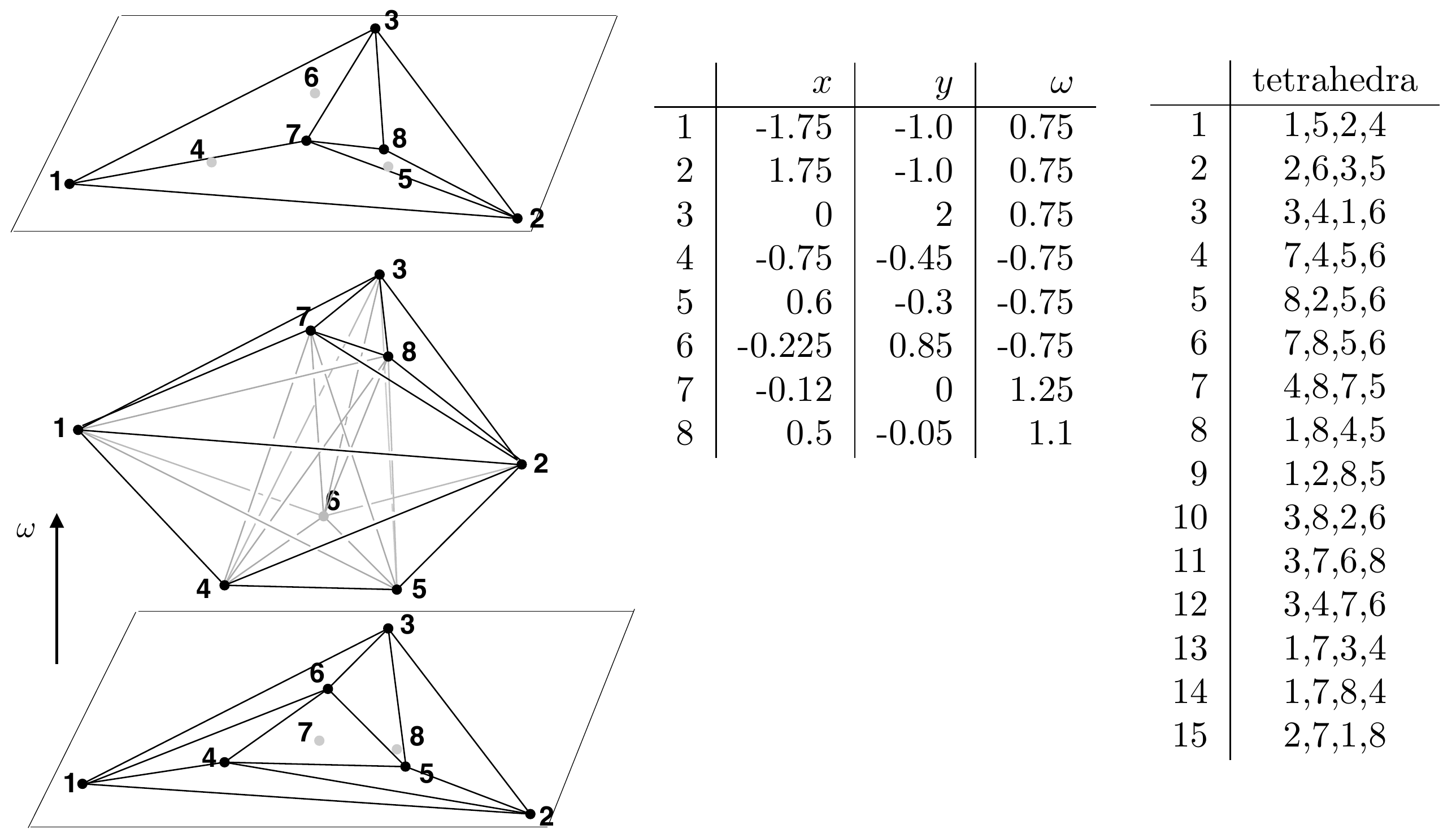}
\caption{A point set ${\bf A}$ of $8$ points in $\mathbb{R}^2$ with a height function $\omega$.  Left: The regular and the farthest-point regular triangulation of $({\bf A}, \omega)$, and a tetrahedralisation ${\cal T}$ of the convex hull of ${\bf A}^{\omega}$.    Right: Tables of coordinates of the $8$ vertices of ${\bf A}$ and the definition of $\omega$, and the list of the $15$ tetrahedra of ${\cal T}$.}
\label{fig:non-regular-sections-triang3d} 
\end{figure}

\begin{figure}[ht]
  \centering
  \includegraphics[width=1.0\textwidth]{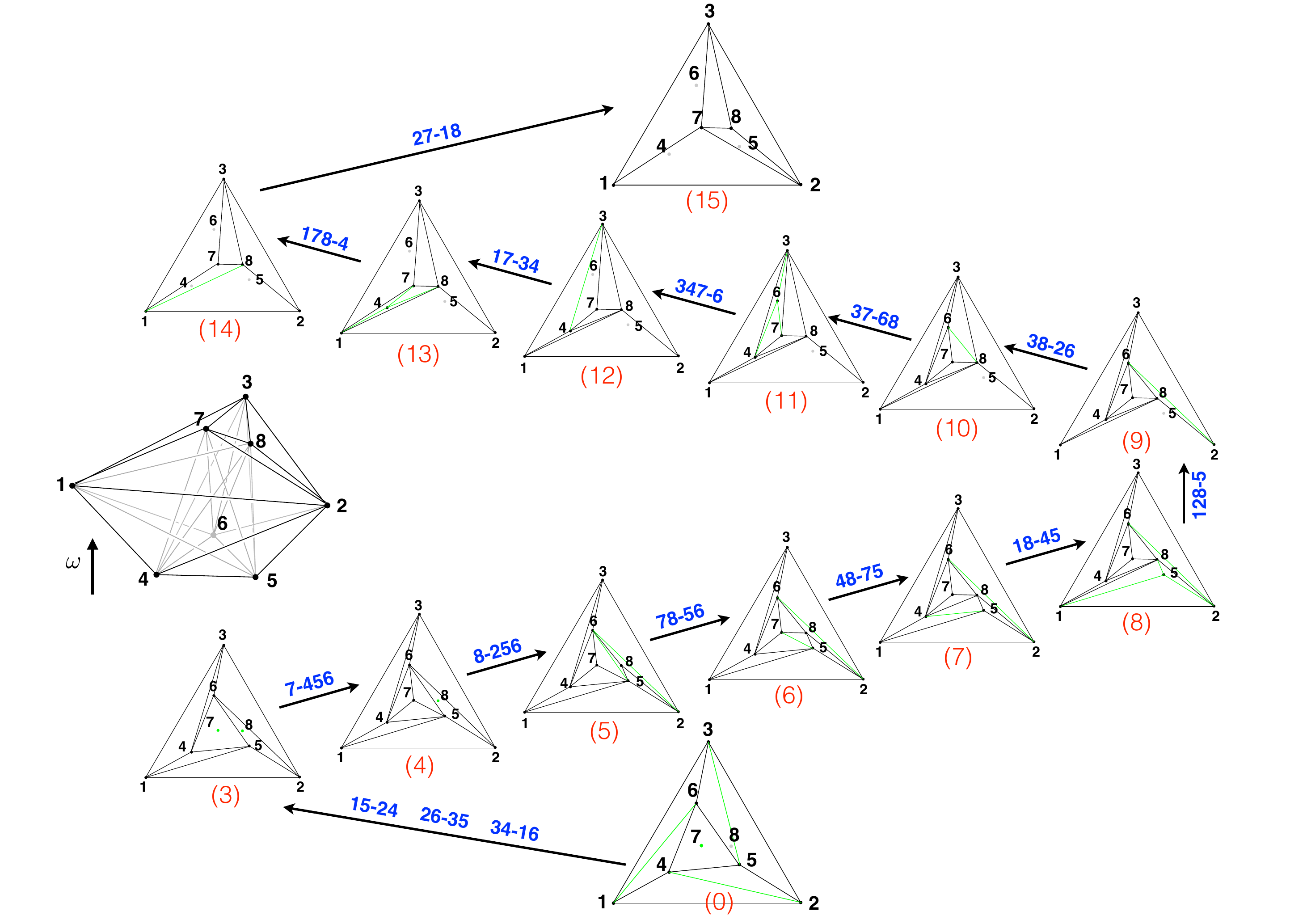}
\caption{A maximum chain of flips from the regular to the farthest-point regular triangulation of $({\bf A}, \omega)$. The triangulations from (3) to (9) are non-regular triangulations of ${\bf A}$.}
\label{fig:non-regular-sections-poset} 
\end{figure}

\begin{figure}[ht]
  \centering
  \includegraphics[width=0.6\textwidth]{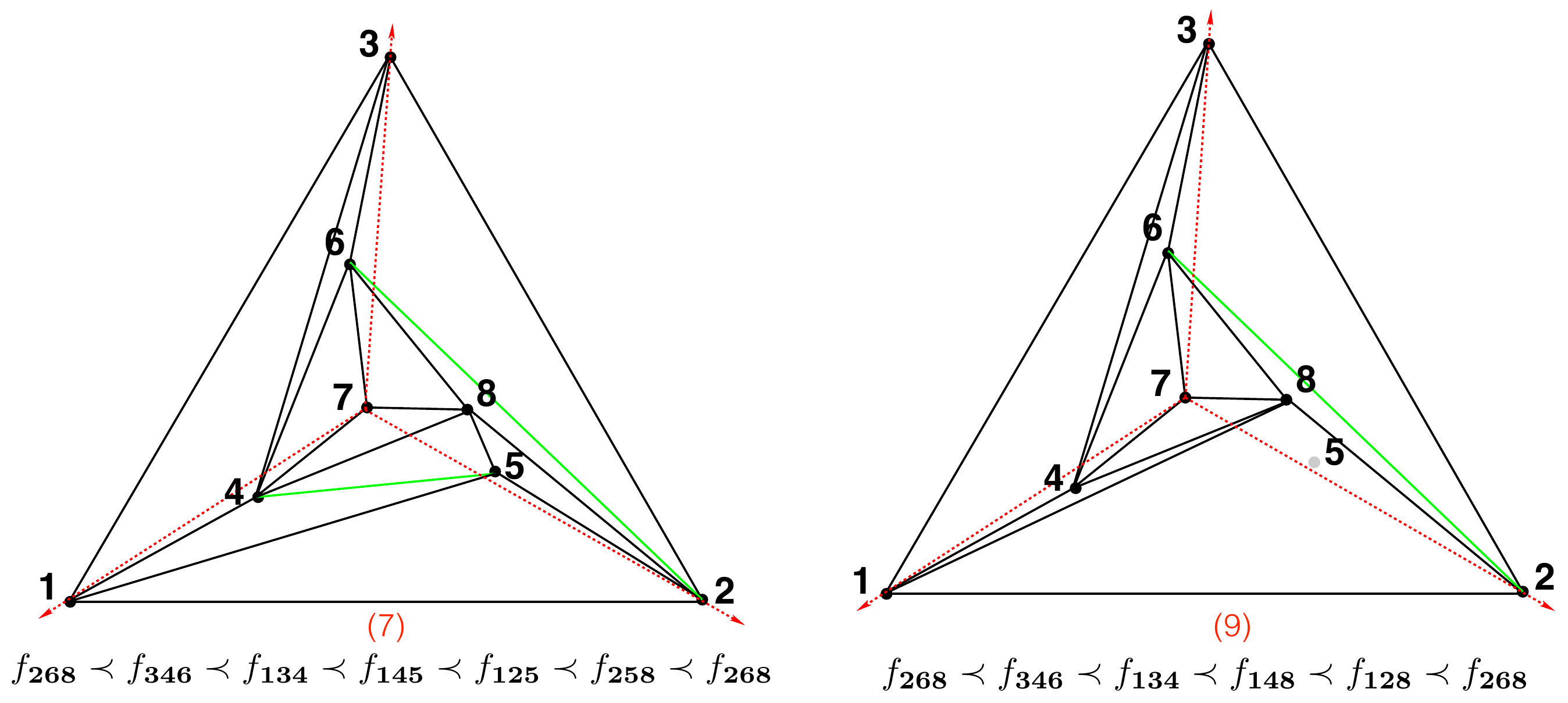}
\caption{Two non-regular trianglations, (7) and (9). The cycles of triangles in each of them are shown. They are all viewed from the vertex $7$. }
\label{fig:non-regular-sections-cycles} 
\end{figure}

\subsection{When $\omega$ is convex or concave}
\label{sec:poset_convex_concave_omega}

In this section we prove a special property of this poset when the height function $\omega$ is a convex or concave.  From this property we can derive the exact conditions for when the Lawson's flip algorithm will terminate. 

\begin{theorem}~\label{thm:poset-convex-heights}
  \item[(1)] If $\omega$ is convex, then the poset of down-flips has a unique minimum, which is the regular triangulation of $({\bf A}, \omega)$. However, it might have no unique maximum. 
  \item[(2)] If $\omega$ is concave, then the poset of up-flips has a unique maximum, which is the farthest point regular triangulation of $({\bf A}, \omega)$. However, it might have no unique minimum. 
\end{theorem}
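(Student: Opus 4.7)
The plan is to prove (1) directly and deduce (2) by applying (1) to the height function $-\omega$, which interchanges up-flips with down-flips and swaps the roles of ${\cal R}^{\omega}$ and ${\cal F}^{\omega}$. First, I would show that ${\cal T}_{reg}$ is a minimum of the down-flip poset: by definition, a down-flip applied to ${\cal T}_{reg}$ would require a tetrahedron inside $\mathrm{conv}({\bf A}^{\omega})$ whose upper faces are triangles of ${\cal R}^{\omega}$, but ${\cal R}^{\omega}$ is the lower envelope of $\mathrm{conv}({\bf A}^{\omega})$, so no such tetrahedron exists.

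Termination of every descending chain then follows from monotonicity. Because $\omega$ is convex, every characteristic section $g_{\omega,{\cal T}}$ lies weakly above $g_{\omega,{\cal T}_{reg}}$, so the quantity $\delta({\cal T}) := \mathrm{Vol}(g_{\omega,{\cal T}} - g_{\omega,{\cal T}_{reg}})$ is non-negative and vanishes only at ${\cal T} = {\cal T}_{reg}$. The volume accounting described before Theorem~\ref{thm:monotone-to-triang} shows that every down-flip strictly decreases $\delta$ by the positive volume of its associated tetrahedron; combined with the finiteness of triangulations, this forces every chain of down-flips to terminate.

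The key step is to verify that ${\cal T}_{reg}$ is the only such minimum, by producing an explicit down-flip applicable to any ${\cal T} \neq {\cal T}_{reg}$. When $\omega$ is convex, every vertex of ${\bf A}$ lies on the lower hull, so ${\cal T}_{reg}$ uses every vertex of ${\bf A}$. If ${\cal T}$ omits a vertex ${\bf p} \in {\bf A}$, then ${\bf p}$ lies in the interior of some triangle $f_{\bf abc} \in {\cal T}$, and convexity of $\omega$ forces ${\bf p}'$ strictly below the plane through $f_{\bf a'b'c'}$; the tetrahedron $t_{\bf a'b'c'p'}$ then witnesses a 1-3 down-flip inserting ${\bf p}$. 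If instead ${\cal T}$ uses all of ${\bf A}$ but ${\cal T}^{\omega} \neq {\cal R}^{\omega}$, then the 3D polyhedron $P$ bounded below by ${\cal R}^{\omega}$ and above by ${\cal T}^{\omega}$ is non-empty and star-shaped from the viewpoint $(0,0,-\infty)$, because its lower boundary is convex. Such a polyhedron admits a Steiner-free triangulation that is acyclic from that viewpoint, which one may construct by coning ${\cal T}^{\omega}$-triangles downward through the vertex layers above ${\cal R}^{\omega}$. Theorem~\ref{thm:triang-to-monotone} then converts this triangulation into a monotone sequence of down-flips from ${\cal T}$ to ${\cal T}_{reg}$, whose first step is a down-flip applicable to ${\cal T}$.

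For the second clause of (1), that a unique maximum can fail to exist, I would exhibit the prism example of Section~\ref{sec:graph-poset-example} (Figure~\ref{fig:prism_poset_1}): $\omega$ is convex there (all six lifted points are vertices of the lower hull), so (1) furnishes the unique minimum ${\cal T}_{reg}$, yet both ${\cal T}_{far}$ and the non-regular triangulation ${\cal T}_{nonreg}$ are maximal under up-flips. The main obstacle of the plan is the Steiner-free acyclic triangulation of $P$ in the all-vertices case: the convexity of the lower boundary ${\cal R}^{\omega}$ is precisely what makes this construction possible, and is exactly what breaks the symmetry of the theorem, since a non-convex lower boundary can produce Sch\"onhardt-type obstructions (cf.\ the discussion of ${\cal T}_{nonreg}$ and Section~\ref{sec:triang3d}) --- and these obstructions are what make the companion statement about a unique maximum fail in general.
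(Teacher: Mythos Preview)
Your reduction of (2) to (1), the volume-based termination argument, the treatment of the missing-vertex case via a 1-3 down-flip, and the prism counterexample for non-uniqueness of the maximum are all fine and essentially match the paper's strategy.

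The gap is in your all-vertices case. You assert that the polyhedron $P$ between ${\cal R}^{\omega}$ and ${\cal T}^{\omega}$ ``admits a Steiner-free triangulation that is acyclic from that viewpoint, which one may construct by coning ${\cal T}^{\omega}$-triangles downward through the vertex layers above ${\cal R}^{\omega}$.'' This is not a construction; no such coning procedure is defined, and the existence of a Steiner-free triangulation of a region bounded by two characteristic sections is exactly the kind of statement that can fail (Sch\"onhardt) and that the paper is working toward, not from. In fact the paper uses Theorem~\ref{thm:poset-convex-heights} as an \emph{input} to its 3d triangulation algorithm in Section~\ref{sec:triang3d}, so invoking a 3d triangulation here to prove the theorem is, at best, a detour requiring an independent argument you have not supplied.

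The paper closes this case by a direct, elementary step you are missing: if ${\cal T}$ uses all of ${\bf A}$ and ${\cal T}\neq{\cal R}$, then some interior edge $e_{\bf ab}$ is locally non-regular with respect to down-flips. The paper observes that such an edge must be 2-2 flippable. Indeed, if it were not, one of ${\bf a},{\bf b}$ would be a reflex vertex of the quadrilateral ${\bf a},{\bf b},{\bf c},{\bf d}$, and the local non-regularity then forces that lifted vertex to lie strictly above the plane through the other three; since the reflex vertex lies inside the opposite triangle in the plane, this contradicts the hypothesis that $\omega$ is convex (equivalently, that every point of ${\bf A}^{\omega}$ lies on the lower envelope). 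So a 2-2 down-flip is always available in this case. Combined with your termination argument and the 1-3 insertion for missing vertices, this yields the unique minimum directly, without appealing to any 3d triangulation or to Theorem~\ref{thm:triang-to-monotone}.
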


\begin{proof}
Note that (2) is just the reverse of (1).  We only need to prove (1). 

$\omega$ is convex implies that all lifted points of ${\bf A}$ lies on the lower envelop of the convex hull of ${\bf A}^{\omega}$ in $\mathbb{R}^3$. Then the vertex set of the regular triangulation of $({\bf A}, \omega)$ is equal to ${\bf A}$. The vertex set of any other triangulation of ${\bf A}$ must be either a subset of or equal to ${\bf A}$. This means that all monotone sequences of down-flips are either 2-2 flips (edge flips) or 1-3 flips (vertex insertions). There is no need of 3-1 flips (vertex deletions).  

In the following, we show that between any triangulation ${\cal T}$ of ${\bf A}$ and the regular triangulation of $({\bf A}, \omega)$, denoted as ${\cal R}$, there always exists a sequence of down-flips. 
The proof is essentially the same as proving the termination of Lawson's flip algorithm for producing the Delaunay triangulation of ${\bf A}$. In addition to using the 2-2 flips (edge flips), we need to use 1-3 flip (vertex insertions) as well.    

Assume ${\cal T} \neq {\cal R}$. Call an edge $e_{\bf ab} \in {\cal T}$ {\it locally non-regular with respect to down-flips} if it is shared by two triangles $f_{\bf abc}, f_{\bf abd} \in {\cal T}$, and the lifted point ${\bf d}'$ lies vertically below the plane containing the lifted points ${\bf a', b', c'}$. In other words, the lifted edge $e_{\bf a'b'}$ is locally non-convex between the lifted triangles $f_{\bf a'b'c'}$ and $f_{\bf a'b'd'}$ when viewed from below. Then $e_{\bf ab}$ must admit a 2-2 flip. Otherwise, at least one of the lifted points,  either ${\bf a}'$ or ${\bf b}'$ does not lie on the lower convex hull of ${\bf A}^{\omega}$ which implies that $\omega$ is non-convex -- a contradiction. Applying a 2-2 flip to replace $e_{\bf ab}$ by $e_{\bf cd}$ is the same as attaching a tetrahedron $t_{\bf a'b'c'd'}$ below the edge $e_{\bf a'b'}$. This means that the new edge $e_{\bf cd} \in {\cal T}$ is locally regular.  Whenever there exists a locally non-regular edge in ${\cal T}$ we can flip it and replace it by a locally regular edge of ${\cal T}$, this process is irreversible, since it never creates locally non-regular edges. 

Assume ${\cal T} \neq {\cal R}$ and all edges in ${\cal T}$ are locally regular. Then there must exist a vertex in ${\cal R}$ but not yet in ${\cal T}$. 
Let ${\bf p} \in {\cal R}$ and ${\bf p} \not\in {\cal T}$. There must exist a triangle $f_{\bf abc} \in {\cal T}$ such that ${\bf p} \in f_{\bf abc}$. Assume no three points in ${\bf A}$ are collinear, then $f_{\bf abc}$ and ${\bf p}$ admit a 1-3 flip (vertex insertion) in ${\cal T}$. After ${\bf p}$ is inserted, apply 2-2 flips if there are locally non-regular edges in ${\cal T}$. 

The above process creates no cycles.  This means that every triangulation of ${\bf A}$ can be eventually  transformed into the regular triangulation of $({\bf A}, \omega)$ by a sequence of down-flips. 
This shows that this poset is bounded below, i.e., it has a unique minimum.  

The fact that it might have no unique maximum in (1) can be proven by showing an example such that there exists no monotone sequence of up-flips between a triangulation of ${\bf A}$ and the farthest point regular triangulation of ${\bf A}$. Such examples are shown in~\ref{fig:poset_7_points}. There is no monotone sequence of flips between the triangulation ${\cal T}_r$ and ${\cal T}_v$, and between ${\cal T}_s$ and ${\cal T}_v$.
\end{proof}

From the above theorem, we can state precise conditions when Lawson's flip algorithm will terminate.

\begin{cor}
The Lawson's flip algorithm terminates in one of the following cases.
\begin{itemize}
\item[(1)] $\omega$ is convex and the directed flips are down-flips.
\item[(2)] $\omega$ is concave and the directed flips are up-flips.
\end{itemize}
\end{cor}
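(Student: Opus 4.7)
The plan is to derive this corollary directly from Theorem~\ref{thm:poset-convex-heights}, which already does all the work. The key observation is that Lawson's algorithm, as described in the text, is a procedure that repeatedly performs directed flips of one fixed type (down-flips when driving toward the regular triangulation, up-flips when driving toward the farthest-point regular one), and termination is equivalent to saying that no infinite chain of such flips exists in the directed flip poset.

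For case (1), I would argue as follows. Suppose $\omega$ is convex and Lawson's algorithm uses down-flips. By Theorem~\ref{thm:poset-convex-heights}(1), starting from any triangulation ${\cal T}$ of ${\bf A}$, there exists a monotone sequence of down-flips ending at the regular triangulation ${\cal R}$ of $({\bf A},\omega)$, and moreover each down-flip strictly increases the set of ``locally regular'' edges (the process is irreversible, as noted in the proof of Theorem~\ref{thm:poset-convex-heights}, because a down-flip never creates a locally non-regular edge it has already removed). Consequently, no triangulation can be revisited along the run of the algorithm, and since the set of triangulations of ${\bf A}$ is finite, the algorithm must terminate; by the uniqueness of the minimum element given in Theorem~\ref{thm:poset-convex-heights}(1), the output is necessarily ${\cal R}$. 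Case (2) follows by the exact same argument applied to Theorem~\ref{thm:poset-convex-heights}(2), with up-flips replacing down-flips and the farthest-point regular triangulation playing the role of the unique extreme element.

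The only subtlety, and really the only thing to verify carefully, is that Lawson's algorithm as it is actually implemented indeed produces a \emph{monotone} sequence of flips in the sense of the poset: that is, when the algorithm detects a locally non-regular edge and flips it, the resulting flip is genuinely a down-flip (respectively an up-flip) with respect to $\omega$. This is immediate from the geometric description in Section~\ref{sec:graph-poset}: flipping a locally non-convex edge of the lifted surface (viewed from below) glues a tetrahedron below, which is by definition a down-flip, and symmetrically for the concave/up-flip case. Once this identification is made, the corollary is just a restatement of Theorem~\ref{thm:poset-convex-heights} combined with finiteness of the triangulation set.

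I do not foresee any real obstacle; this is essentially a bookkeeping statement. The only care needed is to phrase the argument so that it covers not just the $2$-$2$ flip step of the classical Lawson algorithm but also the $1$-$3$ vertex-insertion step required to reach vertices of ${\bf A}$ not present in the current triangulation, exactly as handled in the proof of Theorem~\ref{thm:poset-convex-heights}.
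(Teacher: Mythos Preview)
Your proposal is correct and matches the paper's approach: the paper gives no separate proof of this corollary at all, simply introducing it with ``From the above theorem, we can state precise conditions when Lawson's flip algorithm will terminate,'' so it is treated as an immediate consequence of Theorem~\ref{thm:poset-convex-heights}. Your explicit derivation---unique extreme element plus irreversibility/finiteness, together with the check that Lawson's flips are indeed the correct directed flips---is exactly the reasoning the paper leaves implicit.
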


From the above theorem, we see that the condition that there is no 3-1 flip (vertex deletion) makes the poset bounded below or above.  
We remark that there are point sets which automatically satisfy this condition, such as the point set of a convex $n$-gon.  In this case, no matter what property $\omega$ has, Lawson's flip algorithm will terminate. 
This provides another proof of the Theorem~\ref{thm:poset-cyclicpoly} of the properties of point sets of cyclic polytopes in $\mathbb{R}^2$.

\subsection{Characterisation of external nodes of the poset}
\label{sec:redundant_vertices}

When $\omega$ is neither convex nor concave, starting from an arbitrary triangulation of ${\bf A}$, a monotone sequence of directed flips (up- or down-flips) might end at a non-regular triangulation of ${\bf A}$ which is an external node of this poset. Besides the non-regular property, we show another property of these external nodes (triangulations). 

\paragraph{Definitions}  We assume a given poset is produced by one of the two directed flips (up- and down-flips). 
We call the minimum and maximum nodes of this poset  {\it lower extreme} and {\it upper extreme} nodes of this poset, respectively.   External nodes which are not extreme nodes are called {\it non-extreme} nodes of this poset.
We further distinguish two types of the non-extreme nodes. We call a non-extreme node {\it upper non-extreme} if there exists no directed flip in this triangulation towards the triangulation of the upper extreme node, and call it {\it lower non-extreme} if there exists no directed flip from this triangulation towards the triangulation of the lower extreme node. Figure~\ref{fig:non-extreme_nodes} Left illustrates these definitions.  

\begin{figure}[ht]
  \centering
  \begin{minipage}{0.6\textwidth}
  \includegraphics[width=1.0\textwidth]{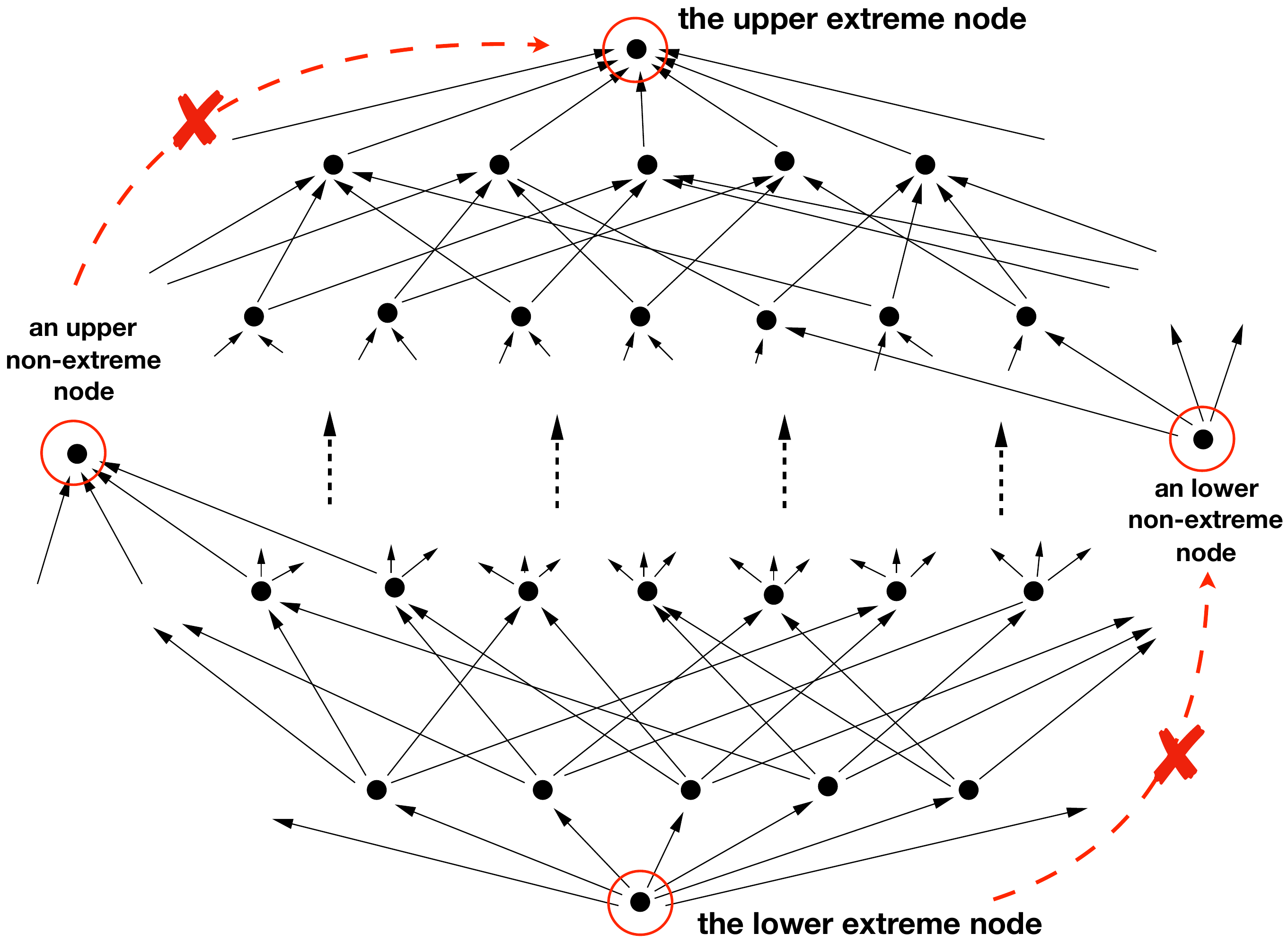}
  \end{minipage}
  \begin{minipage}{0.38\textwidth}
  \includegraphics[width=1.0\textwidth]{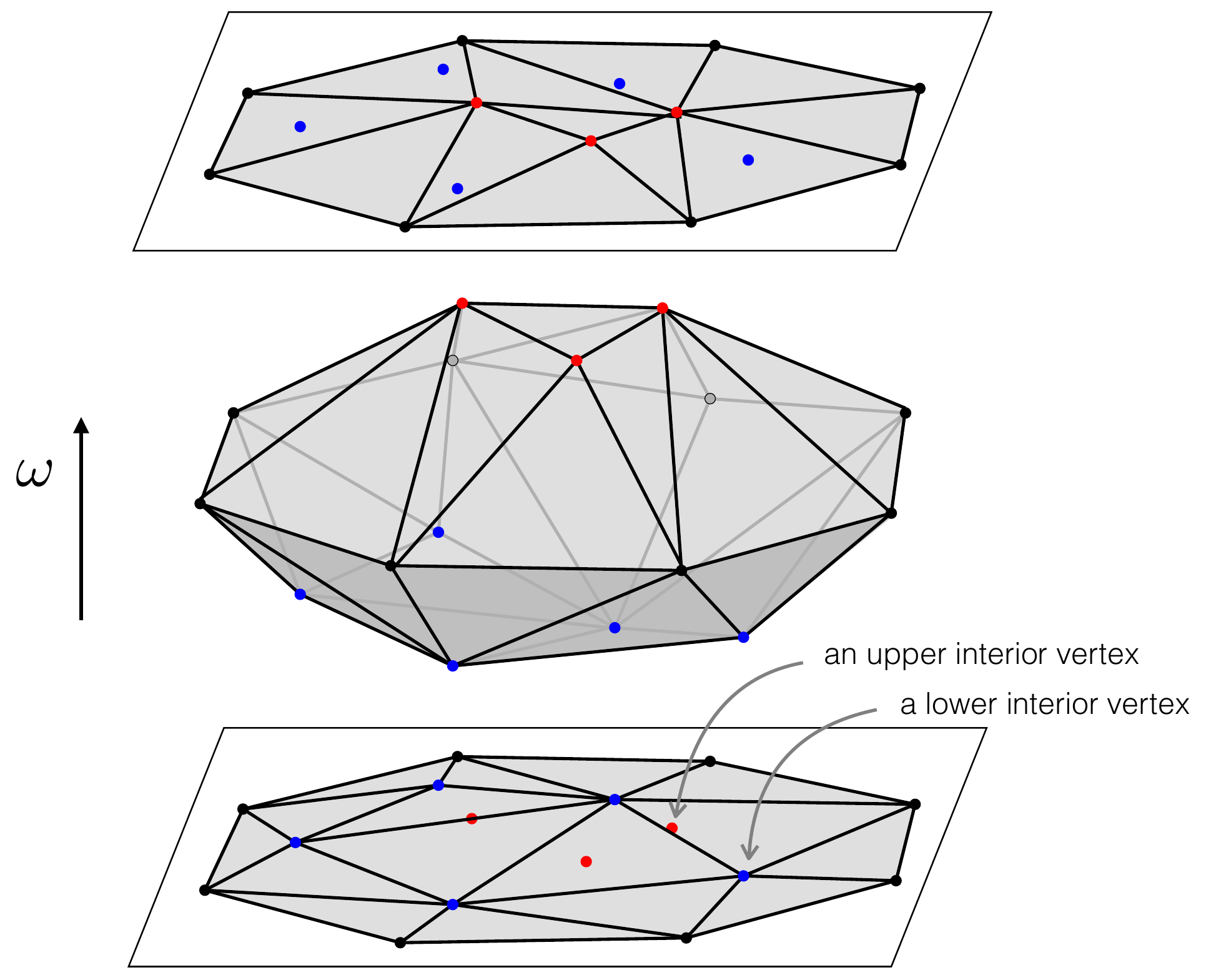}
  \end{minipage}
\caption{Left: Assume $\omega$ is neither convex nor concave. An example of a poset of triangulations of $({\bf A}, \omega)$. Extreme nodes, upper non-extreme nodes, and lower non-extreme nodes are shown.
Right: The lower and upper interior vertices of a point set $({\bf A}, \omega)$.
}
\label{fig:non-extreme_nodes} 
\end{figure}

A vertex of ${\bf A}$ is an {\it interior vertex} if it is not on the convex hull of ${\bf A}$. 
We distinguish two types of the interior vertices of ${\bf A}$. We call an interior vertex ${\bf a} \in {\bf A}$ {\it upper interior} if the lifted point ${\bf a}'$ appears in the triangulation of the upper extreme node. and call it {\it lower interior} if ${\bf a}'$ appears in the triangulation of the lower extreme node, 
see Figure~\ref{fig:non-extreme_nodes} Right for examples.

\paragraph{Redundant interior vertices} By Theorem~\ref{thm:poset-convex-heights}, when the directed flips are up-flips, if a triangulation contains no interior vertices of ${\bf A}$, then it must can be transformed into the farthest point regular triangulation of $({\bf A}, \omega)$ since there is no 3-1 flip (vertex deletion) needed. Hence this triangulation must not be an upper extreme node.
We then can state the following property of all non-extreme nodes of this poset. 

\begin{itemize}
\item[(1)] Any triangulation corresponds to an upper non-extreme node in this poset must contain lower interior vertices of ${\bf A}$.
\item[(2)] Any triangulation corresponds to a lower non-extreme node in this poset must contain upper interior vertices of ${\bf A}$.
\end{itemize}

Any triangulation of a non-extreme node has no directed flip towards one of the two regular triangulations of $({\bf A}, \omega)$. We see the reason of the failure of Lawson's flip algorithm -- there exist interior vertices in a triangulation which cannot be removed by  the corresponding directed flips.  
These vertices, either lower or upper interior vertices, in (1) and (2) are call {\it redundant interior vertices}, see Figure~\ref{fig:redundant-vertices} for examples. \\

In the following, we study the question: why a redundant interior vertex cannot be removed by any monotone sequence of directed flips?

\paragraph{Locally non-regular edges} 
Recall that the definition of ``locally non-regular edges with respect to down flips" was given in Section~\ref{sec:poset_convex_concave_omega} (in the proof of Theorem~\ref{thm:poset-convex-heights}). 
Below we give the definition of ``locally non-regular edges with respect to up flips" which is the opposite of the former. 
Let ${\cal T}$ be a triangulation of ${\bf A}$. 
We say an edge $e_{\bf ab} \in {\cal T}$ is {\it locally non-regular with respect to up-flips} if the lifted vertex ${\bf d}' \in \mathbb{R}^3$ lies vertically above the plane passing through the lifted vertices ${\bf a}', {\bf b}', {\bf c}' \in \mathbb{R}^3$, where $f_{\bf abc}, f_{\bf abd} \in {\cal T}$ are the two triangles sharing at $e_{\bf ab}$. In other words, the lifted edge $e_{\bf a'b'}$ is locally non-convex when viewed from above (the viewpoint is placed at $(0,0,+\infty)$). For examples, the blue edges in Figure~\ref{fig:redundant-vertices} are all locally non-regular with respect to up-flips.  
The two versions of ``locally non-regular edges" together make the clear definition of ``locally non-regular edges with respect to the corresponding directed flips".  

In the rest of this section, we will simply say a ``locally regular" or ``locally non-regular" edge when the directed flips (up-flips or down-flips) are clear from the context. 
  
The following two facts are due to the definition of non-extreme nodes. Recall that the definition of unflippability is given in Section~\ref{subsec:flips}. 

\begin{fact}\label{fact:non-extreme-nodes}
Let ${\cal T}$ be a triangulation of ${\bf A}$ and it corresponds to a non-extreme node of the poset. Then 
\begin{itemize}
\item[(1)] All flippable edges in ${\cal T}$ are locally regular. 
\item[(2)] All locally non-regular edges in ${\cal T}$ are unflippable.  
\end{itemize}
\end{fact}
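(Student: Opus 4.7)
The plan is to observe first that statement (2) is exactly the contrapositive of statement (1): ``flippable $\Rightarrow$ locally regular'' is logically equivalent to ``locally non-regular $\Rightarrow$ unflippable''. So it suffices to prove (1), and everything reduces to a short argument directly from the definition of a non-extreme node. By symmetry I only need to handle one of the two cases; I will do the upper non-extreme case (in the up-flip poset), and the lower non-extreme case follows by reversing all vertical orientations.

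First, I would assume for contradiction that ${\cal T}$ corresponds to an upper non-extreme node yet contains a flippable edge $e_{\bf ab}$ that is locally non-regular with respect to up-flips. By the definition of flippability in Section~\ref{subsec:flips}, the flip supported at $e_{\bf ab}$ is either a 2-2 flip (when $\{{\bf a,b,c,d}\}$ form a strictly convex quadrilateral) or a 3-1 flip (when one of the four vertices, say ${\bf a}$, is reflex and the third triangle $f_{\bf acd}$ is also present in ${\cal T}$). In both cases the support, in the lifted picture, is a single tetrahedron $t_{{\bf a'b'c'd'}}$, so I just need to verify that performing this flip is an up-flip.

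That verification is where the key geometric point enters. The hypothesis of local non-regularity with respect to up-flips says ${\bf d}'$ lies vertically above the plane through ${\bf a',b',c'}$; equivalently, by switching the apex of the tetrahedron, ${\bf a}'$ lies vertically below the plane through ${\bf b',c',d'}$. I would show from this that the triangles of ${\cal T}^{\omega}$ incident to $e_{\bf a'b'}$ (namely $f_{\bf a'b'c'}$ and $f_{\bf a'b'd'}$ in the 2-2 case, or $f_{\bf a'b'c'}, f_{\bf a'b'd'}, f_{\bf a'c'd'}$ in the 3-1 case) are precisely the lower faces of $t_{{\bf a'b'c'd'}}$, and the triangles appearing after the flip (respectively $f_{\bf a'c'd'}, f_{\bf b'c'd'}$, or simply $f_{\bf b'c'd'}$) are its upper faces. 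By definition of an up-flip in Section~\ref{sec:graph-poset}, the flip at $e_{\bf ab}$ is therefore an up-flip.

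This contradicts the defining property of an upper non-extreme node, which is that no directed flip in the up direction can be performed on ${\cal T}$. Hence no such edge exists, proving (1), and (2) follows immediately by taking the contrapositive. I do not expect a real obstacle here --- the only ``work'' is the little linear-algebra check that local non-regularity of an edge coincides with the adjacent lifted triangles being the lower (respectively, upper) faces of their supporting tetrahedron, which is essentially the same observation used earlier when defining up- and down-flips.
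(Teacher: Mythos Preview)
Your proposal is correct and follows exactly the approach the paper intends: the paper does not give a formal proof of this Fact at all, stating only that ``the following two facts are due to the definition of non-extreme nodes''. Your write-up is a careful unpacking of precisely that sentence --- a flippable locally non-regular edge would yield a directed flip in the forbidden direction, contradicting the definition of a non-extreme node --- together with the observation that (2) is the contrapositive of (1).
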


We first prove an important lemma. 
Let ${\bf a} \in {\cal T}$ be a edundant interior vertex, the following lemma shows two special properties of ${\bf a}$.

\begin{lemma}~\label{lem:cyclic_edge_lemma} 
Let ${\cal T}$ be a triangulation of ${\bf A}$ and it corresponds to a non-extreme node of the poset. 
Let ${\bf a} \in {\bf A}$ be a redundant interior vertex, then we show the following:
\begin{itemize}
\item[(i)] There exist at least two locally non-regular edges at ${\bf a}$.
\item[(ii)] There exists at least one locally non-regular edge, $e_{\bf ap}$, and the unflippability is due to ${\bf p}$, moreover, ${\bf p}$ is a redundant interior vertex of ${\bf A}$. 
\end{itemize}
\end{lemma}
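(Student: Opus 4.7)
I plan to treat the up-flip case (the down-flip case is symmetric), so $\mathcal{T}$ is an upper non-extreme node and $\mathbf{a}$ is a lower interior vertex of $\mathbf{A}$ present in $\mathcal{T}$. In particular $\mathbf{a}'$ lies on the lower envelope of $\mathbf{A}^\omega$ and $\mathbf{a}$ is not a vertex of the farthest-point regular triangulation $\mathcal{F}$. By Fact~\ref{fact:non-extreme-nodes}, every locally non-regular edge of $\mathcal{T}$ is unflippable; this is the main leverage throughout.

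For part (i), I would first show that $\deg_{\mathcal{T}}(\mathbf{a})\ge 4$. If the degree were $3$ with link $\mathbf{b}_1,\mathbf{b}_2,\mathbf{b}_3$, then the triangle $\mathbf{b}_1\mathbf{b}_2\mathbf{b}_3$ would contain $\mathbf{a}$ in projection and, because $\mathbf{a}'$ is on the lower envelope, generic position would give $\omega(\mathbf{a})<$ (plane through $\mathbf{b}_1',\mathbf{b}_2',\mathbf{b}_3'$ at $\mathbf{a}$); the 3-1 flip removing $\mathbf{a}$ would then be an up-flip, contradicting upper non-extremality. With $k:=\deg_{\mathcal{T}}(\mathbf{a})\ge 4$, I rule out $n(\mathbf{a})=0$ by comparing the star of $\mathbf{a}$ with the upper envelope $\widetilde U$ of the lifted star vertices $\{\mathbf{a}',\mathbf{b}_1',\ldots,\mathbf{b}_k'\}$: if every edge at $\mathbf{a}$ is locally regular then the star coincides with $\widetilde U$ near $\mathbf{a}$, forcing $\omega(\mathbf{a})\ge$ (plane height at $\mathbf{a}$) for every triangle of three link vertices whose projection contains $\mathbf{a}$, which contradicts the opposite strict inequality coming from lower-envelope membership and general position.

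The subcase $n(\mathbf{a})=1$ is the main obstacle. The plan is again to compare the star with $\widetilde U$: if only one edge $e_{\mathbf{a}\mathbf{b}_1}$ is locally non-regular, then $\widetilde U$ and the star agree away from this edge, so $\widetilde U$ is obtained from the star by a single elementary modification supported on the two triangles adjacent to $e_{\mathbf{a}\mathbf{b}_1}$. If the supporting quadrilateral is convex the modification is a 2-2 flip, making $e_{\mathbf{a}\mathbf{b}_1}$ a flippable locally non-regular edge, which contradicts Fact~\ref{fact:non-extreme-nodes}. If instead the supporting quadrilateral is non-convex with reflex vertex $\mathbf{a}$, the modification is a 3-1 flip removing $\mathbf{a}$, which would be an up-flip and again contradicts upper non-extremality. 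The remaining subcase, reflex at $\mathbf{b}_1$, is handled by shifting attention to $\mathbf{b}_1$ and invoking the same contradiction in its star, with a descent argument on the total number of interior vertices ruling out an infinite regress.

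For part (ii), pick any locally non-regular edge $e_{\mathbf{ap}}$ incident to $\mathbf{a}$; it is unflippable by Fact~\ref{fact:non-extreme-nodes}, so its supporting quadrilateral is non-convex and the reflex vertex is one of the edge's endpoints, $\mathbf{a}$ or $\mathbf{p}$. Using (i) together with the combinatorial observation that a single interior point $\mathbf{a}$ cannot lie inside the triangle formed by three consecutive link vertices for two distinct indices, I would select the edge so that the reflex vertex is $\mathbf{p}$; the unflippability of $e_{\mathbf{ap}}$ then forces $\deg_{\mathcal{T}}(\mathbf{p})>3$. Because $\mathbf{p}$ is reflex it lies strictly inside the triangle formed by $\mathbf{a}$ and the two remaining quadrilateral vertices $\mathbf{c},\mathbf{d}$, so $\mathbf{p}$ is an interior vertex of $\mathbf{A}$. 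A short determinant computation shows that $e_{\mathbf{ap}}$ being locally non-regular with respect to up-flips is equivalent to $\omega(\mathbf{p})<$ (plane through $\mathbf{a}',\mathbf{c}',\mathbf{d}'$ evaluated at $\mathbf{p}$), so $\mathbf{p}'$ is strictly below that plane and hence below the upper envelope of $\mathbf{A}^\omega$; consequently $\mathbf{p}$ does not appear in $\mathcal{F}$. Since $\mathcal{T}$ admits no up-flip, $\mathbf{p}$ is present in $\mathcal{T}$ but cannot be removed, so $\mathbf{p}$ is a redundant interior vertex of $\mathbf{A}$. The hardest part of the whole argument is Step~3 of (i): geometrically excluding the single non-regular edge configuration, since a naive parity count around the cyclic link only shows that the number of non-regular edges is even, not that it is at least two.
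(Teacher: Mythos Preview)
Your setup (up-flip case, $\deg_{\mathcal T}(\mathbf a)\ge4$, and the final verification that $\mathbf p$ lies below the upper envelope) is fine, but two steps contain genuine gaps.

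\textbf{Part (i), the subcase $n(\mathbf a)=1$.} The claim that $\widetilde U$ is obtained from the star by a \emph{single} elementary modification supported on the two triangles adjacent to $e_{\mathbf a\mathbf b_1}$ is not justified: local regularity of the remaining spokes only says that each lifted spoke is concave from above relative to its two neighbouring triangles; it does not pin down the global upper envelope of the star vertices, and in particular $\mathbf a'$ need not be a vertex of $\widetilde U$ at all. Once this claim fails, your case split collapses: in the ``reflex at $\mathbf a$'' branch you invoke a $3$--$1$ flip removing $\mathbf a$, but that flip exists only when $\deg_{\mathcal T}(\mathbf a)=3$, which you already excluded; and in the ``reflex at $\mathbf b_1$'' branch the proposed descent has no well-founded measure (passing from $\mathbf a$ to $\mathbf b_1$ does not decrease the number of interior vertices of $\mathcal T$). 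The paper avoids $\widetilde U$ entirely: it fixes one locally non-regular spoke and sweeps around the link, using that a vertical plane through $\mathbf a'$ meets the lifted fan in a piecewise-linear curve that cannot be concave throughout if only one spoke is non-regular; this forces a second locally non-regular spoke directly.

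\textbf{Part (ii).} Your combinatorial claim, that $\mathbf a$ can be the reflex vertex of the quadrilateral on the spoke $e_{\mathbf a\mathbf p_i}$ for at most one index $i$, is false. Writing $\theta_i=\angle\mathbf p_{i-1}\mathbf a\mathbf p_i$, reflexness at $\mathbf a$ for that spoke is $\theta_i+\theta_{i+1}>\pi$, and this can hold for two \emph{adjacent} indices simultaneously (take $\theta_1$ close to $\pi$ and $\theta_0,\theta_2$ small but positive). Thus ``at least two non-regular spokes'' together with ``at most one reflex at $\mathbf a$'' does not yield a spoke reflex at the other endpoint. The paper's route is: the angle inequalities give at most \emph{two} spokes reflex at $\mathbf a$, necessarily adjacent; if there were exactly two non-regular spokes and both were reflex at $\mathbf a$, rerunning the sweeping/vertical-plane argument on the remaining arc of the link would produce a third non-regular spoke, a contradiction. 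Hence some non-regular spoke $e_{\mathbf a\mathbf p}$ has its reflex vertex at $\mathbf p$, after which your height computation correctly shows $\mathbf p$ is a redundant interior vertex.
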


\begin{proof}
Without loss of generality, we assume that the poset is produced by up-flips, and ${\cal T}$ corresponds to an upper non-extreme node of the poset, see Figure~\ref{fig:redundant-vertices}.
Then ${\bf a} \in {\cal T}$ is a lower interior vertex of ${\bf A}$.  
 
To shorten the notation, in the rest of this proof, all ``locally regular" and ``locally non-regular" are with respect to up-flips. \\

We first prove (i).
Let the link vertices of ${\bf a}$ in clockwise direction be ${\bf p_0}, {\bf p_1}, \ldots, {\bf p_m}$. 
Figure~\ref{fig:vertex_remove_2d_case1} shows an example of a possible star of ${\bf a}$. 
Since ${\cal T}$ corresponds to a non-extreme node, then all flippable edge in star of ${\bf a}$ must be locally regular. Only unflippable edges in star of ${\bf a}$ might be locally non-regular. For examples, the edges ${\bf a}{\bf p}_i, {\bf a}{\bf p}_j, {\bf a}{\bf p}_k$ in Figure~\ref{fig:vertex_remove_2d_case1}. 
Without loss of generality, assume the edge $e_{{\bf a}{\bf p}_0}$ is a locally non-regular edge. It is unflippable and  
the unflippability is due to ${\bf a}$. 

\begin{figure}[ht]
  \centering
  \begin{minipage}{0.35\textwidth}
  \centering
  \vspace{2cm}
  \includegraphics[width=1.0\textwidth]{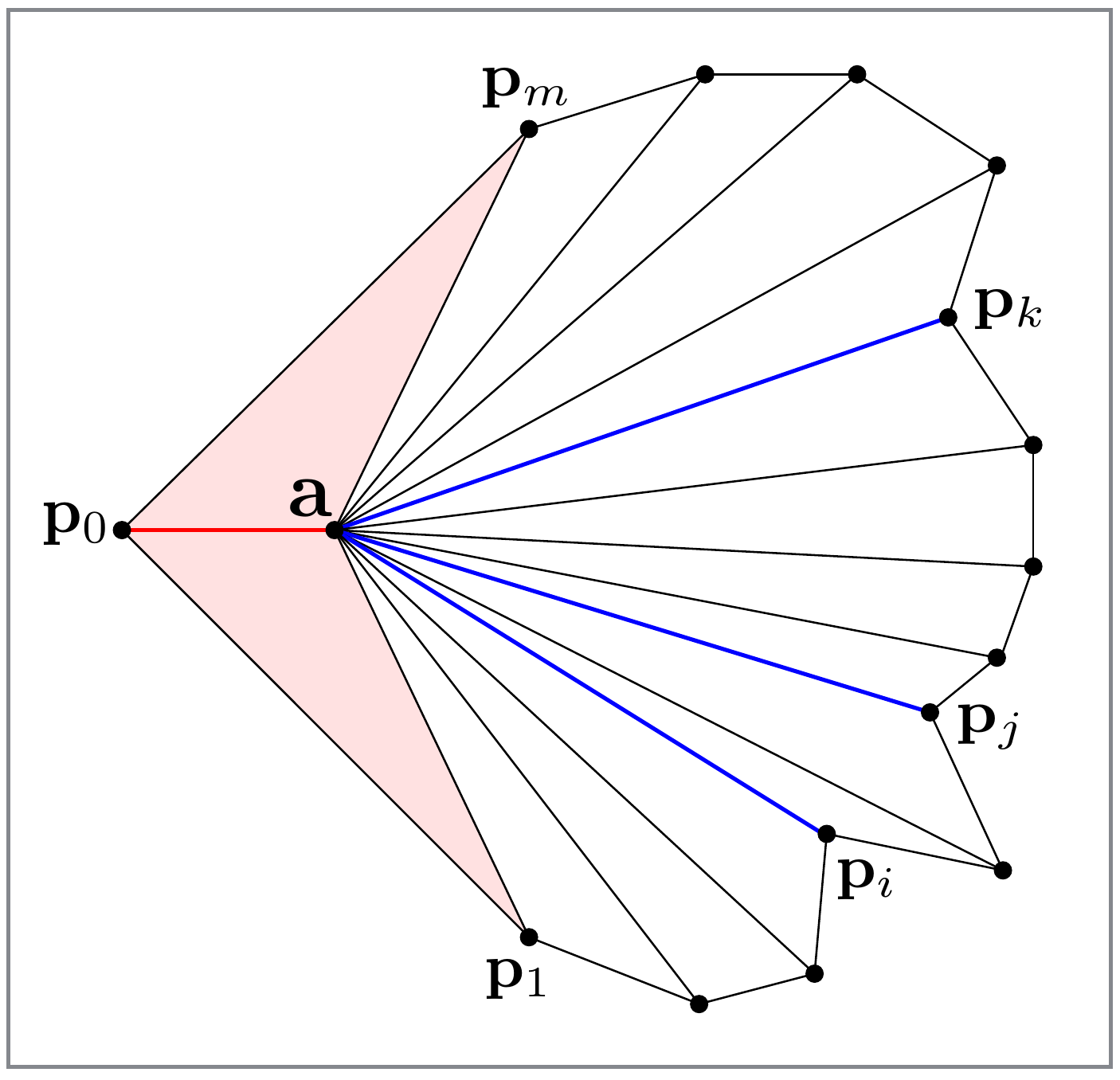}
  \end{minipage}
  \hspace{0.3cm}
  \begin{minipage}{0.6\textwidth}
  \centering
  \includegraphics[width=1.0\textwidth]{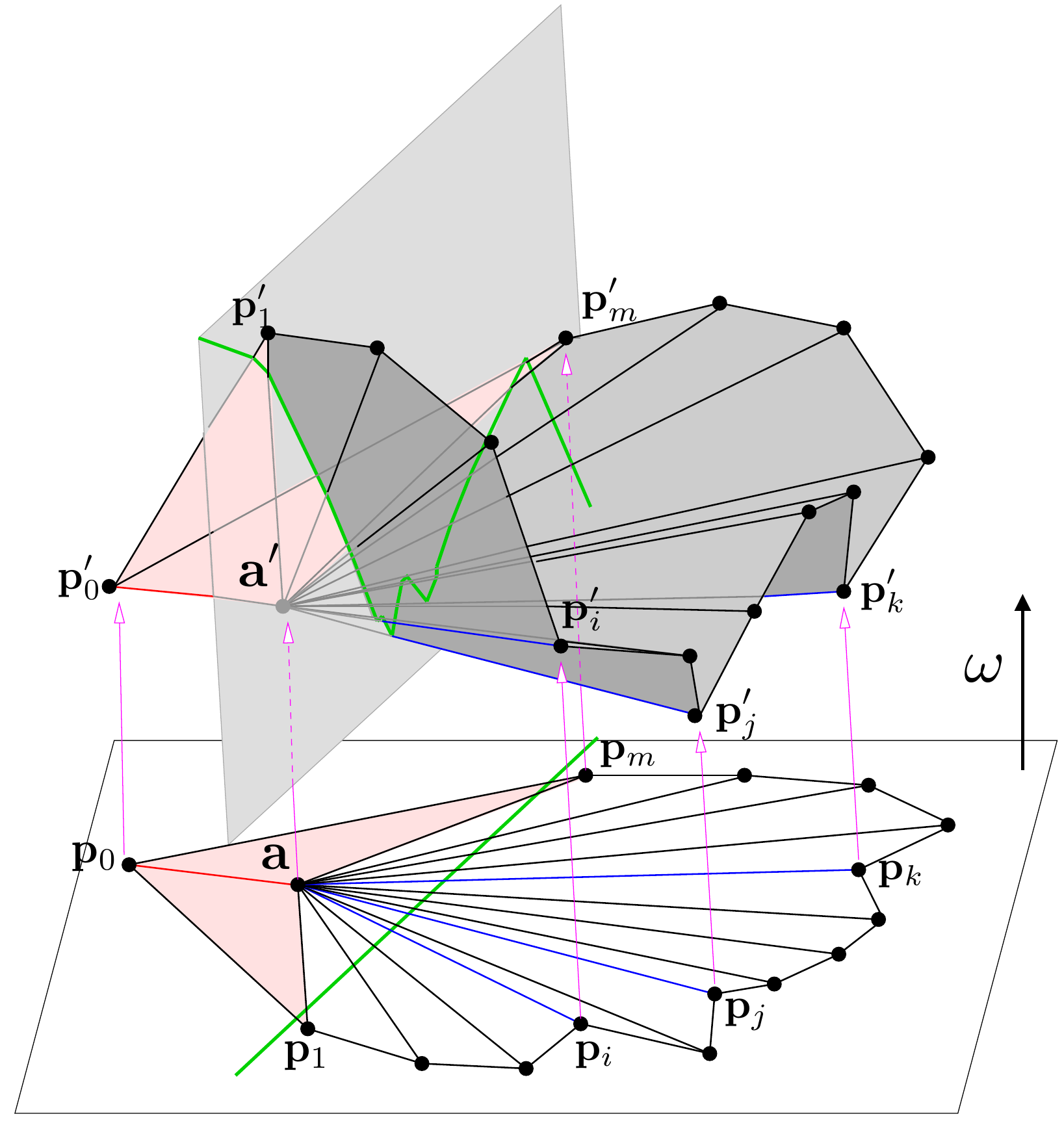}
  \end{minipage}
\caption{Left: The star of a redundant interior vertex ${\bf a} \in {\cal T}$. Since ${\cal T}$ is a triangulation corresponds to a non-extreme node, all flippable edges in this figure must be locally regular.  Unflipables edges (shown in red and blue) might be locally non-regular. 
Right: A vertical plane (shown in grey) cuts the lifted triangles in the star of a redundant interior vertex ${\bf a} \in {\cal T}$. The intersection curve (shown in green) is a piecewise linear line segment in $\mathbb{R}^3$. It must contain at least a vertex which is locally convex. In this example, there are three such vertices on this curve, which corresponds to the intersections between the plane and the lifted edges $e_{{\bf a}'{\bf p}_i'}$, $e_{{\bf a}'{\bf p}_j'}$, and $e_{{\bf a}'{\bf p}_k'}$.
}
\label{fig:vertex_remove_2d_case1} 
\end{figure}


We search another locally non-regular edge from the clockwise direction of $e_{{\bf a}{\bf p}_0}$.
Before we meet this edge, we will get a sequence of edges: $e_{{\bf a}{\bf p}_1}, e_{{\bf a}{\bf p}_2}, \ldots, e_{{\bf a}{\bf p}_m}$, such that all edges are locally regular (with respect to up-flips). 
This means, ${\bf p}_{i+1}'$ lies vertically below the plane passing through the lifted vertices ${\bf a}', {\bf p}_{i}', {\bf p}_{i-1}'$, $i = 1, \ldots, m-1$. 
If we do not find a locally non-regular edge, 
this means that ${\bf p}_{m}'$ lies vertically below the plane passing through the lifted points ${\bf a}', {\bf p}_{0}', {\bf p}_{1}'$, this contradicts to the assumption that the edge $e_{{\bf a}{\bf p}_0}$ is locally non-regular (with respect to up-flips). 
In other words, if we intersect the set of all lifted triangles in the star of ${\bf a}$ in $\mathbb{R}^3$ by a vertical  plane, the intersection curve (which is a piecewise linear line segment) must not be concave (ideal for up-flips). 
It must contain at least a vertex which is locally convex, and the edge in the star of ${\bf a}$ which corresponds to this locally convex intersection point must be locally non-regular (with respect to up-flips). 
Figure~\ref{fig:vertex_remove_2d_case1} Right shows an example.
Therefore, we must find at least one edge $e_{{\bf a}{\bf p}_i} \neq e_{{\bf a}{\bf p}_0}$, $0 < i \le m$, which is locally non-regular (with respect to up-flips) and it is unflippable. 
This proves (i).\\

Before we prove (ii). We show two basic facts. First, there can exist at least two locally non-regular edges whose unflippability are due to ${\bf a}$. The other edge can be either ${\bf a}{\bf p}_1$ or ${\bf a}{\bf p}_m$, i.e., one of the adjacent edges of ${\bf a}{\bf p}_0$. They are shown in Figure~\ref{fig:vertex_remove_2d_case2}. 
Second we show that there are at most two locally non-regular edges at ${\bf a}$. 
This can be proven by basic planar geometry. Without loss of generality, consider the case when ${\bf a}{\bf p}_1$ is locally non-regular, see Figure~\ref{fig:vertex_remove_2d_case2} Left.  Then we have the following inequalities about the internal angles at ${\bf a}$:  $\alpha + \beta > 180^o$, $\beta + \gamma > 180^o$, then $\gamma + \delta < 180^o$ and $\alpha + \delta < 180^o$, which implies there exists no more unflippable edge at ${\bf a}$ whose unflippability can be caused by ${\bf a}$.  

\begin{figure}[ht]
  \centering
  \begin{tabular}{c c}
  \includegraphics[width=0.4\textwidth]{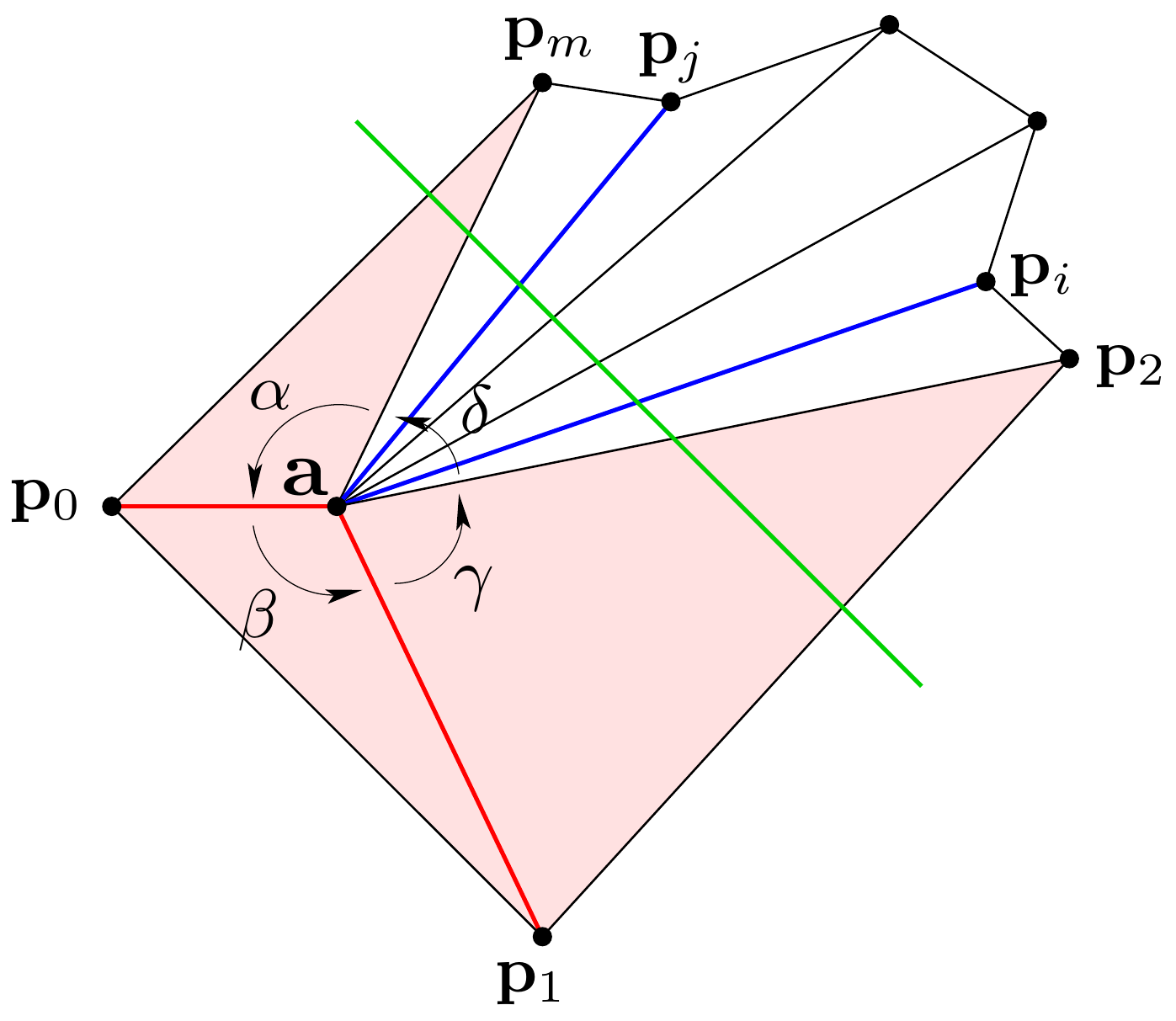} &
  \includegraphics[width=0.4\textwidth]{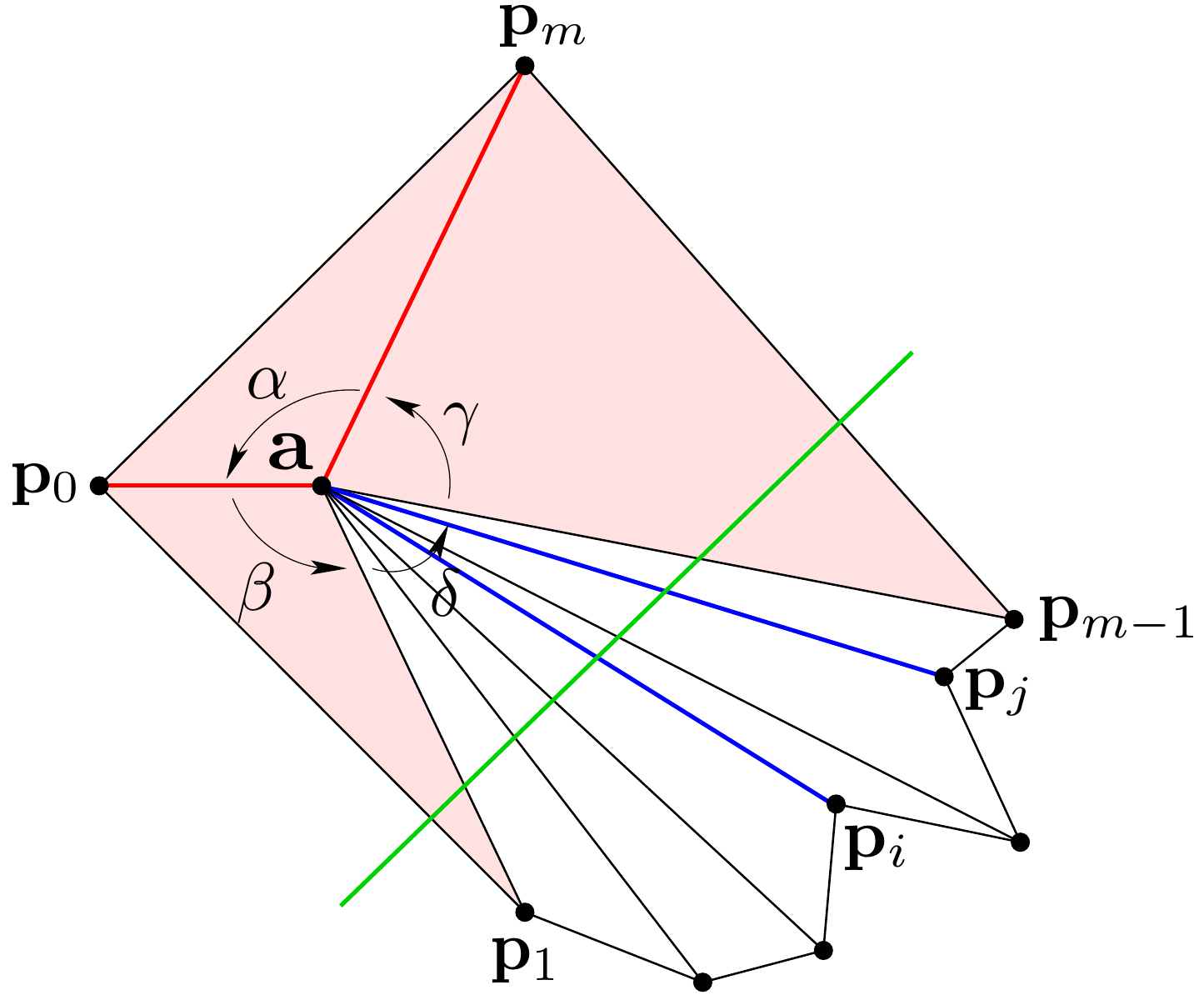}\\
  \end{tabular}
\caption{Two examples of the two special stars of a redundant interior vertex. There are two unflippable locally non-regular edges (red) at ${\bf a}$. The green lines are the projections of the intersection curves in $\mathbb{R}^3$ between a vertical plane and the set of lifted triangles at ${\bf a}$. }
\label{fig:vertex_remove_2d_case2} 
\end{figure}

Now we prove (ii). First of all, if there exists more than two locally non-regular edges at ${\bf a}$, then by the above facts, at least one edge whose unflippability is not due to ${\bf a}$. If there exists only two locally non-regular edges at ${\bf a}$, then we show that the unflippability of the other locally non-regular edge must not due to ${\bf a}$.  Consider the contrary, the unflippability of the other locally non-regular edge is also due to ${\bf a}$, without loss of the generality, let this edge be ${\bf a}{\bf p}_1$, see Figure~\ref{fig:vertex_remove_2d_case2} Left. Same as we prove (i), we must find another locally non-regular edge between edges ${\bf a}{\bf p}_1$ and ${\bf a}{\bf p}_m$. In other words, the cutting curve (shown in green in Figure~\ref{fig:vertex_remove_2d_case2}) between a vertical plane and all lifted triangles, $f_{{\bf a}'{\bf p}_1'{\bf p}_2'}, \ldots, f_{{\bf a}'{\bf p}_{m-1}'{\bf p}_m'}$ must contain at least one locally convex point. We arrive a contradiction. Therefore, in this case, the unflippability of the other edge must not due to ${\bf a}$. 

Finally, let $e_{{\bf a}{\bf p}_j}$ be a locally non-regular edge at ${\bf a}$ whose unflippability is due to ${\bf p}_j$. 
We show that ${\bf p}_j$ is a lower interior vertex of ${\bf A}$. The fact that $e_{{\bf a}{\bf p}_j}$ is locally non-regular with respect to up-flip shows that the lifted vertex ${\bf p}_j'$ lies above the plane passing through the lifted vertices ${\bf a}', {\bf p}_{j-1}', {\bf p}_{j+1}$, where $f_{{\bf a}{\bf p}_j{\bf p_{j-1}}}, f_{{\bf a}{\bf p}_j{\bf p_{j+1}}} \in {\cal T}$. This shows that ${\bf p}_j$ must be a lower interior vertex of ${\bf A}$.  
\end{proof}

The above lemma shows that in the triangulation ${\cal T}$ of a non-extreme node, around a redundant  interior vertex ${\bf a} \in {\cal T}$ there must exist another redundant interior vertex ${\bf p} \in {\cal T}$, such that the edge $e_{\bf ab} \in {\cal T}$ is locally non-regular with respect to directed flip and unflippable. The same happens to ${\bf p}$. This will naturally generate a sequence of unflippable locally non-regular edges with respect to directed flips. 
\[
{\cal E} := \{e_{{\bf a}_0{\bf p}_0}, e_{{\bf p}_0{\bf p}_1}, e_{{\bf p}_1{\bf p}_2}, \ldots, e_{{\bf p}_{m-1}{\bf p}_{m}}, e_{{\bf p}_{m}{\bf p}_{m+1}}\},
\]
such that every edge $e_{{\bf a}_i{\bf p}_i} \in {\cal T}$ is locally non-regular with respect to directed flip and is unflippanble. \\

{\bf Remark 1} Indeed, there might exist many sequences at a redundant interior vertex. This is due to the fact that there might exist more than one locally non-regular edges at this vertex whose unflippability are due to other vertices.\\  

Since ${\cal T}$ contains no directed flip at all, this means that at least one of such sequences form a cycle, i.e., $e_{{\bf p}_{m}{\bf p}_{m+1}} = e_{{\bf a}_{0}{\bf p}_{0}}$. 
For example, in Figure~\ref{fig:redundant-vertices} the three edges $e_{\bf ab}$, $e_{\bf bc}$, and $e_{\bf ca}$ form such a cycle. 

Another easy outcome of the above lemma is that a cycle needs at least $3$ edges. Hence there are at least $3$ redundant interior vertices of ${\bf A}$ in each triangulation.

We then see that any triangulation of a non-extreme external node of the poset must have the following property. 

\begin{theorem}~\label{thm:cycles_edges}
Any triangulation corresponds to a non-extreme node in this poset must contain at least $3$  redundant interior vertices of ${\bf A}$.  Moreover, at least $3$ of these vertices are connected by cycles of unflippable  locally non-regular edges with respect to the directed flips of this poset.
\end{theorem}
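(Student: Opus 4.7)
My plan is to build a chain of redundant interior vertices by iterating Lemma~\ref{lem:cyclic_edge_lemma}, then use finiteness of ${\bf A}$ together with the directional information carried by ``unflippability due to a vertex'' to close the chain into a cycle of length at least three.

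Since ${\cal T}$ corresponds to a non-extreme node, the discussion preceding the lemma guarantees that ${\cal T}$ contains at least one redundant interior vertex; call it ${\bf a}_0$. By Lemma~\ref{lem:cyclic_edge_lemma}(ii) there is an edge $e_{{\bf a}_0{\bf p}_0} \in {\cal T}$ that is locally non-regular with respect to the directed flips of the poset, unflippable, with unflippability due to ${\bf p}_0$, and such that ${\bf p}_0$ is itself a redundant interior vertex. Applying the lemma again at ${\bf p}_0$ produces ${\bf p}_1$, then at ${\bf p}_1$ produces ${\bf p}_2$, and so on, generating exactly the sequence ${\cal E}$ introduced just before the theorem. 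Every edge in ${\cal E}$ is unflippable and locally non-regular, and every vertex encountered is a redundant interior vertex of ${\bf A}$. Since ${\bf A}$ is finite, this sequence must eventually revisit an earlier vertex, producing a closed sub-cycle of such edges.

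The main technical step, and what I expect to be the hardest, is showing that this cycle has at least three edges. I plan to argue this from the asymmetry built into the selection rule of Lemma~\ref{lem:cyclic_edge_lemma}(ii): general position forces exactly one of the two endpoints of an unflippable locally non-regular edge to be the reflex vertex of its surrounding quadrilateral, so the edge $e_{{\bf p}_i{\bf p}_{i+1}}$ carries a unique ``blaming'' endpoint, namely ${\bf p}_{i+1}$ by construction. When the lemma is applied at ${\bf p}_{i+1}$, the newly selected edge must have its reflex vertex at the new endpoint; in particular it cannot re-select $e_{{\bf p}_i{\bf p}_{i+1}}$ in the reverse orientation, since the reflex of this edge is ${\bf p}_{i+1}$ and not ${\bf p}_i$. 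As a triangulation contains at most one edge between any two vertices, this forces ${\bf p}_{i+2} \neq {\bf p}_i$, which excludes length-two cycles and yields the required cycle of length at least three, carrying at least three distinct redundant interior vertices connected by unflippable locally non-regular edges with respect to the directed flips of this poset. Both assertions of the theorem then follow simultaneously.
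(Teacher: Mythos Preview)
Your proposal is correct and follows essentially the same route as the paper: iterate Lemma~\ref{lem:cyclic_edge_lemma}(ii) to build a chain of redundant interior vertices joined by unflippable locally non-regular edges, then invoke finiteness to close it into a cycle. In fact your argument is more complete than the paper's on the key point: the paper simply asserts that ``a cycle needs at least $3$ edges'' as an ``easy outcome of the above lemma,'' whereas you actually supply the reason, namely that the reflex vertex of each selected edge is uniquely the far endpoint, so the step from ${\bf p}_{i+1}$ can never reuse $e_{{\bf p}_i{\bf p}_{i+1}}$ in reverse and hence ${\bf p}_{i+2}\neq{\bf p}_i$.
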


Interestingly, 
Joe~\cite{Joe1989} proved a special case in 3d triangulations, which showed that the failure of Lawson's algorithm in 3d is due to the existence of a cycle of connected unflippable locally non-Delaunay faces~\cite[Lemma 4, 5]{Joe1989}. This results implies that there must exist cycles of unflippable locally non-Delaunay edges in such 3d triangulations. Moreover, these edges are redundant interior edges.\\

{\bf Remark 2}. From the above theorem, we see that cycles of edges which are unflippable locally non-regular with respect to directed flips are the cause of failure of Lawson's flip algorithm. In other words, if a triangulation is non-regular but it does not contain such cycles, Lawson's flip algorithm will terminate. 

{\bf Remark 3}. We only showed that there must exist such cycles in non-extreme triangulations. There are many interesting questions regarding the number of cycles, and the relation between these cycles. For example, do they link to each other?

\section{Triangulating 3d Non-convex Polyhedra}
\label{sec:triang3d}

Based on the known structural properties of the directed flip graph, this section describes an  algorithm to triangulate a special class of 3d decomposable polyhedron without using Steiner points. Such polyhedra may be non-convex.  
This algorithm is simple since it only performs flips between triangulations of a planar point set. 

\subsection{The input polyhedra}

Let ${\cal P}$ be a family of 3d polyhedra such that any polyhedron $P \in {\cal P}$ satisfies the following conditions:
\begin{itemize}
\item[(i)] $P$ is homeomorphic to a $3$-ball. 
\item[(ii)] The vertices of $P$ are in convex position. 
\item[(iii)] The vertex set of $P$ can be represented by a planar point set ${\bf A}$ with a height function $\omega : {\bf A} \to \mathbb{R}$, such that $\forall {\bf p} \in {\bf A}$, the lifted point ${\bf p}'$ is a vertex of $P$. 
\item[(iv)] The boundary of $\partial P = {\cal T}_u^{\omega} \cup {\cal T}_v^{\omega}$, where ${\cal T}_u$ and ${\cal T}_v$ are two triangulations of ${\bf A}$.
\item[(v)] $\omega$ is either convex or concave. 
\end{itemize}

All convex polyhedra are in this family. 
However, only non-convex polyhedra which satisfy the (strong conditions) (iv) and (v) are in this family. 
The condition (iv) enforces that the orthogonal projection of the boundary of $P$ are two planar triangulations of the same point set which share at the same convex hull of this point set. 
Figure~\ref{fig:triang_poly} Left shows an example of such polyhedron $P$. 

The condition (v) implies that either ${\cal T}_u$ or ${\cal T}_v$ has the property that it does not contain interior vertices of ${\bf A}$, see Figure~\ref{fig:triang_poly} Left. With this property, the existence of a path from this triangulation to the two extreme triangulations ${\bf A}$ is guaranteed (Theorem~\ref{thm:poset-convex-heights}). They can be computed by the Lawson's flip algorithm in time $O(n^2)$. 

\begin{figure}[ht]
  \centering
  \begin{minipage}{0.43\textwidth}
  \includegraphics[width=1.0\textwidth]{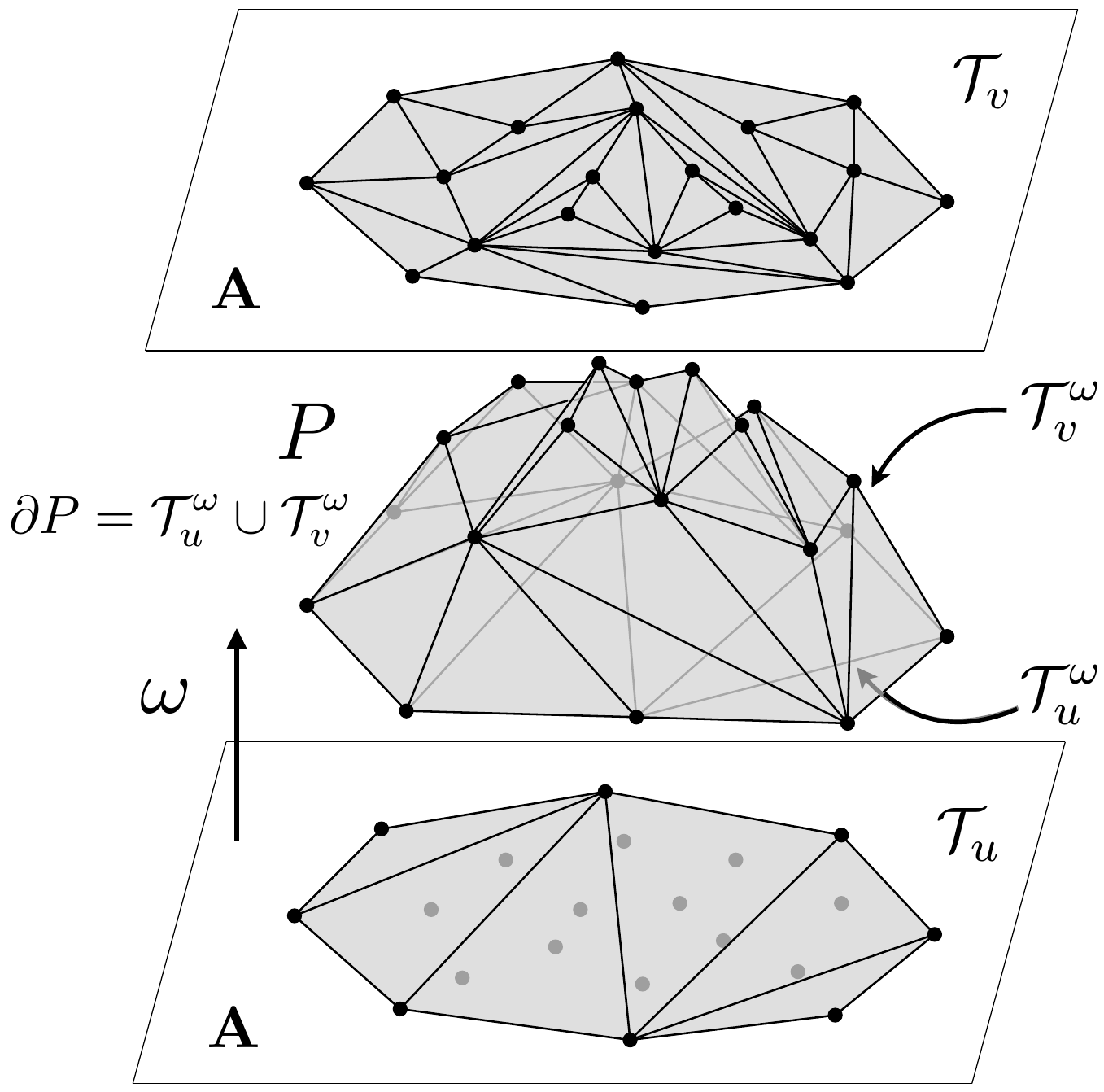}
  \end{minipage}
  \begin{minipage}{0.55\textwidth}
  \includegraphics[width=1.0\textwidth]{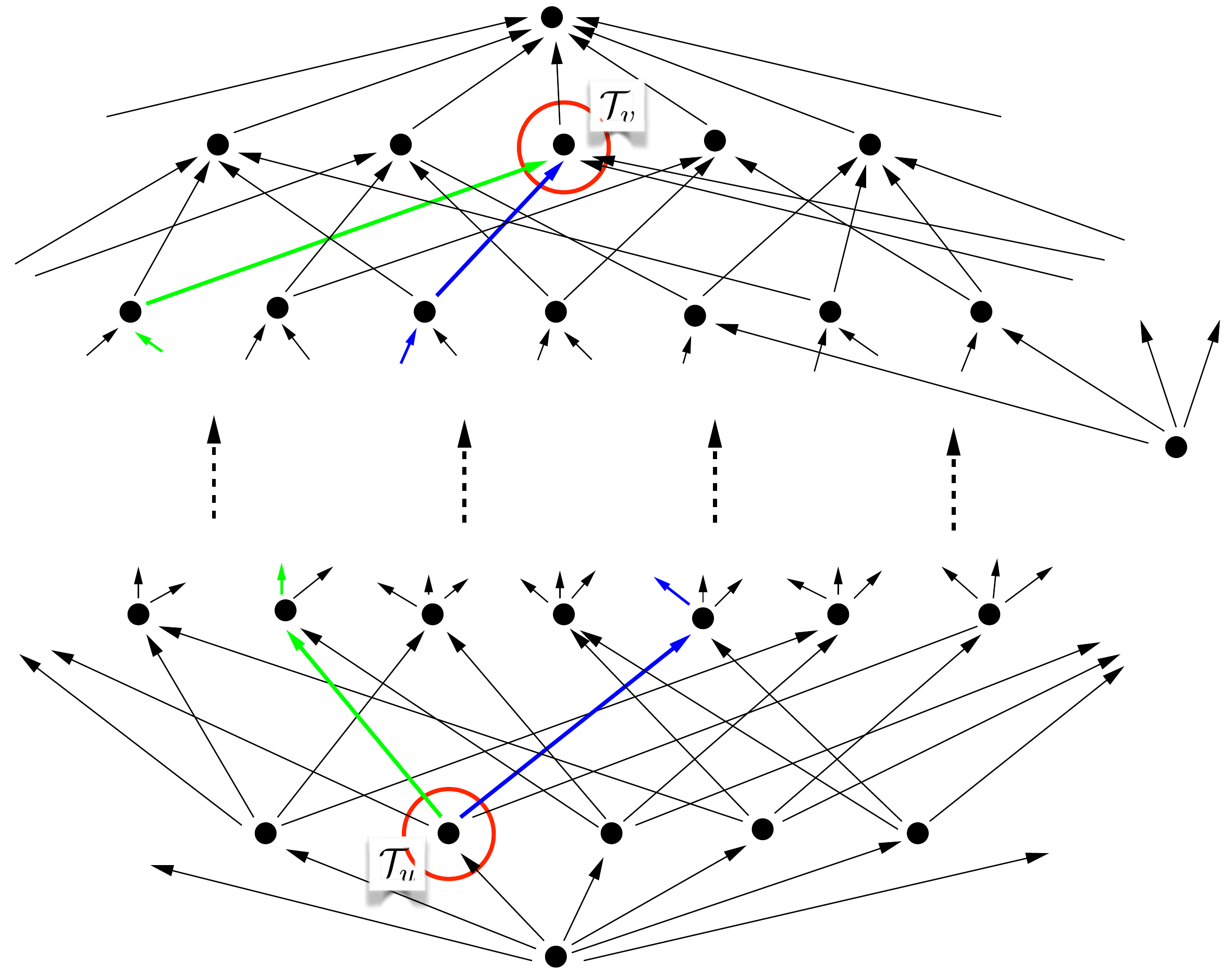}
  \end{minipage}
\caption{Left: An example of the input polyhedron.
Right: Examples of existing paths from the source to the target triangulations.
}
\label{fig:triang_poly} 
\end{figure}


Obviously, $P$ is convex when ${\cal T}_u$ and ${\cal T}_v$ are just the two extreme triangulations of $({\bf A}, \omega)$.  
While if they are not (or one of them is not), then $P$ must be non-convex. 

\subsection{The triangulation algorithm}

Let $P$ be a 3d polyhedron satisfies the above conditions. We describe an algorithm to triangulate $P$ without using Steiner points.  

The 3d triangulation problem is then transformed into finding a monotone sequence of directed flips  between the source and target triangulations.
It is then equivalent to find a path in the directed flip graph between these two nodes. 
Let us assume that the existence of a path between these two nodes in this graph is known. 
The question is how to find such a path?  

\paragraph{The ideas of the algorithm}
Let us assume the source triangulation is the one which contain no interior vertices of ${\bf A}$, and the target triangulation is an internal nodes of this graph.  
If we follow an arbitrary path starting from the source triangulation, it is guaranteed that we will reach the extreme triangulation (in its direction) of ${\bf A}$, but we may by-pass the internal node corresponds to the target triangulation.  

Here comes the main ideas of our algorithm:
Assume there exists a directed flip. Before doing this flip in a triangulation ${\cal T}$ of ${\bf A}$, we ensure that the corresponding tetrahedron of this flip does not intersect the target triangulation in its interior. 
Let us assume ${\cal T}_v$ is the target triangulation, let ${\bf a,b,c,d} \in {\cal T}$ are the support of this flip. We ensure the following condition: 
\[
\textrm{int}(t_{\bf abcd}) \cap |{\cal T}_v| = \emptyset,
\]
where $\textrm{int}(t_{\bf abcd})$ means the interior of the tetrahedron $t_{\bf abcd}$, and $|{\cal T}_v|$ is the underlying space of ${\cal T}_v$. 
We call such flip is {\it conforming to the target triangulation}.
If this flip does not satisfy the above condition, we skip this flip.  
We will show later, if we only do conforming flips, then we must end at the target triangulation.


\paragraph{The algorithm}
Without loss of generality, assume ${\cal T}_u$ is the source triangulation and ${\cal T}_v$ is the target. And assume that ${\cal T}_u$ lies vertically below ${\cal T}_v$. Hence the directed flips we need are up-flips. 
We initialise a working triangulation ${\cal T} := {\cal T}_u$, and initialise a working list $L := \emptyset$ for returning the list of tetrahedra of the triangulation.  


Let ${\cal E}$ be the set of all locally non-regular edges with respect to up-flips in ${\cal T}$. For each edge $e_{\bf ab} \in {\cal E}$. if it is already an edge of ${\cal T}_v$, we simply remove it from ${\cal E}$. Otherwise, if it is flippable, and this flip is conforming to ${\cal T}_v$, then flip $e_{\bf ab}$, and update ${\cal E}$ by all new locally non-regular edges with respect to up-flips in ${\cal T}$, add the tetrahedron $t_{\bf abcd}$ of this flip to $L$.  This phase ends until ${\cal E}$ is empty. 

If ${\cal E} = \emptyset$ and ${\cal T} \not= {\cal T}_v$, check if there exists a vertex ${\bf a}  \in {\cal T}_v$ and ${\bf a} \not\in {\cal T}$. If such a vertex exists, there must exist a triangle $f_{\bf bcd} \in {\cal T}$ which contains ${\bf a}, $ insert ${\bf a}$ into ${\cal T}$ by performing a 1-3 flip in ${\cal T}$. update ${\cal E}$ by all new locally non-regular edges with respect to up-flips in ${\cal T}$, add the tetrahedron $t_{\bf abcd}$ of this flip to $L$. 

\subsection{Termination and running time}

In this section, we prove the following theorem regarding the termination as well as running time of this algorithm. 

\begin{theorem}
If the target triangulation is a regular triangulation, 
then this algorithm terminates, and it runs in $O(n^3)$ time, where $n$ is the number of vertices of the input polyhedron $P$. 
\end{theorem}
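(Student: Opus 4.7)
The plan is to measure progress by a tetrahedron count: every conforming up-flip appends exactly one tetrahedron to the region already swept between ${\cal T}_u^\omega$ and the current ${\cal T}^\omega$, and since that region lives inside the fixed input polyhedron $P$, the total number of flips is bounded by the size of any tetrahedralisation of $P$. The source ${\cal T}_u$ is regular (because, under condition (v), it coincides with an extreme triangulation of $({\bf A}, \omega)$), and the target ${\cal T}_v$ is regular by hypothesis; hence Theorem~\ref{thm:secondarypolytope} supplies a monotone sequence of up-flips from ${\cal T}_u$ to ${\cal T}_v$, and Corollary~\ref{cor:triang} realises this sequence as a tetrahedralisation ${\cal T}^*$ of $P$ without Steiner points. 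The standard bound $|{\cal T}^*| = O(n^2)$ then caps the number of flips.

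The main obstacle is proving the existence of a conforming flip whenever ${\cal T} \neq {\cal T}_v$. Let $P'$ denote the residual polyhedron bounded below by the current ${\cal T}^\omega$ and above by ${\cal T}_v^\omega$. I would maintain, as an invariant, that $P'$ admits a tetrahedralisation without additional vertices: initially ${\cal T}^*$ witnesses this, and the invariant is preserved by each conforming flip, because the appended tetrahedron lies inside the previous $P'$ and can be excised from any current witness (with a local re-triangulation of the thin sliver if the flipped tetrahedron was not itself part of the witness). Given a witness ${\cal T}'$ of $P'$ with $P' \neq \emptyset$, I would apply the Acyclic Theorem to the regular upper boundary ${\cal T}_v^\omega$ and reuse the peeling analysis from the proof of Theorem~\ref{thm:triang-to-monotone}: some tetrahedron of ${\cal T}'$ must be incident to ${\cal T}^\omega$ in a removable configuration, and removing it realises a legal up-flip in ${\cal T}$, automatically conforming because the tetrahedron is contained in $P'$.

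Termination and correctness then follow immediately: each conforming flip strictly decreases the nonnegative integer $|{\cal T}'|$, which is initially $O(n^2)$, so the algorithm halts in at most $O(n^2)$ flips, and by the existence lemma it can only halt with $P' = \emptyset$, that is with ${\cal T} = {\cal T}_v$. For the running time, each iteration retrieves a locally non-regular flippable edge from ${\cal E}$, performs the conformance check, and either executes the flip (updating ${\cal E}$ with $O(1)$ new candidate edges) or discards the edge. The conformance check is the dominant per-iteration cost: verifying $\textrm{int}(t_{\bf abcd}) \cap |{\cal T}_v| = \emptyset$ reduces to scanning the $O(n)$ triangles of ${\cal T}_v$ for interior intersection with $t_{\bf abcd}$, which is $O(n)$ per iteration. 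Multiplying by the $O(n^2)$ flip bound yields the claimed $O(n^3)$ total running time.
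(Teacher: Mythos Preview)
Your running-time accounting ($O(n^2)$ flips times $O(n)$ per conformance check) matches the paper's. The real gap is in your existence argument for a conforming flip. The algorithm is free to pick \emph{any} conforming flip from ${\cal E}$; nothing ties its choice to your witness tetrahedralisation ${\cal T}'$. When the chosen tetrahedron $t$ is not a cell of ${\cal T}'$, excising $t$ from $P'$ does not leave a region triangulated by ${\cal T}'\setminus\{t\}$, and your parenthetical ``local re-triangulation of the thin sliver'' is precisely the kind of step that can fail in three dimensions (Sch\"onhardt-type obstructions). Without a maintained acyclic witness you also cannot rerun the peeling analysis of Theorem~\ref{thm:triang-to-monotone} at the next step; and invoking the Acyclic Theorem on the 2-dimensional ${\cal T}_v^\omega$ does not help, since what the peeling needs is acyclicity of the 3-dimensional witness ${\cal T}'$ from $(0,0,-\infty)$. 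A smaller issue: condition~(v) does not force the source ${\cal T}_u$ to coincide with an extreme triangulation of $({\bf A},\omega)$; it only guarantees that ${\cal T}_u$ uses no interior vertices of ${\bf A}$, and the boundary of $P$ may be any such triangulation.

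The paper's proof bypasses the witness idea entirely. It argues directly from the absence of interior vertices in the source: only 2-2 and 1-3 flips are ever required (no 3-1 vertex deletions), so the Lawson-type argument behind Theorem~\ref{thm:poset-convex-heights} furnishes an up-flip whenever the extreme node has not been reached, while the conformance check merely prevents the monotone path from overshooting the target node in the poset. This is a considerably lighter mechanism than maintaining and updating a 3d tetrahedralisation of the residual polyhedron.
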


\begin{proof}
If the target triangulation is the unique extreme node, then every path starting from the source triangulation will lead to it. There is even no need to check conformity of each flip. It is just the Lawson's algorithm with edge flips and vertex insertions.  In this case, the algorithm runs in $O(n^2)$ which is also the upper bound of the number of tetrahedra of a set of $n$ vertices. 
 
From now on we assume the target triangulation is an internal node of this poset. 
The condition imposed on the target triangulation means that it must not be a non-extreme node of the poset, which means that there exists a path from the source triangulation toward the target triangulation, this also means a 3d triangulation between the source and target triangulations exist.

Without loss of generality, assume the source triangulation ${\cal T}_u$ lies vertically below ${\cal T}_v$, and the directed flips we need are  up-flips.  By performing only conforming up-flips, we ensure that every intermediate triangulation ${\cal T}$ must lie vertically below ${\cal T}_v$, i.e., the lifted triangulation ${\cal T}^{\omega}$ never crosses the characteristic section of ${\cal T}_v^{\omega}$. This shows that we will never by-pass the target node in the poset. 

We still need to show that the target node must be reached. It is based on the following facts: there exists a 3d triangulation between the source and target triangulation. Our input ensures that the source triangulation contains no interior vertex. This implies we only need 2-2 flips (edge flips) and 1-3 flips (vertex insertions). This means that if the target triangulation is not reached, there always exists at least one up-flip in current triangulation. Therefore, the target triangulation must be reached. 

Now we show the upper bound of the running time of this algorithm. 
The number of total tetrahedra of this triangulation is $O(n^2)$.
At creation of each tetrahedron, we need to test the conformity against the target triangulation, which has $O(n)$ triangles.
In the worst case, we will need to test $t$ locally non-regular edges in order to find a valid conforming flip. 
If $t = O(n)$, then we get a naive upper bound $O(n^4)$. 
We can avoid this addition $O(n)$ tests at each creation of a tetrahedron by tagging all the tested locally non-regular edges which correspond to non-conforming flips. So we only need to test such flip once and will automatically avoid to test it again.   
Therefore the worst-case running time of this algorithm is $O(n^3)$. 
\end{proof}

{\bf Remark 1}. We remark that the the condition on source triangulations is sufficient but not necessary. 
This means that it is possible to start with a source triangulation which does contain interior vertices of ${\bf A}$. 
This will end at a node which is non-regular. As we showed in Theorem~\ref{thm:poset-general} (6) that non-regular triangulation might be also an interior node. This means, it still can find directed flips but this must be a 1-3 flip (vertex insertion), and the new vertex must be in the target triangulation. 

{\bf Remark 2}. The target triangulation may be non-regular triangulation as long as it does not correspond to a non-extreme node of this poset.  

\section{Discussions}
\label{sec:discussion}

In this paper, we studied monotone sequence of directed flips between triangulations of a finite point set $({\bf A}, \omega)$ in $\mathbb{R}^2 \times \mathbb{R}$, were $\omega$ is a height function which lifts each vertex of ${\bf A}$ into $\mathbb{R}^3$. It is shown that such sequences correspond to 3d triangulations of the lifted point set ${\bf A}^{\omega}$. We then studied the structural properties of the directed flip graph of the set of all triangulations of this point set. 
These properties clearly explained the general behaviour of the Lawson's flip algorithm. 
Based on this study, we proposed a simple algorithm to triangulation a special family of 3d non-convex polyhedra without using additional vertices. 

There are many interesting questions and problems remain to be studied. We have mentioned some of them within the paper. 

\begin{itemize}
\item[(1)] Question~\ref{question:acyclic_viewpoint} in Section~\ref{sec:triang-to-monotone} asks: given a 3d triangulation ${\cal T}$ of a point set ${\bf V}$ in $\mathbb{R}^3$. Assume ${\bf V}$ contains no interior vertices. Is it always possible to find a viewpoint in $\mathbb{R}^3$ such that ${\cal T}$ is acyclic with respect to this viewpoint? 

\item[(2)] In Section~\ref{sec:poset-structure} we have characterised all external (extreme and non-extreme) nodes of this poset. It remains to understand what are properties of internal nodes?  For examples, we would like to know where the set of internal nodes form a graded lattice? 

\item[(3)] Continue from (2), we would like to know more properties of triangulations corresponding to non-extreme nodes. Although it is proven there must exist cycles of unflippable locally non-regular edges, it remains unclear how different cycles relate to each other. For example, can they form a non-trivial link?

\item[(4)] There is still a piece of information from the motivation example in Section~\ref{sec:graph-poset-example} not be studied. In this example, we could easily ``see" that two triangulations are connected by a 3d flip (2-3 or 3-2 flip), see Figure~\ref{fig:prism_poset_2}. How to characterise this 3d flips in the poset?

\item[(5)] We have seen from many previous examples, e.g., Figure~\ref{fig:prism_poset_1}, that the existence of redundant interior vertices may cause the failure of the Lawson's algorithm.  
It is then necessary to study the following problem: {\it How to remove an interior vertex by a monotone sequence of directed flips?}  We comment that if we know how to do this, we could generalise our algorithm to triangulate a more general class of non-convex polyhedron whose vertices are on the convex hull.
\end{itemize}

The last question of this paper is how to generalise the result of this paper to higher dimensional point sets. The case of point sets in 3d is already a challenging question to consider.  


\begin{thebibliography}{10}

\bibitem{Bagemihl48-decomp-polyhedra}
F.~Bagemihl.
\newblock On indecomposable polyhedra.
\newblock {\em The American Mathematical Monthly}, 55(7):411--413, 1948.

\bibitem{Bern93-tetra}
M.~Bern.
\newblock Compatible tetrahedralizations.
\newblock In {\em Proc. 9th Annual ACM Symposium on Computational Geometry},
  pages 281--288, 1993.

\bibitem{Bezdek2016}
Andras Bezdek and Braxton Carrigan.
\newblock On nontriangulable polyhedra.
\newblock {\em Beitr{\"a}ge zur Algebra und Geometrie (Contributions to Algebra
  and Geometry)}, 57(1):51--66, 2016.

\bibitem{Bowyer81}
A.~Bowyer.
\newblock Computing {Dirichlet} tessellations.
\newblock {\em Comp. Journal}, 24(2):162--166, 1981.

\bibitem{Chazelle1984}
Bernard Chazelle.
\newblock Convex partition of polyhedra: a lower bound and worst-case optimal
  algorithm.
\newblock {\em SIAM Journal on Computing}, 13(3):488--507, 1984.

\bibitem{TriangBook}
Jes{\'u}s~A. {De Loera}, J{\"o}rg. Rambau, and Francisco Santos.
\newblock {\em Triangulations, Structures for Algorithms and Applications},
  volume~25 of {\em Algorithms and Computation in Mathematics}.
\newblock Springer Verlag Berlin Heidelburg, 2010.

\bibitem{Delaunay1934}
B.~N. Delaunay.
\newblock Sur la sph\`{e}re vide.
\newblock {\em Izvestia Akademii Nauk SSSR, Otdelenie Matematicheskikh i
  Estestvennykh Nauk}, 7:793--800, 1934.

\bibitem{Edelman1996}
Paul Edelman and Victor Reiner.
\newblock The higher {Stasheff-Tamari} posets.
\newblock {\em Mathematika}, 43:127--154, 1996.

\bibitem{Edelsbrunner90acy}
H.~Edelsbrunner.
\newblock An acyclicity theorem for cell complex in $d$ dimension.
\newblock {\em Combinatorica}, 10(3):251--260, 1990.

\bibitem{EdelsbrunnerShah96}
Herbert Edelsbrunner and N.~R. Shah.
\newblock Incremental topological flipping works for regular triangulations.
\newblock {\em Algorithmica}, 15:223--241, 1996.

\bibitem{EPPSTEIN1992143}
David Eppstein.
\newblock The farthest point delaunay triangulation minimizes angles.
\newblock {\em Computational Geometry}, 1(3):143 -- 148, 1992.

\bibitem{EPPSTEIN2009790}
David Eppstein, Marc van Kreveld, Elena Mumford, and Bettina Speckmann.
\newblock Edges and switches, tunnels and bridges.
\newblock {\em Computational Geometry}, 42(8):790 -- 802, 2009.
\newblock Special Issue on the 23rd European Workshop on Computational
  Geometry.

\bibitem{Gelfand1994Discriminants}
Israel~M. Gelfand, Mikhail~M. Kapranov, and Andrei~V. Zelevinsky.
\newblock {\em Discriminants, Resultants, and Multidimensional Determinants}.
\newblock Birkh\"{a}user Boston, Boston, MA, 1994.

\bibitem{GeorgeBorouchakiSaltel03}
{Paul-Louis} George, Houman Borouchaki, and Eric Saltel.
\newblock \'ultimate\' robustness in meshing an arbitrary polyhedron.
\newblock {\em International Journal for Numerical Methods in Engineering},
  58:1061--1089, 2003.

\bibitem{GeorgeHechtSaltel91}
{Paul-Louis} George, Fr{\'e}d{\'e}ric Hecht, and Eric Saltel.
\newblock Automatic mesh generator with specified boundary.
\newblock {\em Computer Methods in Applied Mechanics and Engineering},
  92:269--288, 1991.

\bibitem{Goodman1988}
J.~Goodman and J.~Pach.
\newblock Cell decomposition of polytopes by bending.
\newblock {\em Israel J. Mathematics}, 64:129--138, 1988.

\bibitem{JaumeRote2016}
Rafel Jaume and G\"unter Rote.
\newblock Recursively regular subdivisions and applications.
\newblock {\em Journal of Computational Geometry}, 7(1), 2016.

\bibitem{Jessen1967}
B.~Jessen.
\newblock Orthogonal icosahedra.
\newblock {\em Nordisk Mat. Tidskr}, 15:90--96, 1967.

\bibitem{Joe1989}
B.~Joe.
\newblock Three-dimensional triangulations from local transformations.
\newblock {\em SIAM Journal on Scientific and Statistical Computing},
  10(4):718--741, 1989.

\bibitem{Joe91-flip}
B.~Joe.
\newblock Construction of three-dimensional {Delaunay} triangulations using
  local transformations.
\newblock {\em Computer Aided Geometric Design}, 8:123--142, 1991.

\bibitem{Lawson1972}
C.~L. Lawson.
\newblock Transforming triangulations.
\newblock {\em Discrete Mathematics}, 3(4):365--372, 1972.

\bibitem{Lawson1977}
C.~L. Lawson.
\newblock Software for $c^1$ surface interpolation.
\newblock {\em Mathematical Software III, Academic Press}, pages 164--191,
  1977.

\bibitem{LAWSON1986231}
Charles~L. Lawson.
\newblock Properties of n-dimensional triangulations.
\newblock {\em Computer Aided Geometric Design}, 3(4):231 -- 246, 1986.

\bibitem{Radon1921}
J.~Radon.
\newblock Mengen konvexer {K}{\"o}rper, die einen gemeinschaftlichen {Punkt}
  enthalten.
\newblock {\em Math. Ann.}, 83:113--115, 1921.

\bibitem{Rambau96-thesis}
J.~Rambau.
\newblock {\em Polyhedral Subdivisions and Projections of Polytopes}.
\newblock PhD thesis, Fachbereich 3 Mathematik der Technischen Universit{\"a}t
  berlin, Berlin, Germany, Oktober 1996.

\bibitem{Rambau05}
J{\"o}rg Rambau.
\newblock On a generalization of {S}ch{\"o}nhardt's polyhedron.
\newblock In J.~E. Goodman, J.~Pach, and E.~Welzl, editors, {\em Combinatorial
  and Computational Geometry}, volume~52, pages 501--516. MSRI publications,
  2005.

\bibitem{RuppertSeidel92}
Jim Ruppert and Raimund Seidel.
\newblock On the difficulty of triangulating three-dimensional nonconvex
  polyhedra.
\newblock {\em Discrete \& Computational Geometry}, 7:227--253, 1992.

\bibitem{Santos2000a}
F.~Santos.
\newblock A point set whose space of triangulations is disconnected.
\newblock {\em Amer. Math. Soc.}, 13:611--637, 2000.

\bibitem{Santos2000}
F.~Santos.
\newblock Triangulations with very few geometric bistellar neighbors.
\newblock {\em Discrete \& Computational Geometry}, 23(1):15--33, 2000.

\bibitem{Schonhardt1928}
E.~Sch{\"o}nhardt.
\newblock {\"U}ber die zerlegung von dreieckspolyedern in tetraeder.
\newblock {\em Mathematische Annalen}, 98:309--312, 1928.

\bibitem{Si2015-TetGen}
Hang Si.
\newblock {TetGen}, a {Delaunay}-based quality tetrahedral mesh generator.
\newblock {\em ACM Trans. Math. Softw.}, 41(2):11:1--11:36, February 2015.

\bibitem{SI201892}
Hang Si and Nadja Goerigk.
\newblock Generalised {Bagemihl} polyhedra and a tight bound on the number of
  interior {Steiner} points.
\newblock {\em Computer-Aided Design}, 103:92 -- 102, 2018.

\bibitem{Sleator1988}
Daniel~D. Sleator, William~P. Thurston, and Robert~Endre Tarjan.
\newblock Rotation distance,triangulations,and hyperbolic geometry.
\newblock {\em J. Amer. Math. Soc.}, 1:647--682, 1988.

\bibitem{Toussaint93}
G.~T. Toussaint, C.~Verbrugge, C.~Wang, and B.~Zhu.
\newblock Tetrahedralization of simple and non-simple polyhedra.
\newblock In {\em Proc. 5th Canadian Conference on Computational Geometry},
  pages 24--29, 1993.

\bibitem{Watson81}
D.~F. Watson.
\newblock Computing the $n$-dimensional {Delaunay} tessellations with
  application to {V}oronoi polytopes.
\newblock {\em Comput. Journal}, 24(2):167--172, 1981.

\bibitem{WeatherillHassan94}
Nigel~P. Weatherill and Oubay Hassan.
\newblock Efficient three-dimensional {Delaunay} triangulation with automatic
  point creation and imposed boundary constraints.
\newblock {\em International Journal for Numerical Methods in Engineering},
  37:2005--2039, 1994.

\bibitem{Ziegler1995-book}
G\"unter~M. Ziegler.
\newblock {\em Lectures on Polytopes}, volume 152 of {\em Graduate Texts in
  Mathematics}.
\newblock Springer-Verlag, New York, second edition edition, 1997.

\end{thebibliography}

\end{document}